\documentclass[imslayout,preprint]{imsart}
\usepackage{pdfpages}
\RequirePackage[OT1]{fontenc}


\usepackage{graphicx}
\usepackage{array}
\usepackage{enumitem}

\usepackage[authoryear]{natbib}
\bibliographystyle{apalike}

\usepackage{amsmath}
\usepackage{amsthm}
\usepackage{amssymb}
\usepackage{amsfonts}
\usepackage{bbm}
\usepackage[margin = 1in]{geometry}
\newtheorem{theorem}{Theorem}[section]
\newtheorem{lemma}[theorem]{Lemma}
\newtheorem{assumption}[theorem]{Assumption}

\newtheorem{proposition}[theorem]{Proposition}
\newtheorem{remark}[theorem]{Remark}
\usepackage{listings}
\usepackage{hyperref}
\usepackage[hypcap]{caption}
\usepackage{booktabs}
\lstset{frame = single, language = R}
\usepackage{fancyvrb}
\usepackage{comment}
\usepackage{multirow}

\usepackage{algorithm}
\usepackage[noend]{algpseudocode}

\usepackage{color}

\makeatletter
\def\BState{\State\hskip-\ALG@thistlm}
\makeatother

\numberwithin{equation}{section}

\newcommand{\argmin}{\operatornamewithlimits{argmin}}
\newcommand{\argmax}{\operatornamewithlimits{argmax}}

\newcommand{\E}{\mathbb{E}}

\newcommand{\cF}{\mathcal{F}}

\newcommand{\floor}[1]{\lfloor #1 \rfloor}
\newcommand{\1}[1]{\boldsymbol{1}\{#1\}}

\newtheorem{corollary}[theorem]{Corollary}


\begin{document}
	
	\begin{frontmatter}
	\title{\bf Dating the Break in High-dimensional Data}
	\runtitle{Dating the Break in High-dimensional Data}
	
	\begin{aug}
		\author{\fnms{Runmin} \snm{Wang}\thanksref{t1,m1}\ead[label=e1]{rwang52@illinois.edu}},
		\and
		\author{\fnms{Xiaofeng } \snm{Shao}\thanksref{t1,m1}
			\ead[label=e3]{xshao@illinois.edu}}
		
		\thankstext{t1}{Runmin Wang is Ph.D. Candidate and Xiaofeng Shao is  Professor, Department of  Statistics, University of Illinois at Urbana-Champaign,  Champaign, IL 61820 (e-mail: rwang52, xshao@illinois.edu).}
		\runauthor{Wang and Shao}
		
		\affiliation{University of Illinois at Urbana-Champaign\thanksmark{m1}}
		
		\address{R.Wang\\Department of Statistics\\University of Illinois at Urbana-Champaign\\
			725 South Wright Street \\
			Champaign, IL 61820\\
			USA\\
			\printead{e1}}
		
		\address{X.Shao\\Department of  Statistics\\ University of Illinois at Urbana-Champaign \\ 725 South Wright Street \\
			Champaign, IL 61820\\
			USA\\
			\printead{e3}}
		
	\end{aug}
	
	\begin{abstract}
		This paper is concerned with estimation and inference for the location of a change point in the mean of independent high-dimensional data. Our change point location estimator maximizes  a new U-statistic based objective function, and its  convergence rate and asymptotic distribution after suitable  centering and normalization are obtained under mild assumptions. Our estimator turns out to have better efficiency as compared to the least squares based counterpart in the literature. Based on the asymptotic theory, we construct a confidence interval by plugging in consistent estimates of several quantities in the normalization. We also provide a bootstrap-based confidence interval and state its asymptotic validity under suitable conditions. Through simulation studies, we demonstrate favorable  
	finite sample performance of the new change point location estimator as compared to its least squares based counterpart,  and our bootstrap-based confidence intervals, as compared to several existing competitors.   The asymptotic theory based on high-dimensional U-statistic is substantially different from those developed in the literature and is of independent interest.
	\end{abstract}
	\begin{keyword}[class=MSC]
		\kwd[Primary ]{62G05}
		\kwd{62G20}
		\kwd[; secondary ]{60G15 }
		\kwd{60G42}
		\kwd{60G51}
	\end{keyword}
	
	\begin{keyword}
		\kwd{Change Point Detection}
		\kwd{High Dimension}
        \kwd{Structural Break}
		\kwd{U-statistic}		
	\end{keyword}	
	
\end{frontmatter}

	\section{Introduction}
	
	Advances in science and technology have led to an explosion of data of high dimension. Examples of high-dimensional data include
	fMRI imaging data in neuroscience, genomic data in biological science, panel time series data from economics and finance, and  
	spatio-temporal data from climate science, among others. Often statisticians assume some kind of homogeneity assumptions in analyzing such data, such as 
	i.i.d. (independent and identically distributed) or stationarity with weak serial dependence for a sequence of high-dimensional data.  
	The validity of methodology they develop can be sensitive with respect to such assumptions. In this paper, we shall focus on  a particular type of non-homogeneity, a change point
	in the mean of an otherwise i.i.d. sequence of high-dimensional data. That is, we assume that our observed data follows the one-change point model, 
	\[CP1:~X_t=\mu_t+Z_t,t=1,\cdots,n,\]
	where $Z_t$ are i.i.d. $p$-dimensional data with zero mean, and $\mu_t=\mu_1 {\bf 1}(1\le t\le k_0)+(\mu_1+\delta){\bf 1}(k_0+1\le t\le n)$. In this model, both parameters $\delta$ and $k_0=n\tau_0,$ where $\tau_0\in (0,1)$, are unknown. 
	Our main goal is to provide a new estimator of break point $\tau_0$ and
	 a confidence interval, which works in the high-dimensional setting that allows $p>>n$ and also dependence within $p$ components. 
	  To achieve this, we develop a new $U$-statistic based objective function and propose to use its maximizer as our location estimator.  Under some mild assumptions, this new estimator is shown to be  consistent with suitable convergence rate and
	  asymptotic distribution upon centering and normalization. It is also shown to be superior to the least squares based counterpart in \cite{bai2010common} and \cite{bhattacharjee2019change},  for some specific models of interest. Furthermore, we provide a bootstrap-based confidence interval that can be adaptive to the magnitude of change, theoretically justified and works well in finite sample.

	The literature on change point testing and estimation for high-dimensional data has been growing at a fast pace lately, as stimulated by the practical needs 
	of analyzing high-dimensional data with change points. From the viewpoint of mathematical statistics, the high dimensionality
	can bring substantial methodological and theoretical challenges, as many classical estimation and testing procedures developed for low-dimensional data
	 may not work in the high-dimensional setting. 
	This also brings interesting opportunities to the mathematical statistics community, as there is a great need to develop new estimation and testing methods that can accommodate 
	high dimensionality and dependence within components and over time. 
	As there is a vast literature on retrospective change point testing and estimation in the low-dimensional setting, we  refer the readers to several excellent review papers and books, see e.g., 
	\cite{csorgo1997limit}, \cite{perron2006}, and \cite{auehorvath2013}	for many references.

	Below we shall	 provide a brief review of the more recent literature on high-dimensional change point inference. 
	For testing a change point in the mean of high-dimensional data,  \cite{horvath2012change} considered an $l_2$ aggregation of one-dimensional
	 CUSUM statistics, which targets at dense alternative. For a sequence of Gaussian vectors,  \cite{enikeeva2013high} proposed a new test to detect the presence of a change point in mean and established
	 the detection boundary in different regimes that allow the dimension to approach the infinity. Their test was formed on the basis of a combination of a linear statistic and a scan statistic, which can capture both sparse and dense alternatives, but their 
	 critical values were obtained under strong Gaussian and independent components assumptions. \cite{cho2015multiple} proposed a sparse binary segmentation algorithm for detecting multiple change points in the second order structure of a high-dimensional time series, by aggregating the low-dimensional CUSUM statistics  that pass a certain threshold. 
	    \cite{cho2016change} developed a double CUSUM statistic that can be viewed as an interesting extension of the ideas in \cite{enikeeva2013high} and \cite{cho2015multiple}, and her test was shown to be consistent in estimating the change points with binary segmentation allowing for dependence over time and across cross-sections.  \cite{jirak2015uniform} considered an $l_{\infty}$ aggregation of CUSUM statistics, which aims for sparse alternatives. The ``INSPECT'' method proposed in \cite{wang2018high} was based on  sparse projection method for a single change point and it has been  extended to multiple change point estimation by combining with the wild binary segmentation [\cite{fryzlewicz2014wild}]. \cite{wang2019a} proposed a U-statistic based approach to test for change points in independent high-dimensional data via self-normalization, as an extension of \cite{shaozhang2010}, and also provided 
	    an segmentation algorithm by using the wild binary segmentation. 
	    \cite{chen2019inference} proposed an $l_{\infty}$ based statistic to test for change points in trends for high-dimensional time series with a consistent estimator of the long-run covariance matrix. Also see \cite{yu2017finite} for another $l_{\infty}$ based test for one change point alternative in mean.

	For the estimation and confidence interval construction of the break point $\tau_0$ (or $k_0$), there have been many papers written on this topic when the dimension $p$ is low and fixed; see early work by \cite{hinkley70}, \cite{hinkley72}, \cite{picard85}, \cite{bhattacharya87}, \cite{yao87} and \cite{bai1994least}, among others. There have been  extensions to the change point problems in linear regression and multivariate time series; see \cite{bai97review}, \cite{baiperron98} and \cite{bai98restud}, but all these works focused on the low-dimensional case.  Relatively little is done in the high-dimensional setting.  
	 \cite{bai2010common} considered a least square estimator for the change point location in the panel data with independent cross section units and weak dependence over time, and obtained the asymptotic distribution of break date estimator.   
	Recently \cite{bhattacharjee2019change} extended the least squares method in \cite{bai2010common} to high-dimensional time series and their setting allowed for both cross-sectional and serial dependence.  \cite{bai2010common}, \cite{bhattacharjee2019change} and \cite{chen2019inference} provided an asymptotic distribution for the suitable centered and normalized break date estimator and constructed a confidence interval for the break date. Since both \cite{bai2010common} and \cite{bhattacharjee2019change} tackled the one-change point model, 	we shall provide a detailed comparison with these two papers in theory and numerical simulations later. It is worth noting that the temporal independence assumption is often assumed in change point analysis for genomic data; see 
	\cite{zhang2010detecting}, \cite{jeng10} and \cite{zhang12} among others.  

A word on notations.  For a vector $a\in R^{p}$, $\|a\|$ is the Euclidean norm. For a matrix $A\in R^{p\times p}$, denote $\|A\|$ and $\|A\|_F$ as the spectral norm and the Frobenious norm respectively, and denote $tr(A)$ as the trace of $A$. Define $\floor{\cdot}$ as the floor function. We use $cum(X_1,X_2,...,X_n)$ to represent the joint cumulants of random variables $X_1,...,X_n$. Throughout the paper, all asymptotic results are stated under the regime $n \wedge p \rightarrow \infty$.
	
The rest of the article is organized as follows. Section~\ref{sec:estimation} presents a new method to estimate $\tau_0$ based on the U-statistic, and contains all the asymptotic results. Section~\ref{sec:CI} introduces several methods of constructing confidence intervals for $\tau_0$, including a bootstrap-based method and its theoretical justification. Section~\ref{sec:simulation} gathers all simulation results. Section~\ref{sec:conclusions} concludes and mentions a few future research topics. All technical proofs are relegated to Sections~\ref{sec:appendix} and \ref{sec:appB}. 
	
	\section{Estimation Method and Asymptotic Theory }
	\label{sec:estimation}
	
Under the one-change point model CP1:  $X_t = \mu_1 {\bf 1}(1\le t\le k_0)+(\mu_1+\delta){\bf 1}(k_0+1\le t\le n) + Z_t,$ for $t = 1,2,...,n$, where $\{Z_t\}_{t = 1}^n$ are $p$-dimensional i.i.d. random vectors with mean $0$ and variance matrix $\Sigma$.  Here we follow the convention in the change point literature and denote the true unknown location of the change point as $k_0 = \tau_0 n$, $\tau_0 \in (0,1)$, i.e., $k_0$ is a fixed positive fraction of the sample size $n$. Without loss of generality, assume $\mu_1 =  0$ as our estimation method is invariant to the value of $\mu_1$. For notational convenience, we shall not use the double-array notation $X_{t,n}$, $Z_{t,n}$, etc. 
	
	Consider the statistic $G_n(k)$ such that for all $k = 2,3,...,n-2$,
	$$G_n(k) = \frac{1}{k(n-k)}\sum_{i_1,i_2 = 1, i_1\neq i_2}^{k}\sum_{j_1,j_2 = k+1, j_1 \neq j_2}^{n}(X_{i_1} - X_{j_1})^T(X_{i_2} - X_{j_2}).$$
	We define 
	$$\hat{k}_U = \argmax_{k = 2,...,n-2}G_n(k)$$ 
	as the estimate of the change point location (or break date) $k_0$. This is a natural estimator since $\E[G_n(k)]$ achieves its maximum when $k$ is the true change point location, as shown in the lemma below. We define $\hat{\tau}_U = \hat{k}_U/n$ as the estimate of the relative position $\tau_0$. Let $a_n = n^2\|\delta\|^4/\|\Sigma\|_F^2$, which is the rate of convergence for $\hat{\tau}_U$ to be shown later.  
	
\begin{lemma} \label{lem:max}
	$\E[G_n(k)] = (k-1)(n-k_0)(n-k_0-1)\|\delta\|^2/(n-k)$ when $k \leq k_0$ and $\E[G_n(k)] = (n-k-1)k_0(k_0-1)\|\delta\|^2/k$ when $k \geq k_0$. Hence $\E[G_n(k)]$ achieves its maximum at $k = k_0$.
\end{lemma}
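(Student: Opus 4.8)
The plan is to evaluate $\E[G_n(k)]$ in closed form and then study the resulting piecewise-rational function of $k$. Since this is purely a first-moment computation, no probabilistic subtleties beyond independence and mean-zero-ness are involved; the work is mostly bookkeeping.

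First I would write $X_t = \mu_t + Z_t$ with $\mu_t = 0$ for $t\le k_0$ and $\mu_t = \delta$ for $t>k_0$ (using $\mu_1=0$ without loss of generality), expand $(X_{i_1}-X_{j_1})^{T}(X_{i_2}-X_{j_2})$ into a sum of a purely deterministic term, two terms linear in some $Z_t$, and the term $(Z_{i_1}-Z_{j_1})^{T}(Z_{i_2}-Z_{j_2})$. The key observation is that in every summand the four indices are mutually distinct: $i_1\neq i_2$ and $j_1\neq j_2$ are imposed, while $i_1,i_2\le k < k+1\le j_1,j_2$. Hence the linear-in-$Z$ terms have zero expectation, and $\E[(Z_{i_1}-Z_{j_1})^{T}(Z_{i_2}-Z_{j_2})]=0$ by independence across distinct indices. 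This reduces the claim to the deterministic identity
\[
\E[G_n(k)] = \frac{1}{k(n-k)} \sum_{\substack{i_1,i_2=1\\ i_1\neq i_2}}^{k} \sum_{\substack{j_1,j_2=k+1\\ j_1 \neq j_2}}^{n} (\mu_{i_1} - \mu_{j_1})^{T}(\mu_{i_2} - \mu_{j_2}).
\]

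Next I would evaluate this double sum in the two cases. When $k\le k_0$, all $i$-indices lie in the first segment, so $\mu_{i_1}=\mu_{i_2}=0$ and the summand is $\mu_{j_1}^{T}\mu_{j_2}$, which equals $\|\delta\|^2$ when both $j_1,j_2>k_0$ and $0$ otherwise; counting the $k(k-1)$ admissible $(i_1,i_2)$ pairs and the $(n-k_0)(n-k_0-1)$ admissible $(j_1,j_2)$ pairs gives $\E[G_n(k)]=(k-1)(n-k_0)(n-k_0-1)\|\delta\|^2/(n-k)$. The case $k\ge k_0$ is symmetric: now all $j$-indices lie in the second segment, the summand is $(\mu_{i_1}-\delta)^{T}(\mu_{i_2}-\delta)$, which is $\|\delta\|^2$ iff $i_1,i_2\le k_0$ and $0$ otherwise, and counting pairs yields $\E[G_n(k)]=(n-k-1)k_0(k_0-1)\|\delta\|^2/k$.

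Finally, to locate the maximum, note that on $\{k\le k_0\}$ the expression equals $(n-k_0)(n-k_0-1)\|\delta\|^2$ times $(k-1)/(n-k)$, which is strictly increasing in $k$, while on $\{k\ge k_0\}$ it equals $k_0(k_0-1)\|\delta\|^2$ times $(n-k-1)/k$, which is strictly decreasing in $k$; the two branches agree at $k=k_0$, so $\E[G_n(k)]$ is unimodal with its unique maximum over $k\in\{2,\dots,n-2\}$ at $k=k_0$. The only step requiring any care is the verification that all four summation indices are distinct so that the cross terms vanish; everything after that is elementary counting and monotonicity of the two one-line functions $(k-1)/(n-k)$ and $(n-k-1)/k$.
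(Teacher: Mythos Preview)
Your proposal is correct and follows essentially the same approach as the paper: both reduce $\E[G_n(k)]$ to the deterministic sum over $(\mu_{i_1}-\mu_{j_1})^T(\mu_{i_2}-\mu_{j_2})$, count the index pairs that contribute $\|\delta\|^2$, and then check monotonicity of $(k-1)/(n-k)$ and $(n-k-1)/k$. You are more explicit than the paper about why the $Z$-terms vanish (the four indices are mutually distinct) and about the branch-matching at $k=k_0$, but the argument is the same.
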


In \cite{bai2010common}, the location of change point is estimated by minimizing 
	a least squares criterion, that is 
	\begin{equation}\label{eq:LS}
	\hat{k}_{LS} = \argmin_{k = 1,2,...,n-1}SSR(k) := \sum_{i = 1}^{k}\|X_i - \bar{X}_{1:k}\|^2 + \sum_{i = k+1}^{n}\|X_i - \bar{X}_{(k+1):n}\|^2,
	\end{equation}
	where $\bar{X}_{a:b}$ is the sample average based on the subsample $\{X_{a},X_{a+1},...,X_{b}\}$, for any $1 \leq a \leq b \leq n$. The least squares method is natural in the low-dimensional setting; see \cite{bai1994least}, \cite{bai1997et}, \cite{baiperron98}, among others in either one break or multiple break model with or without covariates. 
	
	In the high-dimensional setting, the use of U-statistic was first initiated by \cite{chenqin10} in the two sample testing for the equality of means. Recently, \cite{wang2019a} extended the U-statistic based approach to high-dimensional change point testing, coupled with the idea of self-normalization [\cite{shao10},  \cite{shaozhang2010}, \cite{shao15}]. In this paper, we further advance the U-statistic based approach to the estimation of change point location in the one change-point model, and our proof techniques are substantially different from that in \cite{bai2010common} and \cite{bhattacharjee2019change} due to the use of a different objective function, and also very different from that in \cite{wang2019a} due to the focus on the asymptotic behavior of the estimator. We shall compare the performance of our location estimator with the least squares based counterpart in theory and simulations later. 
	
	To investigate the asymptotic properties of our location estimator $\hat{\tau}_U$, we introduce the following assumptions. 
	
\begin{assumption}\label{ass}
	\begin{enumerate}[label=(\alph*)]
		\item $tr(\Sigma^4) = o(\|\Sigma\|_F^4)$.\label{ass:tr}
		
		\item There exists a positive constant $C$ independent of $n$ such that 
		$$\sum_{j_1,...j_h = 1}^{p}cum^2(Z_{0,j_1},...Z_{0,j_h}) \leq C\|\Sigma\|_F^h,$$
		for $h = 2,3,4,5,6$. \label{ass:cum}
		
		\item $\sqrt{\delta^T\Sigma\delta} = o\left(\sqrt{n}\|\delta\|^2/{\sqrt{a_n}}\right) = o({\|\Sigma\|_F}/{\sqrt{n}})$. \label{ass:delta}

			\item $a_n = o(n)$ and $\log(n) = o(a_n)$. \label{ass:rate}

	\end{enumerate}
\end{assumption}

\begin{remark}[Discussion of Assumptions]
    \rm Assumption \ref{ass}\ref{ass:tr} and \ref{ass}\ref{ass:cum} are identical to the assumptions used in \cite{wang2019a}, where a U-statistic based approach was developed for change point testing in the high-dimensional setting. These assumptions essentially impose weak dependence among the $p$ components,  which can be verified for AR type correlation or models with banded componentwise dependence, but are violated when the variance-covariance matrix is compound symmetric; see detailed discussion in Remark 3.2 of \cite{wang2019a}.

    Assumption \ref{ass}\ref{ass:delta} guarantees the change point signal dominates the noise in the U-statistic and it is equivalent to $\frac{n\delta^T\Sigma\delta}{\|\Sigma\|_F^2}=o(1)$. In the special case $\Sigma=I_p$, it is reduced to 
    $\|\delta\|^2=o(p/n)$. 
    Assumption \ref{ass}\ref{ass:rate}  defines the particular regime we are considering. Note that when $a_n=o(1)$, the U-statistic based test developed in \cite{wang2019a} delivers trivial power asymptotically; see Theorem 3.5 therein. 
    This suggests that the restriction $\log(n)=o(a_n)$ is almost necessary in order for $\tau_0$ to be consistently estimated; see Theorem~\ref{thm:rate} below. The condition $a_n=o(n)$ represents a  particular regime under which a meaningful asymptotic distribution for the suitably centered and normalized location estimator can be obtained. 
    In the case that $\Sigma=I_p$,  Assumption \ref{ass}\ref{ass:rate} is equivalent to $\{\log(n)p\}^{1/2}/n=o(\|\delta\|^2)$ and $\|\delta\|^2=o((p/n)^{1/2})$
    We offer more discussions about what happens in other regimes later; see Theorem \ref{thm:3regime}.
    
\end{remark}


\begin{theorem}[Rate of Convergence]\label{thm:rate}
	Suppose that Assumptions \ref{ass} hold. Then for any $\epsilon > 0$, there exists $M >0$, such that for  large enough $n$, 
	$$P(\hat{k}_U \in \Omega_n(M)) < \epsilon,$$
	where $\Omega_n(M) = \{k: |k-k_0| > nM/a_n\}$.
\end{theorem}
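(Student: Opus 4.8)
The plan is to run the standard ``curvature of the objective function'' argument for rates of $M$-estimators. Since $\hat k_U$ maximizes $G_n$, on the event $\{\hat k_U\in\Omega_n(M)\}$ there is a $k$ with $|k-k_0|>nM/a_n$ at which $G_n(k)\ge G_n(k_0)$; so it suffices to show that for every $\epsilon>0$ one can choose $M$ with
\[
P\Big(\sup_{k:\,|k-k_0|>nM/a_n}D_n(k)\ge 0\Big)<\epsilon\quad\text{for all large }n,\qquad D_n(k):=G_n(k)-G_n(k_0).
\]
Write $D_n(k)=\E[D_n(k)]+V_n(k)$ with $V_n(k):=D_n(k)-\E[D_n(k)]$. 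From Lemma~\ref{lem:max} one gets $\E[D_n(k)]$ in closed form — for $k\le k_0$ it equals $-(n-1)(n-k_0-1)\|\delta\|^2(k_0-k)/(n-k)$, and symmetrically for $k\ge k_0$ — so since $\tau_0\in(0,1)$ is fixed there is $c=c(\tau_0)>0$ with $\E[D_n(k)]\le-c\,n\|\delta\|^2|k-k_0|$ for every $k$. Hence the displayed probability is at most $P\big(\sup_k\{V_n(k)-c\,n\|\delta\|^2|k-k_0|\}\ge0\big)$, and the whole problem reduces to controlling the fluctuation process $V_n$.

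Next I would decompose $V_n$ explicitly. Centering $\mu_1=0$, write $X_t=Z_t+\delta\,\mathbf{1}(t>k_0)$ and expand $(X_{i_1}-X_{j_1})^T(X_{i_2}-X_{j_2})$ into a pure-$\delta$ part (which one checks equals $\E[G_n(k)]$), a part $W_n(k)$ that is a degenerate double $U$-statistic in the $Z$'s, and a part $R_n(k)$ bilinear in $\delta$ and the $Z$'s; thus $V_n(k)=\{W_n(k)-W_n(k_0)\}+\{R_n(k)-R_n(k_0)\}$, each summand mean zero. The technical core is the second-moment bound $\mathrm{Var}\big(V_n(k)-V_n(k')\big)\le C\,n\|\Sigma\|_F^2|k-k'|$ (in the relevant range $|k-k'|\lesssim n$), together with the fact that the increments of $W_n$ in $k$ are, up to terms carrying $O(1/n)$ coefficients, of the form $\big(\sum_{i<j}Z_i\big)^TZ_j$ and $Z_j^T\big(\sum_{i>j}Z_i\big)$, i.e.\ martingale differences for the forward, resp.\ backward, filtration of $(Z_t)$; this lets one pass from pointwise to uniform control via a Doob-type maximal inequality (and, where higher moments are needed, via a chaining argument). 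Computing these second (and higher) moments is where Assumptions~\ref{ass}\ref{ass:tr} and \ref{ass}\ref{ass:cum} are used — the variance of the $U$-statistic increment is dominated by a diagonal term of order $n\,\mathrm{tr}(\Sigma^2)|k-k'|=n\|\Sigma\|_F^2|k-k'|$, all other index-coincidence patterns being forced to lower order by the cumulant bounds — and Assumption~\ref{ass}\ref{ass:delta} is exactly what makes the bilinear part negligible, since $\mathrm{Var}\big(R_n(k)-R_n(k_0)\big)\lesssim n^2\,\delta^T\Sigma\delta\,|k-k_0|=o\big(n\|\Sigma\|_F^2|k-k_0|\big)$.

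Finally I would do a dyadic peeling over distances from $k_0$. Partition $\Omega_n(M)$ into shells $A_\ell=\{k:2^{\ell-1}nM/a_n<|k-k_0|\le 2^{\ell}nM/a_n\}$, $\ell=1,\dots,L_n$ with $L_n\asymp\log_2(a_n/M)$; on $A_\ell$ the drift is at least $c\,2^{\ell-1}Mn^2\|\delta\|^2/a_n=c\,2^{\ell-1}M\|\Sigma\|_F^2/\|\delta\|^2$, using $a_n=n^2\|\delta\|^4/\|\Sigma\|_F^2$, so
\[
P\Big(\sup_{k\in\Omega_n(M)}D_n(k)\ge0\Big)\le\sum_{\ell=1}^{L_n}P\Big(\sup_{k\in A_\ell}|V_n(k)|\ge c\,2^{\ell-1}M\|\Sigma\|_F^2/\|\delta\|^2\Big).
\]
Bounding $\sup_{k\in A_\ell}|V_n(k)|$ by $|V_n|$ at the endpoint of $A_\ell$ nearest $k_0$ plus the maximum of a partial sum of the (near-)martingale increments, and applying the maximal inequality, yields $\E[\sup_{k\in A_\ell}|V_n(k)|^2]\le C\,n\|\Sigma\|_F^2\cdot 2^{\ell}nM/a_n$; Chebyshev then bounds the $\ell$-th summand by $C'/(2^{\ell}M)$ (using $n^2\|\delta\|^4/\|\Sigma\|_F^2=a_n$ once more), and summing over $\ell$ gives $P(\hat k_U\in\Omega_n(M))\le C'/M$, which is $<\epsilon$ once $M$ is large. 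Here Assumption~\ref{ass}\ref{ass:rate} is what makes the scheme run: $\log n=o(a_n)$ forces $a_n\to\infty$, so $\Omega_n(M)$ and the shells are well defined and nonempty for all large $n$ (and the condition is felt again in the higher-moment versions of the increment bounds), while $a_n=o(n)$ keeps the rate $n/a_n$ genuinely slower than $n$.

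I expect the main obstacle to be the increment bound of the second step: proving $\mathrm{Var}\big(V_n(k)-V_n(k')\big)\lesssim n\|\Sigma\|_F^2|k-k'|$ together with the near-martingale decomposition. Because $G_n(k)$ is a double $U$-statistic, its variance is a sum over many index-coincidence patterns, and one must verify that every pattern other than the dominant diagonal one is of strictly smaller order — precisely the bookkeeping in which Assumption~\ref{ass}\ref{ass:cum} is indispensable and in which one must ensure that no stray factor of $\log n$ or $n$ survives against the linear drift $\asymp n\|\delta\|^2|k-k_0|$.
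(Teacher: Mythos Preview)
Your drift--fluctuation--peeling strategy is sound and would lead to the result, but it is not how the paper proceeds, and one of your claimed ingredients is more delicate than you indicate.

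The paper does not peel. It splits $\Omega_n(M)$ (on the side $k\le k_0$) into a far piece $\{k<k_0/2\}$ and a near piece $\{k_0/2\le k<k_0-nM/a_n\}$. On the far piece it simply bounds $\max_k|G_n(k)-\E G_n(k)|$ uniformly by $O_p(n\sqrt{\log n}\,\|\Sigma\|_F)+o_p(n^2\|\delta\|^2/\sqrt{a_n})$ and compares to the drift $\asymp n\|\delta\|^2$; this is precisely where $\log n=o(a_n)$ is spent. On the near piece it writes $G_n(k_0)-G_n(k)$ as the drift plus $G_n^Z(k_0)-G_n^Z(k)+R_1+R_2+R_3$, decomposes $G_n^Z(k_0)-G_n^Z(k)$ into six explicit pieces $S_1,\dots,S_6$ (Proposition~\ref{prop:terms}), and shows $\sup_k|S_j(k)|/(k_0-k)=o_p(n\|\delta\|^2)$ term by term via H\'ajek--R\'enyi with weights $1/(k_0-k)$ (Lemma~\ref{lem:rateterms}). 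This weighted maximal inequality plays the role your peeling would, but directly.

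The point where your sketch is too optimistic is the martingale structure of the dominant term $S_1(k)=c_k\sum_{i=k+1}^{k_0}\sum_{j=1}^{k}Z_i^TZ_j$. Its increment in $k$ equals $Z_k^T\big(\sum_{i=k+1}^{k_0}Z_i-\sum_{j=1}^{k-1}Z_j\big)$, which mixes a backward partial sum (already ``seen'') with a forward partial sum (not yet seen); the process is not adapted to either the forward or the backward filtration, so a single Doob/H\'ajek--R\'enyi application does not suffice. The paper isolates this in Proposition~\ref{lem:S1} and resolves it by a further decomposition of $\sum_{i=k+1}^{k_0}Z_i^T\sum_{j=1}^{k}Z_j$ into three pieces that \emph{are} genuine martingales (after reindexing), and on one subrange it even borrows the tightness/process convergence of Theorem~\ref{thm:weak} to get the needed uniform bound. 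Your variance bound $\mathrm{Var}(V_n(k))\lesssim n\|\Sigma\|_F^2|k-k_0|$ is correct and your shell arithmetic checks out, but upgrading it to a maximal inequality for $S_1$ is the real work, and it does not follow from the increment description you gave.
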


Theorem \ref{thm:rate} implies that $\hat{\tau}_U$ is a rate-$a_n$ consistent estimator for $\tau_0$. Following the conventional argument in studying the limiting behavior of $M$-estimator [\cite{vanweak}], we reparametrize and define 
$\gamma=a_n(\tau-\tau_0)$. Then 
$$\hat{\gamma}_n:=a_n(\hat{\tau}_U-\tau_0)=\argmin_{\gamma\in R} ~L_n(\gamma;\tau_0)$$
where $L_n(\gamma;\tau_0):=\frac{\sqrt{2}\sqrt{a_n}}{n\|\Sigma\|_F}\{G_n(n\tau_0)-G_n(\lfloor n\tau_0+n\gamma/a_n\rfloor)\}.$

To proceed, we define 
	$$G_n^Z(k) = \frac{1}{k(n-k)}\sum_{i_1,i_2 = 1, i_1\neq i_2}^{k}\sum_{j_1,j_2 = k+1, j_1 \neq j_2}^{n}(Z_{i_1} - Z_{j_1})^T(Z_{i_2} - Z_{j_2})$$
as an analog of $G_n(k)$. Let $\l_{\infty}([-M,M])$ denote the set of essentially bounded measurable functions on $[-M,M]$.

\begin{theorem}\label{thm:weak}
	Under the Assumptions \ref{ass}\ref{ass:tr}-\ref{ass}\ref{ass:cum}, for any {$b_n \rightarrow \infty$ and $b_n = o(n)$},
	$$H_n(\gamma) := \frac{\sqrt{2}\sqrt{b_n}}{n\|\Sigma\|_F}\left\{G_n^Z(\floor{n\tau_0}) - G_n^Z(\floor{n\tau_0 + n\gamma/b_n})\right\}\rightsquigarrow \frac{2\sqrt{2}}{\sqrt{\tau_0(1-\tau_0)}}W^*{(\gamma)}$$
	in $l_{\infty}([-M,M])$ for any fixed $M > 0$, where $W^*(\gamma)$ is a two-sided Brownian motion. That is, when $\gamma < 0$, $W^*(\gamma) = W_1(-\gamma)$ and when $\gamma \geq 0$, $W^*(\gamma) = W_2(\gamma)$, where $W_1,W_2$ are independent standard Brownian motions defined on $[0,+\infty)$.		
\end{theorem}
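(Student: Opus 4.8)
The plan is to obtain weak convergence in $\ell_{\infty}([-M,M])$ from the two standard ingredients: convergence of the finite-dimensional distributions of $H_n$ to those of $\frac{2\sqrt 2}{\sqrt{\tau_0(1-\tau_0)}}W^*$, and asymptotic equicontinuity (tightness). Write $k_0'=\floor{n\tau_0}$ and $k_\gamma=\floor{n\tau_0+n\gamma/b_n}$, so that $m_\gamma:=|k_\gamma-k_0'|\asymp n|\gamma|/b_n$ diverges for each fixed $\gamma\neq0$ yet is $o(n)$. The first step is an algebraic reduction of $G_n^Z$. Expanding $(Z_{i_1}-Z_{j_1})^{T}(Z_{i_2}-Z_{j_2})$ and carrying out the inner summations yields
$$G_n^Z(k)=\frac{n-k-1}{k}\,U_k-\frac{2(k-1)(n-k-1)}{k(n-k)}\,W_k+\frac{k-1}{n-k}\,V_k,$$
with $U_k=\sum_{i_1\neq i_2\le k}Z_{i_1}^{T}Z_{i_2}$, $V_k=\sum_{j_1\neq j_2>k}Z_{j_1}^{T}Z_{j_2}$ and $W_k=(\sum_{i\le k}Z_i)^{T}(\sum_{j>k}Z_j)$. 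Subtracting the value at $k_\gamma$ from that at $k_0'$, splitting $U_{k_\gamma},V_{k_\gamma},W_{k_\gamma}$ according to whether their indices lie inside or outside the window between $k_0'$ and $k_\gamma$, and Taylor-expanding the smooth coefficients around $k=k_0'$, one arrives for $\gamma>0$ at
$$G_n^Z(k_0')-G_n^Z(k_\gamma)=-\frac{2}{\tau_0}\Big(\sum_{t=k_0'+1}^{k_\gamma}Z_t\Big)^{\!\!T}\Big(\sum_{i=1}^{k_0'}Z_i\Big)+\frac{2}{1-\tau_0}\Big(\sum_{t=k_0'+1}^{k_\gamma}Z_t\Big)^{\!\!T}\Big(\sum_{j=k_\gamma+1}^{n}Z_j\Big)+R_n(\gamma),$$
with the mirror identity for $\gamma<0$, and I would show $\sup_{|\gamma|\le M}|R_n(\gamma)|=o_p\!\big(n\|\Sigma\|_F/\sqrt{b_n}\big)$. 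The two leading terms are bilinear forms over products of \emph{disjoint} index blocks, hence exactly centered; $R_n$ gathers the coefficient-perturbation pieces, the within-window bilinear forms (of order $O_p(m_\gamma\|\Sigma\|_F)$), and the self-inner-product $\|\sum_{k_0'<t\le k_\gamma}Z_t\|^2$. The only delicate point here is that this self-term has mean $m_\gamma\,\mathrm{tr}(\Sigma)$, which is \emph{not} below the target error rate and must be cancelled against the mean of a companion term; this is exactly why I keep the second leading form written with the tail $\sum_{j>k_\gamma}$ rather than re-expanding $\sum_{j>k_\gamma}=\sum_{j>k_0'}-\sum_{k_0'<t\le k_\gamma}$. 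The required uniform-in-$\gamma$ bounds follow from Doob/maximal inequalities over windows of length $O(n/b_n)$ together with the cumulant control of Assumption~\ref{ass}\ref{ass:cum}.

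For the finite-dimensional distributions, fix $\gamma_1<\cdots<\gamma_r$. After replacing, with $o_p(1)$ cost and no change of mean, every $\gamma$-dependent anchor block by $\sum_{i\le k_{\gamma_1}}Z_i$ on the left and $\sum_{j>k_{\gamma_r}}Z_j$ on the right (each such replacement costs only a bilinear form over two disjoint sub-windows of length $O(n/b_n)$, hence $o_p(1)$ after the $H_n$-normalization), any linear combination $\sum_\ell\lambda_\ell H_n(\gamma_\ell)$ becomes, up to $o_p(1)$, a normalized martingale-difference sum with respect to $\{\sigma(Z_1,\dots,Z_s)\}$ once each admissible pair is assigned to the larger of its two indices: the martingale differences are $d_s=Z_s^{T}V_s$ with $V_s$ a fixed multiple of a partial sum of the earlier $Z_t$'s. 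I would then apply the Hall--Heyde martingale CLT, in the spirit of the high-dimensional $U$-statistic analyses of \cite{chenqin10} and \cite{wang2019a}: the predictable quadratic variation converges in probability to the constant limiting variance by a law of large numbers for quadratic forms (using $\mathrm{tr}(\Sigma^4)=o(\|\Sigma\|_F^4)$, Assumption~\ref{ass}\ref{ass:tr}), and the Lyapunov condition $\sum_s\E d_s^4/(\sum_s\E d_s^2)^2\to0$ holds because $\E\big[(Z_s^{T}\sum_{i\le k_0'}Z_i)^4\big]\lesssim (n\|\Sigma\|_F^2)^2$ --- here it is essential that the fourth-cumulant contribution be bounded at the same order as the Gaussian part via Assumption~\ref{ass}\ref{ass:cum} (exploiting the averaging over the random anchor) rather than by a crude Cauchy--Schwarz bound, and this is precisely what accommodates $p\gg n$. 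Computing the limiting covariance from the predictable quadratic variation of $\sum_\ell\lambda_\ell H_n(\gamma_\ell)$ gives, for $\gamma,\gamma'\ge0$, $\frac{8}{\tau_0(1-\tau_0)}(\gamma\wedge\gamma')$ from the shared window $(k_0',k_{\gamma\wedge\gamma'}]$; for $\gamma<\gamma'<0$ the mirror value $\frac{8}{\tau_0(1-\tau_0)}(|\gamma|\wedge|\gamma'|)$; and for $\gamma<0<\gamma'$ an asymptotically vanishing cross-covariance (the only shared structure involves anchor blocks of microscopic size $O(n/b_n)$). These match the covariances of $\frac{2\sqrt 2}{\sqrt{\tau_0(1-\tau_0)}}W^*$, the factor $8$ being $(\sqrt 2)^2$ from the normalization times the squared coefficients $2/\tau_0$, $2/(1-\tau_0)$ weighted by the block sizes $\tau_0 n$, $(1-\tau_0)n$; the vanishing cross-covariance is what makes $W_1$ and $W_2$ independent in the limit.

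For tightness, take $\gamma_1<\gamma_2$ (treating the straddle $\gamma_1<0\le\gamma_2$ via $H_n(0)=0$ and the now-independent left/right pieces). Then $H_n(\gamma_2)-H_n(\gamma_1)$ equals, up to a uniformly negligible remainder, the same leading bilinear form over the window $(k_{\gamma_1},k_{\gamma_2}]$, and a direct fourth-moment computation --- expanding the product and invoking Assumption~\ref{ass}\ref{ass:cum}, which is where the cumulant bounds up to order six enter --- gives $\E|H_n(\gamma_2)-H_n(\gamma_1)|^{4}\le C|\gamma_2-\gamma_1|^{2}$ for increments exceeding the $O(b_n/n)$ mesh of the step process. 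By Cauchy--Schwarz this yields Billingsley's criterion $\E\big[|H_n(\gamma_2)-H_n(\gamma_1)|^{2}\,|H_n(\gamma_3)-H_n(\gamma_2)|^{2}\big]\le C(\gamma_3-\gamma_1)^{2}$, hence tightness in $D[-M,M]$; since the limit has a.s.\ continuous paths, convergence in $D$ upgrades to convergence in $\ell_{\infty}([-M,M])$, and combining with the finite-dimensional convergence gives $H_n\rightsquigarrow\frac{2\sqrt 2}{\sqrt{\tau_0(1-\tau_0)}}W^*$.

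The step I expect to be the main obstacle is the first one: producing the leading-term decomposition with the remainder controlled \emph{uniformly} in $\gamma$ at the rate $o_p(n\|\Sigma\|_F/\sqrt{b_n})$, and in particular handling the self-inner-product term, whose mean is larger than the error tolerance and must be cancelled exactly --- so the decomposition has to be organized around manifestly centered, disjoint-block bilinear forms, and the maximal inequalities over windows of length $n/b_n$ must be obtained at the correct rate. Verifying the predictable-quadratic-variation convergence and the (non-crude) Lyapunov bound is a close second, but these are of the same nature as the moment computations already carried out in \cite{wang2019a} under Assumptions~\ref{ass}\ref{ass:tr}--\ref{ass}\ref{ass:cum}.
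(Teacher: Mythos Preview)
Your proposal is correct and follows essentially the same route as the paper: isolate two leading cross-block bilinear terms in $G_n^Z(k_0)-G_n^Z(k)$, show the rest is negligible, apply the Hall--Heyde martingale CLT for finite-dimensional convergence, and obtain tightness via a fourth-moment increment bound. The paper packages the algebraic step as an exact six-term identity (Proposition~\ref{prop:terms}), whose $S_1,S_2$ are precisely your two leading forms with coefficients $\approx 2/\tau_0$ and $2/(1-\tau_0)$, and whose $S_3,\dots,S_6$ are the remainders; tightness is Proposition~\ref{prop:tightness}.

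One clarification: the obstacle you flag as the main difficulty --- the self-inner-product $\|\sum_{k_0'<t\le k_\gamma}Z_t\|^2$ with its large mean $m_\gamma\,\mathrm{tr}(\Sigma)$ --- never actually arises. Because $G_n^Z$ carries the off-diagonal constraints $i_1\neq i_2$, $j_1\neq j_2$ from the start, the \emph{exact} difference $G_n^Z(k_0)-G_n^Z(k)$ decomposes into six bilinear forms (Proposition~\ref{prop:terms}) each of which is either an off-diagonal sum within a single block or a cross-product of two disjoint blocks; every term is manifestly mean zero and no cancellation of $\mathrm{tr}(\Sigma)$-sized means is required. So the remainder analysis is more routine than you anticipate: the within-window term $S_3$ is the off-diagonal $\sum_{i\neq j\in(k,k_0]}Z_i^TZ_j$, not $\|\sum Z_t\|^2$, and its variance is $O(m_\gamma^2\|\Sigma\|_F^2)$, which is already $o(n^2\|\Sigma\|_F^2/b_n)$. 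This is the payoff of the U-statistic construction that Remark~\ref{rem:explain} emphasizes.
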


Theorem \ref{thm:weak} gives a process convergence result for the properly normalized increment of the process $G_n^Z(\cdot)$ around the true change point in a shrinking neighborhood. By directly applying argmax continuous mapping theorem [Theorem 3.2.2, \cite{vanweak}], we can get the following corollary.

\begin{corollary}[Asymptotic Distribution]\label{cor:asymdist}
	Under Assumptions \ref{ass}, we can show that for any $M>0$, 
\[L_n(\gamma;\tau_0)\rightsquigarrow L(\gamma;\tau_0) := \sqrt{2}|\gamma| + \frac{2\sqrt{2}}{\sqrt{\tau_0(1-\tau_0)}}W^*(\gamma)\]
in $l_{\infty}([-M,M])$. Consequently, 
	$$a_n(\hat{\tau}_U - \tau_0)\overset{\mathcal{D}}{\rightarrow} \xi(\tau_0):=\argmin_{\gamma\in(-\infty,\infty)}L(\gamma;\tau_0) $$
\end{corollary}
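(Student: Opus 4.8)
The plan is to split the reparametrized objective $L_n(\gamma;\tau_0)$ into a deterministic drift, a ``cross'' piece that is linear in the noise, and a degenerate piece that is quadratic in the noise, and then to show the first converges uniformly to $\sqrt2|\gamma|$, the second is asymptotically negligible, and the third is exactly the object handled by Theorem~\ref{thm:weak}. Writing $X_t=\mu_t+Z_t$ and expanding $(X_{i_1}-X_{j_1})^T(X_{i_2}-X_{j_2})$ into four products, and noting that every mixed expectation vanishes because the four indices lie in two disjoint blocks and are pairwise distinct (so $\E[G_n^Z(k)]=0$), one obtains
\[
G_n(k)=\E[G_n(k)]+R_n(k)+G_n^Z(k),
\]
where $R_n(k)$ collects the two $\mu$-by-$Z$ bilinear terms and has $\E[R_n(k)]=0$. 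Plugging this into $L_n(\gamma;\tau_0)$ with $k_\gamma:=\floor{n\tau_0+n\gamma/a_n}$ decomposes $L_n(\cdot;\tau_0)$, as a process indexed by $\gamma\in[-M,M]$, into three corresponding pieces.

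For the deterministic piece, Lemma~\ref{lem:max} gives $\E[G_n(k)]$ explicitly as a ratio of polynomials on each side of $k_0$; a one-sided Taylor expansion at $k_0$ shows the left and right derivatives equal $\pm n\|\delta\|^2(1+o(1))$ uniformly, while the curvature contributes only $O(\|\delta\|^2|k-k_0|^2)$. On the relevant neighbourhood $|k-k_0|\le nM/a_n$ this yields $\E[G_n(k_0)]-\E[G_n(k_\gamma)]=n\|\delta\|^2\,|k_\gamma-k_0|\,(1+o(1))$ uniformly in $|\gamma|\le M$; since $a_n=n^2\|\delta\|^4/\|\Sigma\|_F^2$, i.e.\ $\sqrt{a_n}=n\|\delta\|^2/\|\Sigma\|_F$, the normalization $\tfrac{\sqrt2\sqrt{a_n}}{n\|\Sigma\|_F}$ turns the main term into $\sqrt2|\gamma|$, the floor rounding into an $O(a_n/n)$ remainder, and the curvature into an $O(1/a_n)$ remainder, so this piece converges to $\sqrt2|\gamma|$ uniformly on $[-M,M]$. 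For the cross piece, inserting $\mu_t=\delta\,\mathbf 1(t>k_0)$ into $R_n(k)$ and collapsing the free indices shows $R_n(k)$ is a fixed scalar times $\delta^T\sum_t c_{t,k}Z_t$ with coefficients of order $n$; hence the increment $R_n(k_0)-R_n(k_\gamma)$ has second moment of order $|k_\gamma-k_0|\,n^2\,\delta^T\Sigma\delta$ plus a normalization-change contribution, and after normalization its variance at a fixed $\gamma$ is $O\!\big(M\,n\,\delta^T\Sigma\delta/\|\Sigma\|_F^2\big)=o(1)$ by Assumption~\ref{ass}\ref{ass:delta} (which the Remark records as equivalent to $n\,\delta^T\Sigma\delta/\|\Sigma\|_F^2=o(1)$). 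Combining this with a maximal inequality over the $O(n/a_n)$-point grid $\{k:|k-k_0|\le nM/a_n\}$ (the logarithmic loss being absorbed since $\log n=o(a_n)$) gives $\sup_{|\gamma|\le M}\big|\tfrac{\sqrt2\sqrt{a_n}}{n\|\Sigma\|_F}\{R_n(k_0)-R_n(k_\gamma)\}\big|=o_P(1)$.

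The degenerate piece is precisely the process $H_n$ of Theorem~\ref{thm:weak} with $b_n=a_n$, an admissible choice because Assumption~\ref{ass}\ref{ass:rate} ($\log n=o(a_n)$, $a_n=o(n)$) gives $a_n\to\infty$ and $a_n=o(n)$; hence $\tfrac{\sqrt2\sqrt{a_n}}{n\|\Sigma\|_F}\{G_n^Z(\floor{n\tau_0})-G_n^Z(k_\gamma)\}\rightsquigarrow\tfrac{2\sqrt2}{\sqrt{\tau_0(1-\tau_0)}}W^*(\gamma)$ in $l_\infty([-M,M])$. Adding the three pieces --- a deterministic uniform limit, an $o_P(1)$ term, and a weakly convergent term --- by Slutsky's lemma in $l_\infty([-M,M])$ establishes $L_n(\cdot;\tau_0)\rightsquigarrow L(\cdot;\tau_0)$ for every fixed $M>0$.

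Finally, to pass to $a_n(\hat\tau_U-\tau_0)\Dkonv\xi(\tau_0)$ I would invoke the argmin continuous mapping theorem (Theorem~3.2.2 of \cite{vanweak}), checking its hypotheses: $\hat\gamma_n=a_n(\hat\tau_U-\tau_0)$ is uniformly tight by Theorem~\ref{thm:rate} (for every $\epsilon$ there is an $M$ with $P(|\hat\gamma_n|>M)<\epsilon$ eventually), so one may localize to $[-M,M]$; the step-function nature of $\gamma\mapsto L_n(\gamma;\tau_0)$ induced by the floor is harmless since the lattice spacing $a_n/n\to0$; and the limit $L(\gamma;\tau_0)=\sqrt2|\gamma|+\tfrac{2\sqrt2}{\sqrt{\tau_0(1-\tau_0)}}W^*(\gamma)$ has continuous sample paths and an almost surely unique minimizer, because a two-sided Brownian motion plus a strictly increasing (resp.\ decreasing) drift tends to $+\infty$ on both ends and attains its minimum at a single point a.s. Letting $M\to\infty$ and using that $\argmin_{[-M,M]}L\to\argmin_{\R}L$ a.s.\ then gives $\hat\gamma_n\rightsquigarrow\argmin_{\gamma\in\R}L(\gamma;\tau_0)=\xi(\tau_0)$. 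The step I expect to be the most delicate is the uniform control of the cross term $R_n$ --- pinning down the exact maximal inequality that makes Assumption~\ref{ass}\ref{ass:delta} kill the linear-in-noise contribution uniformly over the $a_n$-rescaled neighbourhood --- together with verifying that the polynomial curvature of $\E[G_n(\cdot)]$ is negligible at that scale.
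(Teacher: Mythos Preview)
Your proposal is correct and follows essentially the same route as the paper: decompose $L_n$ into a deterministic drift, cross terms linear in $\delta^TZ_i$, and the degenerate piece $G_n^Z$; identify the last with $H_n$ of Theorem~\ref{thm:weak} at $b_n=a_n$; show the drift tends uniformly to $\sqrt2|\gamma|$; kill the cross terms; and apply the argmin continuous mapping theorem with tightness supplied by Theorem~\ref{thm:rate}.

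The one place the paper is sharper than your sketch is the uniform control of the cross terms. Rather than a generic maximal inequality over the $O(n/a_n)$-point grid with a logarithmic loss, the paper writes out the three pieces $R_1,R_2,R_3$ explicitly (from Proposition~\ref{prop:terms}, already used in the proof of Theorem~\ref{thm:rate}) and handles each on $[k_0-nM/a_n,k_0]$ directly: in $R_2$ the sum $\sum_{i=k_0+1}^n\delta^TZ_i$ does not depend on $k$ at all, while $R_1$ and $R_3$ involve partial sums $\sum_{i=1}^k\delta^TZ_i$ and $\sum_{i=k+1}^{k_0}\delta^TZ_i$ that are controlled by Kolmogorov's maximal inequality for i.i.d.\ scalars. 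No logarithmic factor arises, so your appeal to $\log n=o(a_n)$ at this step is unnecessary --- that condition enters only in the rate proof (Theorem~\ref{thm:rate}), not in the corollary. This is exactly the step you flagged as delicate, and the partial-sum structure makes it cleaner than your heuristic suggests.
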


\begin{remark}[Discussion of $\xi(\tau_0)$]\label{rmk:xi}\rm
     In fact, the distribution of $\xi({\tau_0})$ has been well studied in the literature. According to Proposition 1 in \cite{stryhn1996location}, the probability density function of $\xi{(\tau_0)}$, denoted as $f(t)$, is
    $$f(t) = \frac{3}{2}\tau_0(1-\tau_0)e^{\tau_0(1-\tau_0)|t|}\Phi\left(-\frac{3}{2}\sqrt{\tau_0(1-\tau_0)|t|}\right) - \frac{1}{2}\tau_0(1-\tau_0)\Phi\left(-\frac{1}{2}\sqrt{\tau_0(1-\tau_0)|t|}\right),$$
    where $\Phi(\cdot)$ is the cumulative distribution function of a standard normal random variable. It is straightforward to see that $f(t)$ is symmetric, i.e. $f(t) = f(-t)$, for all $t \in \mathbb{R}$, and the densities of $\xi(\tau_0)$ and $\xi(1-\tau_0)$ are identical. Furthermore $f(t)$ achieves its unique maximum at $t = 0$ and $f(0) = \tau_0(1-\tau_0)/2$. In addition, the tail of the distribution is exponential and $Var(\xi(\tau_0)) \propto (\tau_0(1-\tau_0))^{-2}$. 
    
    To approximate the distribution of $\xi(\tau_0)$ and its critical values, we approximate the standard Brownian motion by standardized sum of i.i.d. standard normal random variables, and generate $10^5$ Monte-Carlo replicates of $\xi(\tau_0)$ for $\tau_0 = 0.01,...,0.99$. Then we plot their densities and critical values over $\tau_0 \in [0.01,0.5]$ in Figure \ref{fig:dist}.
\end{remark}
\[\text{Please insert Figure~\ref{fig:dist} here!}\]

    From Figure \ref{fig:dist}, we see that as $\tau_0$ moves from $0.5$ to $0.1$, the density is less concentrated around $0$, indicating the relative difficulty of accurately estimating $\tau_0$ when $\tau_0$ is close to $0$ or $1$. Correspondingly, the critical calues increase as $\tau_0$ goes from $0.5$ to $0.01$.
    
    If Assumptions \ref{ass}\ref{ass:tr}, \ref{ass}\ref{ass:cum} and \ref{ass}\ref{ass:delta} hold, there are indeed three regimes that correspond to different rates of $a_n$. Given $a_n/\log(n)\rightarrow \infty$, if $a_n/n \rightarrow 0$, this situation is covered by Corollary \ref{cor:asymdist}. There are two more regimes, under which the behavior of our estimator is discussed in the following theorem.
\begin{theorem}\label{thm:3regime}Under Assumptions \ref{ass}\ref{ass:tr} and \ref{ass}\ref{ass:cum},
 \begin{enumerate}[label = (\alph*)] 
        \item $a_n/n \rightarrow c\in (0,\infty)$: if Assumption \ref{ass}\ref{ass:delta} also holds, our change point location estimator still works in the sense that
        $$\hat{k}_U - k_0 \overset{\mathcal{D}}{\rightarrow} \argmin_{\gamma \in \mathbb{Z}}L(\gamma,\tau_0)$$
        and $(\hat{k}_U - k_0) = O_p(1)$. 
        \item $a_n/n \rightarrow \infty$: if $\sqrt{\delta^T\Sigma\delta} = o(\|\delta\|^2)$, then we have $P(\hat{k}_U \neq k_0) \rightarrow 0$.
    \end{enumerate}
\end{theorem}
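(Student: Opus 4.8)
The plan is to handle both parts through the decomposition $G_n(k)=\E[G_n(k)]+G_n^Z(k)+C_n(k)$, where $C_n(k)$ collects the signal--noise cross terms (a mean-zero linear functional of $\{Z_t\}$) and the deterministic part is exactly the expression in Lemma~\ref{lem:max}. A direct computation from Lemma~\ref{lem:max} shows that near $k_0$ the drift gap is asymptotically linear, $\E[G_n(k_0)]-\E[G_n(k)]=(1+o(1))\,|k-k_0|\,n\|\delta\|^2$, uniformly over $1\le|k-k_0|\le n-O(1)$ with a $\tau_0$-dependent constant; and the second-moment/cumulant estimates already used for Theorems~\ref{thm:rate}--\ref{thm:weak} give $\mathrm{Var}\big(G_n^Z(k)-G_n^Z(k_0)\big)\asymp|k-k_0|\,n\|\Sigma\|_F^2$ and $\mathrm{Var}\big(C_n(k)-C_n(k_0)\big)\asymp|k-k_0|\,n^2\,\delta^T\Sigma\delta$. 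Theorem~\ref{thm:rate} already yields $\hat k_U-k_0=O_p(n/a_n)$, which is $O_p(1)$ if $a_n/n\to c$ and $o(1)$ --- hence, as $\hat k_U-k_0\in\mathbb Z$, eventually equal to $0$ --- if $a_n/n\to\infty$. The two parts amount to sharpening this localization.

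\emph{Part (a).} Fix an arbitrary integer $K\ge1$. By the rate bound it suffices to study $\{G_n(k):|k-k_0|\le K\}$, and I would establish convergence of the finite-dimensional distributions of the discrete process $\tilde L_n(m):=\frac{\sqrt2\sqrt{a_n}}{n\|\Sigma\|_F}\{G_n(k_0)-G_n(k_0+m)\}$, $m\in\{-K,\dots,K\}$. The drift contributes $\sqrt2\,(a_n/n)\,|m|+o(1)\to\sqrt2\,c\,|m|$; the normalized cross term $\frac{\sqrt2\sqrt{a_n}}{n\|\Sigma\|_F}\{C_n(k_0)-C_n(k_0+m)\}$ has variance $O\big(a_n\,\delta^T\Sigma\delta/\|\Sigma\|_F^2\big)=o(1)$ by Assumption~\ref{ass}\ref{ass:delta}, hence is asymptotically negligible; and the noise part $\frac{\sqrt2\sqrt{a_n}}{n\|\Sigma\|_F}\{G_n^Z(k_0)-G_n^Z(k_0+m)\}$ is a finite vector of differences of (degenerate) high-dimensional $U$-statistics, to which the martingale CLT and cumulant bounds (Assumption~\ref{ass}\ref{ass:cum}) behind Theorem~\ref{thm:weak} apply --- now with $O(1)$ rather than shrinking shifts --- giving a jointly Gaussian limit with the covariance structure of a two-sided Brownian motion sampled at the integers. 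Since the index set is finite, finite-dimensional convergence is convergence in $l_\infty(\{-K,\dots,K\})$; combining this with the $O_p(1)$ bound of Theorem~\ref{thm:rate} (so the minimizers are asymptotically tight and no mass escapes to $\pm\infty$) and with a.s.\ uniqueness of the minimizer of the limit (a drifted two-sided random walk with atomless marginals, so ties have probability zero), the argmax continuous mapping theorem [Theorem~3.2.2, \cite{vanweak}] yields $\hat k_U-k_0=\argmin_{m}\tilde L_n(m)\overset{\mathcal D}{\to}\argmin_{\gamma\in\mathbb Z}L(\gamma,\tau_0)$, and in particular $\hat k_U-k_0=O_p(1)$.

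\emph{Part (b).} Here it suffices to show $P\big(\exists\,k\ne k_0:\,G_n(k)\ge G_n(k_0)\big)\to0$, i.e.\ that the noise never beats the drift:
\[P\!\Big(\sup_{k\ne k_0}\big\{\big(G_n^Z(k)-G_n^Z(k_0)\big)+\big(C_n(k)-C_n(k_0)\big)\big\}\ \ge\ \E[G_n(k_0)]-\E[G_n(k)]\Big)\longrightarrow 0.\]
Using the lower bound $\E[G_n(k_0)]-\E[G_n(k)]\ge c_{\tau_0}\,|k-k_0|\,n\|\delta\|^2$, I would partition $\{k:|k-k_0|\ge1\}$ into dyadic blocks $B_j=\{k:2^{j-1}\le|k-k_0|<2^j\}$, $j=1,\dots,O(\log n)$, and on each $B_j$ bound the probability that the noise exceeds half the drift at the inner edge of $B_j$. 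For the pure-noise part this requires a Doob/H\'ajek--R\'enyi-type maximal inequality for the $U$-statistic-indexed process $k\mapsto G_n^Z(k)-G_n^Z(k_0)$ (extractable from, or provable by the same second-moment computations as, the proof of Theorem~\ref{thm:rate}); it contributes $O\big(2^j n\|\Sigma\|_F^2/(2^jn\|\delta\|^2)^2\big)=O\big((2^ja_n)^{-1}\big)$ per block. For the cross part, the analogous maximal inequality for the linear (partial-sum) process $k\mapsto C_n(k)-C_n(k_0)$ contributes $O\big(2^jn^2\delta^T\Sigma\delta/(2^jn\|\delta\|^2)^2\big)=O\big(\delta^T\Sigma\delta/(2^j\|\delta\|^4)\big)$ per block. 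Summing the geometric series over $j\ge1$ gives a total bound of order $a_n^{-1}+\delta^T\Sigma\delta/\|\delta\|^4$, which tends to $0$ exactly because $a_n/n\to\infty$ (so $a_n\to\infty$) and $\sqrt{\delta^T\Sigma\delta}=o(\|\delta\|^2)$; hence $P(\hat k_U\ne k_0)\to0$.

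\emph{Main obstacle.} In part (a) the crux is the joint CLT for the finite vector of differences of degenerate high-dimensional $U$-statistics together with the identification of the limiting covariance as that of a two-sided Brownian motion sampled at the integers --- this is where Assumption~\ref{ass}\ref{ass:cum} enters, and it overlaps with, but is not identical to, the argument behind Theorem~\ref{thm:weak}. In part (b) the crux is the maximal inequality: a term-by-term Chebyshev bound followed by a union over the $\Theta(n)$ candidate values of $k$ --- in particular over the smallest $|k-k_0|$ --- would cost a factor $\log n$ and deliver the conclusion only under the stronger condition $\sqrt{\delta^T\Sigma\delta\,\log n}=o(\|\delta\|^2)$, so reaching the sharp hypothesis $\sqrt{\delta^T\Sigma\delta}=o(\|\delta\|^2)$ requires a genuine Doob-type maximal inequality exploiting the additive structure in $k$ of $G_n^Z(k)-G_n^Z(k_0)$ and $C_n(k)-C_n(k_0)$, which is what lets the dyadic peeling start from $|k-k_0|=1$.
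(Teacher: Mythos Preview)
Your overall architecture for both parts is sound, and for part (b) your dyadic peeling is a legitimate alternative to the paper's three-zone splitting $\Omega_n^{(1)}\cup\Omega_n^{(2)}\cup\Omega_n^{(3)}$; the paper reuses the proof of Theorem~\ref{thm:rate} term by term under the new hypothesis $\sqrt{\delta^T\Sigma\delta}=o(\|\delta\|^2)$, whereas you package everything through a maximal inequality and a geometric sum. One small slip: your per-block bound for the pure-noise part should be $O\big(n/(2^j a_n)\big)$ rather than $O\big((2^j a_n)^{-1}\big)$, since $n\|\Sigma\|_F^2/(n\|\delta\|^2)^2 = \|\Sigma\|_F^2/(n\|\delta\|^4)=n/a_n$; the geometric sum is then $O(n/a_n)\to 0$, so your conclusion survives.

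There is, however, a real gap in part (a). You write that ``the martingale CLT and cumulant bounds \ldots behind Theorem~\ref{thm:weak} apply --- now with $O(1)$ rather than shrinking shifts'', but with the same filtration as in the proof of Theorem~\ref{thm:weak} the Lyapunov condition \emph{fails} when $a_n/n\to c$. Concretely, with integer shifts $m_1,m_2$ and $k_1=k_0-m_1$, $k_2=k_0-m_2$, the martingale differences $Y_{n,i}$ for $i\in\{k_1+1,\dots,k_0\}$ are finitely many, and each satisfies $\E[Y_{n,i}^4]\asymp a_n^2/n^2\to c^2$, so $\sum_i \E[Y_{n,i}^4]$ does not tend to zero. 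The paper's fix is to \emph{reverse the direction of the double sum}: relabel $\tilde Z_1=Z_{k_1+1},\dots,\tilde Z_{k_0-k_1}=Z_{k_0}$, then $\tilde Z_{k_0-k_1+1}=Z_1,\dots,\tilde Z_{k_0}=Z_{k_1}$, and build the martingale along the $\tilde Z$-filtration. Under this rearrangement every $Y_{n,i}$ is of the form $\tilde Z_i^T\big(\sum_{j}\tilde Z_j\big)$ where the inner sum has only $O(1)$ terms, so $\E[Y_{n,i}^4]=O(a_n^2/n^4)=O(n^{-2})$ uniformly in $i$; summing over the $O(n)$ indices gives $O(n^{-1})\to 0$, restoring Lyapunov. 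You flagged the CLT step as the ``main obstacle'' and said it is ``not identical'' to Theorem~\ref{thm:weak}, which is correct, but the specific obstruction and its resolution --- swapping which index of the double sum carries the filtration --- is the new ingredient that your plan is missing.
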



We shall offer some comparison with the methods, theory and assumptions in \cite{bai2010common} and \cite{bhattacharjee2019change}, as the latter two papers both addressed change point estimation in the one change point model. To elaborate the differences, we shall separate our discussions into several categories as follows. 

{\bf 1}: Model assumptions and estimation methods. Although all three papers assumed one change point in mean for a sequence of  high-dimensional observations, there are substantial differences. In particular, \cite{bai2010common} assumed componentise independence (or so-called cross-sectional independence) but allowed weak temporal dependence for each series; \cite{bhattacharjee2019change} relaxed the cross-sectional independence assumption in \cite{bai2010common} and allowed weak dependence over time and also within components. By contrast, we require the data to be independent over time but allow for weak componentwise dependence. This makes a direct comparison of the three papers quite challenging, so we shall focus on some specific cases only. Note that in both \cite{bai2010common} and \cite{bhattacharjee2019change}, an infinite order vector moving average process (i.e., VMA($\infty$))
was assumed.  

Denote the break date estimator in \cite{bai2010common} as $\hat{\tau}_{LS} = \hat{k}_{LS}/n$, which is obtained as the minimizer of a  least squares criterion, see (\ref{eq:LS}). Let $\hat{\tau}_{BBM}$ denote the estimator used in \cite{bhattacharjee2019change} where the minimum is taken over $k \in [\floor{c^*n},\floor{(1-c^*)n}]$ for some prespecified $c^* \in (0,0.5)$. It should be expected that $P(\hat{\tau}_{LS}\not=\hat{\tau}_{BBM})\rightarrow 0$ if $\tau_0\in (c^*, 1-c^*)$. By contrast, our break date estimator $\hat{\tau}_U$ is the maximizer of a new U-statistic based objective function.

{\bf 2}: Asymptotic framework, regimes and technical assumptions. In  \cite{bai2010common}, he studied two asymptotic frameworks, $n$ fixed and $n\rightarrow\infty$ as $p\rightarrow\infty$ (\cite{bai2010common} used $N$ for our $p$, and $T$ for our $n$ in his paper). We shall only focus on a comparison with his result in the latter case, i.e., $\min(n,p)\rightarrow\infty$. 
To make a fair comparison, we shall discuss the case for which both theories are expected to work, which is the case of cross-sectional and temporal independence. To facilitate the comparison, we further assume   $\Sigma = I_p$. Under this condition, our Assumptions \ref{ass}\ref{ass:tr} and \ref{ass}\ref{ass:cum} are automatically satisfied.  
    
    Under the assumption that $\|\delta\|^2 \rightarrow \infty$, Theorem 3.2 in \cite{bai2010common} stated that if $\log(\log(n))p/n \rightarrow 0$, then $P(\hat{k}_{LS} \neq k_0) \rightarrow 0$. This corresponds to our third regime, where $a_n/n \rightarrow \infty$ and $\sqrt{\delta^T\Sigma\delta} = \|\delta\| = o(\|\delta\|^2)$, which implies that $P(\hat{k}_U \neq k_0) \rightarrow 0$ as well, see Theorem \ref{thm:3regime}.  Under the assumption that $\|\delta\|^2 \rightarrow C \in (0,\infty)$, $\log(\log(n))p/n \rightarrow 0$ and $\delta^T\Sigma\delta \rightarrow C'\in(0,\infty)$,  Theorem 4.2 of \cite{bai2010common} asserted the asymptotic distribution for $\hat{k}_{LS}-k_0$. These assumptions imply $a_n/n \rightarrow \infty$, however $\sqrt{\delta^T\Sigma\delta}$ is no longer $o(\|\delta\|^2)$. This does not belong to any one of our three regimes stated early. But interestingly, under this specific setting our estimator $\hat{k}_U$ converge to the same distribution as $\hat{k}_{LS}$, and we shall prove this result below in Proposition~\ref{prop:compareBai}.

    In addition to these two cases, our theory also uncovers an important and interesting regime \cite{bai2010common} did not consider, that is $a_n=o(n)$. In this case, as we showed earlier, there is a very nice interplay between the order of $\|\delta\|^2$ and $(n,p)$ that allows $p$ to diverge faster than $n$, such that there is an asymptotic distribution for $a_n(\hat{\tau}_U-\tau_0)$. Thus in a sense, our theory is more complete than the one in \cite{bai2010common} for this specific model. 
    


\cite{bhattacharjee2019change} extended the method and theory in  \cite{bai2010common} to accommodate both cross-sectional and temporal (serial) dependence. They also used  the $VMA(\infty)$ model with a mean shift, but to accommodate the cross-sectional dependence,  many additional assumptions were imposed. For example, \cite{bhattacharjee2019change} required finite fourth moment, whereas \cite{bai2010common} did not. Also they required that the number of nonzero elements in $\delta$ cannot vary with $n$; see assumption (A4) in their paper. Such requirement is not needed in our technical analysis. Different from the conditions used in \cite{bai2010common}, there is no explicit restriction on the relative relationship between $p$ and $n$ in \cite{bhattacharjee2019change}. See Remark 2.10 for additional explanations.


{\bf 3}: Convergence rates and efficiency comparison. To compare with the theory in \cite{bhattacharjee2019change}, we shall focus on the following model,
$$X_t = \delta\1{t > k_0} + A\epsilon_t, \qquad t = 1,2,...,n,$$
where $\{\epsilon_t\}$ are i.i.d. $p$-dimensional random vectors with zero mean and identity covariance matrix,
$A$ is $p \times p$ real-valued matrix and $\delta \in \mathbb{R}^p$ is the vector of mean change. In our setting, we can set $A = \Sigma^{1/2}$, and hence $\Sigma = AA^T$. 
This model allows cross-sectional dependence but enforces temporal independence, so is included in our framework.
It can also be viewed as a special case of the VMA$(\infty)$ model with a mean shift in \cite{bhattacharjee2019change}
as basically we let $A_j=A$ when $j=0$ and $A_j=0$ for $j\ge 1$ in their VMA representation. According to Theorem 2.1 and Theorem 2.2 of 
\cite{bhattacharjee2019change}, to guarantee the consistency of $\hat{\tau}_{BBM}$, the signal-to-noise ratio (SNR) has to grow to infinity, i.e. 
$SNR = \frac{n\|\delta\|^2}{p\|A\|^2} \rightarrow \infty.$ Under this condition, $\hat{\tau}_{BBM}$ is consistent with the rate of convergence $a_n^{(BBM)} = p \times SNR = n\|\delta\|^2/\|A\|^2$. Notice that the rate of convergence for $\hat{\tau}_U$ is $a_n = n^2\|\delta\|^4/\|\Sigma\|_F^2$. When $SNR \rightarrow \infty$,
\begin{align*}
    \frac{a_n}{a_n^{(BBM)}} = \frac{n^2\|\delta\|^4}{\|\Sigma\|_F^2}\cdot \frac{\|A\|^2}{n\|\delta\|^2} = \frac{n\|\delta\|^2}{p\|A\|^2}\cdot\frac{p\|A\|^4}{\|\Sigma\|_F^2} \geq SNR \cdot \frac{p\|A\|^4}{p\|AA^T\|^2} \geq SNR \rightarrow \infty,
\end{align*}
where the first inequality is due to the fact that $\|\Sigma\|_F^2 \leq p\|\Sigma\|^2$, and the second inequality in the above derivation is because $\|AA^T\| \leq \|A\|^2$. This is a significant finding as it means that for the above specific model, if assumptions for both methods are satisfied, the convergence rate corresponding to our U-statistic based estimator is faster.

To ensure the consistency of our break point estimator $\hat{\tau}_U$,  we also require a signal-to-noise condition, i.e., $a_n/\log(n) \rightarrow \infty$. Since $a_n \geq a_n^{(BBM)}SNR = p\cdot SNR^2$ as shown in the above display,  $a_n/\log(n) \rightarrow \infty$ provided that (a) $p = O(\log(n))$ and $SNR\rightarrow\infty$ or (b) $\log (n)=o(p)$ and $SNR$ is fixed. This implies that our U-statistic based estimator  can be consistent with suitable convergence rate under a weaker signal setting as compared to the  least squares counterpart. In other words, $\hat{\tau}_U$ is a consistent estimator of $\tau_0$ under much weaker conditions than $\hat{\tau}_{BBM}$. We shall provide some theoretical explanation for this phenomenon in Remark~\ref{rem:explain}. 

Furthermore, similar to our Theorem \ref{thm:3regime} where we have described two additional regimes according to different orders of $(a_n/n)$, the asymptotic behavior of $\hat{\tau}_{BBM}$ also has three regimes depending on the order of $(a_n^{(BBM)}/n)$. Under certain assumptions, if $a_n^{(BBM)}/n \rightarrow \infty$, the asymptotic distribution of $\hat{k}_{BBM}-k_0$ is degenerate at zero. In this case, since $a_n/a_n^{(BBM)} \geq SNR \rightarrow \infty$, the limiting distribution of $\hat{k}_U-k_0$ is also degenerate at zero, as $a_n^{(BBM)}/n \rightarrow \infty$ implies that $a_n/n \rightarrow \infty$ under the assumption that $SNR\rightarrow\infty$.  This suggests that the regime that corresponds to the degenerate limiting distribution for $\hat{k}_{BBM}$ is well included in  the regime that corresponds to degenerate limiting distribution for $\hat{k}_U$. 

 Of course, the results in \cite{bhattacharjee2019change} are generally applicable to the temporal dependent case, so the slower convergence rate relative to our estimator, which is tailed to the independent high-dimensional data, is probably not surprising. 
Nevertheless, it shows that our new location estimator can bring substantial efficiency gain relative to the least squares based counterpart in the case of independent high-dimensional data. 

 \begin{proposition}\label{prop:compareBai}
    Both estimators $\hat{k}_{LS}$ and $\hat{k}_U$ converge to the same limiting distribution if
    \begin{enumerate}[label = (\alph*)]
        \item $\Sigma = I_p$,
        \item $\|\delta\|^2 \rightarrow c \in (0,\infty)$
        \item $\log(\log(n))p/n \rightarrow 0$.
    \end{enumerate}
\end{proposition}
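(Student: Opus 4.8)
The plan is to show that, in the regime determined by (a)--(c), the statistics $G_n(\cdot)$ and $-SSR(\cdot)$ are maximized by the same random integer asymptotically. The starting point is an \emph{exact} algebraic identity linking the two objective functions. Expanding the inner product in $G_n(k)$, grouping self‑ and cross‑terms, and comparing with the ANOVA decomposition $SSR(k)=SST-\frac{k(n-k)}{n}\|\bar X_{1:k}-\bar X_{(k+1):n}\|^2$ (with $SST:=\sum_{i=1}^n\|X_i-\bar X_{1:n}\|^2$ not depending on $k$), one obtains
\[
G_n(k)=c_n(k)\bigl\{SST-SSR(k)\bigr\}-(k-1)(n-k-1)R_n(k),\qquad c_n(k):=\frac{n(k-1)(n-k-1)}{k(n-k)},
\]
where $c_n(k)=n+O(1)$ and $|c_n(k)-c_n(k_0)|=O(|k-k_0|/n)$ uniformly over $k$ bounded away from $0$ and $n$, and $R_n(k)$ is the plug‑in bias term, whose leading part is $k^{-2}\sum_{i\le k}\|X_i\|^2+(n-k)^{-2}\sum_{i>k}\|X_i\|^2$, of order $p/n$ (equivalently, $G_n(k)=(k-1)(n-k-1)$ times the Chen--Qin two‑sample statistic of the two blocks). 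Under (a)--(c) one has $a_n\asymp n^2/p$, hence $a_n/n\to\infty$: this is exactly the regime in which, by Theorem~4.2 of \cite{bai2010common}, $\hat k_{LS}-k_0=O_p(1)$ and $SSR(k_0+m)-SSR(k_0)$ converges (over $m\in\mathbb Z$) to a nondegenerate limit process $V(\cdot)$ with an a.s.\ unique minimizer.

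The second step is localization and negligibility: for fixed $M$, I would show
\[
\tfrac1n\bigl\{G_n(k_0)-G_n(k_0+m)\bigr\}=\bigl\{SSR(k_0+m)-SSR(k_0)\bigr\}+o_p(1)\qquad\text{uniformly over }|m|\le M.
\]
From the identity, $\tfrac1n\{G_n(k_0)-G_n(k_0+m)\}$ equals $\tfrac{c_n(k_0)}{n}\{SSR(k_0+m)-SSR(k_0)\}$ plus $\tfrac1n\{c_n(k_0)-c_n(k_0+m)\}\{SST-SSR(k_0+m)\}$ plus $\tfrac1n$ times the increment of $(k-1)(n-k-1)R_n(k)$. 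Since $c_n(k_0)/n\to1$, the first term is $\{SSR(k_0+m)-SSR(k_0)\}+o_p(1)$. For the second, $c_n(k_0)-c_n(k_0+m)=O(1/n)$ and $SST-SSR(k_0+m)=O_p(n)$, so it is $O_p(1/n)$. For the third, although $(k-1)(n-k-1)R_n(k)$ is of order $np$ and not itself negligible, its \emph{increment} over $|k-k_0|\le M$ is $o_p(n)$: the order‑$p$ deterministic per‑step increments of $k^{-2}\sum_{i\le k}\|X_i\|^2$ and of $(n-k)^{-2}\sum_{i>k}\|X_i\|^2$ carry opposite signs and cancel to $o(n)$, while the stochastic increments of $\sum_{i\le k}(\|Z_i\|^2-p)$, of $\|S_{1:k}\|^2$, and of the cross term $S_{1:k}^TS_{(k+1):n}$ over a bounded window are $O_p(\sqrt p\,)$ or $O_p(1)$, hence $o_p(n)$. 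This reduces the localization to a (lengthy but routine) order‑of‑magnitude bookkeeping exercise.

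The third step is tightness of $\hat k_U-k_0$, which cannot be read off from Theorem~\ref{thm:rate} because Assumption~\ref{ass}\ref{ass:delta} fails here: $\sqrt{\delta^T\Sigma\delta}=\|\delta\|\asymp1$ is not $o(\|\Sigma\|_F/\sqrt n)=o(\sqrt{p/n})$ when $p/n\to0$. Instead I would derive $\hat k_U-k_0=O_p(1)$ directly from the identity together with the uniform lower bound $SSR(k)-SSR(k_0)\gtrsim\|\delta\|^2|k-k_0|$ with high probability that underlies Bai's own tightness proof: on that event,
\[
G_n(k_0)-G_n(k)\ \ge\ c\,n\|\delta\|^2|k-k_0|-\bigl|(c_n(k_0)-c_n(k))(SST-SSR(k))\bigr|-\bigl|\text{increment of }(k-1)(n-k-1)R_n(k)\bigr|,
\]
and the two subtracted terms are $o_p\bigl(n|k-k_0|\bigr)$ uniformly over $|k-k_0|>M$. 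Their deterministic parts are $O(p|k-k_0|)=o(n|k-k_0|)$ because $p=o(n)$; the suprema of the stochastic parts are handled by a maximal/law‑of‑the‑iterated‑logarithm bound for the partial sums $\sum_{i\le k}(\|Z_i\|^2-p)$ and $\sum_{i\le k}(Z_i^TZ_j)$-type quantities, of order $\sqrt{np\log\log n}=o(n)$ --- this is precisely where the sharper hypothesis (c), $\log\log(n)\,p/n\to0$, rather than merely $p/n\to0$, is used, mirroring the maximal‑inequality step in \cite{bai2010common}. Hence $G_n(k_0)-G_n(k)>0$ for all $|k-k_0|>M$ with probability at least $1-\epsilon$.

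Finally, combining the three steps: fix $\epsilon>0$ and $M$ with $P(|\hat k_U-k_0|>M)<\epsilon$ and $P(|\hat k_{LS}-k_0|>M)<\epsilon$. On $\{|m|\le M\}$ the criteria $\tfrac1n\{G_n(k_0)-G_n(k_0+m)\}$ and $\{SSR(k_0+m)-SSR(k_0)\}$ converge (trivially, being finite collections of random variables) to the same limit $V(\cdot)$, whose minimizer over $\mathbb Z$ is a.s.\ unique and lies in $[-M,M]$ off an event of probability $<\epsilon$; by the argmax continuous mapping theorem (Theorem~3.2.2 of \cite{vanweak}) in the discrete setting, $\hat k_U-k_0$ and $\hat k_{LS}-k_0$ both converge in distribution to $\operatorname{argmin}_{m\in\mathbb Z}V(m)$, so they share a limiting law. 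I expect the main obstacle to be the third step: uniformly controlling $(k-1)(n-k-1)R_n(k)$ and the $c_n(k)-n$ gap across the full range of $k$, since these are genuinely of size $np$ and one must both isolate the order‑$np$ deterministic part (which cancels between blocks or is dominated by the signal drift thanks to $p=o(n)$) and bound fluctuations of sums of $\|Z_i\|^2$ and $Z_i^TZ_j$ uniformly in $k$ at the sharp $\sqrt{np\log\log n}$ rate.
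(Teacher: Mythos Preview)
Your approach is correct in spirit but takes a genuinely different route from the paper. The paper does \emph{not} establish an algebraic identity between $G_n(k)$ and $SST-SSR(k)$ and then control the remainder. Instead, it works on the two objective functions separately and matches their leading terms. For $SSR$, it writes $SSR(k)-SSR(k_0)=\sum_{j=1}^{10}I_j$ explicitly and cites \cite{bai2010common} for the fact that only $I_4=\frac{(n-k_0)(k_0-k)}{n-k}\|\delta\|^2$ and $I_6=-\frac{2(n-k_0)}{n-k}\sum_{i=k+1}^{k_0}\delta^TZ_i$ survive. For $G_n$, it reuses the decomposition from Proposition~\ref{prop:terms} into $G_n^Z(k_0)-G_n^Z(k)$, the signal term, and $R_1,R_2,R_3$, then invokes the proof of Theorem~\ref{thm:3regime} (regime $a_n/n\to\infty$) to kill $G_n^Z(k_0)-G_n^Z(k)$, $R_1$ and $R_2$. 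The key observation is that here $\sqrt{\delta^T\Sigma\delta}=\|\delta\|\not=o(\|\delta\|^2)$, so $R_3(k)=\frac{2(n-1)(k_0^2-nk_0-k+n)}{k_0(n-k)}\sum_{i=k+1}^{k_0}\delta^TZ_i$ is \emph{not} negligible and becomes a second leading term. The paper then simply notes that the signal term plus $R_3$ equals, asymptotically, $(n-1)\times(I_4+I_6)$, so both argmaxes share a limit.

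The paper's route is much shorter because it recycles machinery already built (Proposition~\ref{prop:terms}, the proof of Theorem~\ref{thm:3regime}); in particular, tightness of $\hat k_U-k_0$ and the negligibility of $G_n^Z,R_1,R_2$ come essentially for free from that earlier analysis, whereas you re-derive tightness from scratch via the identity and Bai's maximal-inequality argument. Your identity-based approach buys a transparent structural explanation for \emph{why} the two estimators coincide here---the $R_n(k)$ correction is precisely the diagonal/self-term that U-statistics remove---but at the cost of the lengthy bookkeeping you flag in step~3, which the paper avoids entirely. Both are valid; the paper's is the more economical one given its prior investment.
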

\begin{remark}
\label{rem:explain}
{\rm 
    The main reason why \cite{bai2010common} only provided theories under the setting $p = o(n)$ is that the objective function he used is least squares based and it contains extra diagonal terms that need to be controlled under certain restriction on the growth rate of $p$ as a function of $n$. To be specific, as we showed in the proof of Proposition~\ref{prop:compareBai}, the first three terms of $SSR(k)-SSR(k_0)$ (i.e., $I_j$, $j=1,2,3$) are of form $\sum_{i = a}^{b}Z_i^TZ_i$ (up to a multiplication constant) for some $1 \leq a < b \leq n$ that needs to be of smaller magnitude than the leading terms in his theoretical derivation, so these three terms have to be controlled under the assumption $p=o(n)$ since if $n=o(p)$ the three diagonal terms can dominate the others. In contrast, in our U-statistic based objective function, we essentially remove the diagonal terms, so no growth rate assumption as $p=o(n)$ is required and we are able to cover the  ``large $p$ small $n$'' case automatically.
    
    The advantage of U-statistic over the least squares counterpart was in fact stated in \cite{chenqin10} under the two sample testing framework. Compared to an important early paper by \cite{bai1996effect}, which involves a least squares term in the test statistic, \cite{chenqin10} used a U-statistic to remove the diagonal terms, which are not useful in the testing and incur unnecessary 
    growth rate constraints in the theoretical analysis. Therefore in a sense, our U-statistic based approach inherits this advantage from \cite{chenqin10} and allows our theory to cover the interesting "large $p$ small $n$" case (i.e., $p>>n$). 

   It is worth noting that \cite{bhattacharjee2019change} considered almost the same estimator as \cite{bai2010common} but extended the theory to a more general setting including the large $p$ small $n$ case. In \cite{bhattacharjee2019change}, the control of diagonal terms in the least squares based objective function was explicitly done by imposing some extra conditions. In particular their condition (A4) was used to control the order of the diagonal terms. However, these additional assumptions seem hard to verify in practice.  In comparison, our four assumptions are relatively more transparent and interpretable. 
  } 

\end{remark}

\section{Confidence interval construction}\label{sec:CI}

Given the asymptotic theory presented in Section~\ref{sec:estimation}, we shall first describe a way of constructing a confidence interval for $\tau$ based on asymptotic approximation. Note that the normalizing constant $a_n$ depends on 
two unknown quantities, $\Delta = \|\delta\|^2$ and $\|\Sigma\|_F^2$. Fortunately, their consistent estimators have been provided by \cite{chenqin10} in the two sample testing context, and we can easily adapt them to our setting. 
Algorithm \ref{alg:CI} describes the procedure for the plug-in approach below. 
\begin{algorithm}[h!]
	\begin{enumerate}
		\item Estimate $k_0$: $\hat{k}_U = \argmax_{k}G_n(k)$.
		\item Estimate $\Delta = \|\delta\|^2$ : $\widehat{\Delta} = \frac{1}{(\hat{k}_U-1)(n-\hat{k}_U-1)}G_n(\hat{k}_U)$.
		\item Estimate $\|\Sigma\|_F^2$ : $\widehat{\|\Sigma\|_F^2}(\hat{k}_U)$.
		\item Estimate $a_n$: $\hat{a}_n = n^2\widehat{\Delta}^2/\widehat{\|\Sigma\|_F^2}(\hat{k}_U)$.
		\item $(1-\alpha)$ confidence interval for $\tau_0$: $[\hat{\tau}_U-q_{1-\alpha/2}(\xi(\hat{\tau}_U))/\hat{a}_n, \hat{\tau}_U- q_{\alpha/2}(\xi(\hat{\tau}_U))/\hat{a}_n]$, where $q_{\alpha}(\xi({\tau}))$ denotes the $\alpha$-quantile of the distribution of $\xi({\tau})$.
	\end{enumerate}
	\caption{Algorithm for constructing a confidence interval for $\tau_0$}\label{alg:CI}
\end{algorithm}

In the Algorithm \ref{alg:CI}, we have used the jackknife type estimator $\widehat{\|\Sigma\|_F^2}(\hat{k}_U)$ introduced in Chen and Qin (2010),
to estimate $\|\Sigma\|_F^2$, i.e., 
\begin{align*}
\widehat{\|\Sigma\|_F^2}(\hat{k}_U) =  (\hat{k}_U(n -\hat{k}_U))^{-1}tr &\left\{\sum_{i = 1}^{\hat{k}_U}(X_{i} - \bar{X}_{(1:\hat{k}_U,i)})(X_{i} - \bar{X}_{(1:\hat{k}_U,i)})^T\right.\\
&\left.\cdot\sum_{j = \hat{k}_U+1}^{n}(X_{j} - \bar{X}_{(\hat{k}_U+1 : n, j)})(X_{j} - \bar{X}_{(\hat{k}_U+1 : n, j)})^T\right\},
\end{align*}
where $\bar{X}_{(a:b,i)}$ is the sample average of $X_a,...,X_{b}$ excluding $X_{i}$.This slightly differs from the one used in \cite{chenqin10} in that we removed two terms that correspond to the two double sums within the pre-break sample and post-break sample. Simulation suggests that there is little impact on the coverage and length of intervals.

After preliminary simulations, we realize that there is considerable amount of coverage error for the above plug-in 
based confidence interval since this only covers the regime described in Corollary \ref{cor:asymdist}. There can be other regimes which have different convergence rates and in reality we may not be able to know which regime the data generating process falls into. This motivates us to propose the following bootstrap-based interval. 

\begin{algorithm}[h!]
	\begin{enumerate}
		\item  Estimate $k_0$ by $\hat{k}_U = \argmax_kG_n(k)$ and $\hat{\tau}_U = \hat{k}_U/n$.
		
		\item Estimate $\Delta$ by $\widehat{\Delta} = \frac{1}{(\hat{k}_U-1)(n - \hat{k}_U-1)}G_n(\hat{k}_U)$, and let $\hat{\delta} = \boldsymbol{1}_p\sqrt{\widehat{\Delta}/p}$,where $\boldsymbol{1}_p$ is a $p$-dimensional vector with all elements equal to 1.
		
		\item Estimate $\Sigma$ by some positive semi-definite estimator $\hat{\Sigma}_X$.
		
		\item Generate random vectors $\epsilon_1$,...,$\epsilon_n$ in $\mathbb{R}^p$ from the distribution $\mathcal{N}(0,\hat{\Sigma}_X)$.

		\item Generate ${X}_t^* = \epsilon_t$ if $t \leq \hat{k}_U$ and ${X}_t^* = \hat{\delta} + \epsilon_t$ if $t > \hat{k}_U$.
		
		\item Calculate the bootstrap estimate $\hat{k}_U^*$ by $\hat{k}_U^* = \argmax_{k = 2,...,n-2}G_n^{(X^*)}(k),$ where $G_n^{(X^*)}(k)$ denotes the value of $G_n(k)$ calculated based on $\{X_t^*\}$, and calculate the bootstrap estimate of the proportion by $\hat{\tau}_U^* = \hat{k}_U^*/n$.
		
		\item Repeat step 4-6 for $B$ times to generate $\hat{\tau}^*_{U,1}$,...,$\hat{\tau}^*_{U,B}$, and 95\%  bootstrap CI for $\tau_0$ is $[\hat{\tau}_U-q^*_{0.975}, \hat{\tau}_U-q^*_{0.025}]$, where $q^*_{0.025}$ and $q^*_{0.975}$ are the sample $2.5\%$ and $97.5\%$ quantiles for $\{\hat{\tau}^*_{U,i} - \hat{\tau}_U\}_{i = 1}^{B}$.
	\end{enumerate}
	\caption{Bootstrap algorithm for constructing a confidence interval for $\tau_0$ }\label{alg:CI_para}
\end{algorithm}

In Algorithm \ref{alg:CI_para} we consider to use a uniform vector with squared norm equal to $\widehat{\Delta}$ to estimate the mean change vector, regardless of the sparsity of the truth $\delta$. The reason why this works is because the limiting distribution only depends on the norm of the mean change, not the vector of the mean change itself. To verify this we have also tried variants of the above algorithm by imposing different sparsity on $\hat{\delta}$ while maintaining the same norm, the finite sample performance turns out to be stable.  


\begin{theorem}[Bootstrap Consistency]\label{thm:boots}
Suppose Assumption \ref{ass} holds. Further, we assume that
\begin{enumerate}[label = (\alph*)]
    \item $tr({\hat{\Sigma}_X}^4)/\|\hat{\Sigma}_X\|_F^4 = o_p(1)$,
    \item $\|\hat{\Sigma}_X\| = o_p(\max(n{\|\delta\|}^2/a_n,\|\delta\|^2))$,
    \item $\hat{a}_n/a_n \rightarrow_p 1$.
\end{enumerate}
Given the data, the conditional distribution of $\{a_n(\hat{\tau}_U^* - \hat{\tau}_U)\}$ weakly converges to that of $\xi(\tau_0)$ in probability, i.e.
$$a_n(\hat{\tau}_U^* - \hat{\tau}_U) \overset{\mathcal{D}}{\rightarrow} \xi(\tau_0) \text{ in P}.$$
Thus the bootstrap interval described in Algorithm \ref{alg:CI_para} has desired coverage probability asymptotically, i.e.
$$P(\tau_0 \in [\hat{\tau}_U - q^*_{1-\alpha/2},\hat{\tau}_U - q^*_{\alpha/2}]) \rightarrow 1-\alpha.$$
\end{theorem}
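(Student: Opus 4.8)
The plan is to exploit the observation that, conditionally on the original data, the bootstrap sample $\{X_t^\ast\}_{t=1}^{n}$ is itself an exact realization of the one--change--point model CP1, with change point $\hat{k}_U$, mean shift $\hat\delta$ (so that $\|\hat\delta\|^2=\widehat\Delta$) and i.i.d.\ Gaussian errors with covariance $\hat\Sigma_X$. Accordingly, I would show that on a sequence of events $\mathcal E_n$ with $P(\mathcal E_n)\to 1$ the data--dependent ``parameters'' $(\hat\delta,\hat\Sigma_X,\hat{k}_U)$ and the bootstrap rate $a_n^\ast:=n^2\widehat\Delta^2/\|\hat\Sigma_X\|_F^2$ satisfy the deterministic hypotheses of Assumption \ref{ass} with $a_n$ replaced by $a_n^\ast$, so that Theorem \ref{thm:weak} and Corollary \ref{cor:asymdist} apply inside the bootstrap world, conditionally on the data. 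This delivers $a_n^\ast(\hat\tau_U^\ast-\hat\tau_U)\Dkonv\xi(\hat\tau_U)$ conditionally in probability; since $\hat\tau_U\to_p\tau_0$ (Theorem \ref{thm:rate}), $a_n^\ast/a_n\to_p1$, and $\tau\mapsto\mathcal L(\xi(\tau))$ is continuous, this upgrades to $a_n(\hat\tau_U^\ast-\hat\tau_U)\Dkonv\xi(\tau_0)$ in probability, from which the coverage statement follows via a Slutsky/portmanteau argument combined with Corollary \ref{cor:asymdist}.

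First I would verify the bootstrap analog of Assumption \ref{ass}. Part \ref{ass:cum} is automatic because the $\epsilon_t$ are Gaussian, so all joint cumulants of order $3$--$6$ vanish and the order--two sum equals $\|\hat\Sigma_X\|_F^2$. Part \ref{ass:tr} is hypothesis (a). For part \ref{ass:rate} I use $a_n^\ast/a_n\to_p1$ (ensured by hypothesis (c) together with consistency of $\|\hat\Sigma_X\|_F^2$, resp.\ the jackknife estimator, for $\|\Sigma\|_F^2$) and Assumption \ref{ass}\ref{ass:rate}. The substantive step is part \ref{ass:delta}, i.e.\ $n\,\hat\delta^T\hat\Sigma_X\hat\delta/\|\hat\Sigma_X\|_F^2=o_p(1)$: one bounds $\hat\delta^T\hat\Sigma_X\hat\delta\le\|\hat\Sigma_X\|\,\widehat\Delta$, notes that $\widehat\Delta/\Delta\to_p1$ and $\|\hat\Sigma_X\|_F^2/\|\Sigma\|_F^2\to_p1$ (both extracted from the analysis behind Theorems \ref{thm:rate}--\ref{thm:weak} and hypothesis (c), using $\E[G_n(k_0)]=(k_0-1)(n-k_0-1)\|\delta\|^2$ from Lemma \ref{lem:max} and the $a_n$--closeness of $\hat k_U$ to $k_0$), and then hypothesis (b), $\|\hat\Sigma_X\|=o_p(\max(n\|\delta\|^2/a_n,\|\delta\|^2))$, together with a short computation using $a_n=o(n)$ gives $n\,\hat\delta^T\hat\Sigma_X\hat\delta/\|\hat\Sigma_X\|_F^2=o_p(\max(1,a_n/n))=o_p(1)$. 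Hence $P(\mathcal E_n)\to1$.

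Next I would transfer the limit theory. On $\mathcal E_n$, running the argument behind Theorem \ref{thm:weak} for the bootstrap process $G^{Z^\ast}_n(\cdot)$ --- where the noise is genuinely Gaussian, which only simplifies the cumulant bookkeeping --- gives, conditionally on the data, $H_n^\ast(\gamma)\weak\frac{2\sqrt2}{\sqrt{\hat\tau_U(1-\hat\tau_U)}}W^\ast(\gamma)$ in $l_\infty([-M,M])$; the reparametrization used in Theorem \ref{thm:rate} and the argmax continuous mapping theorem then yield $L_n^\ast(\gamma;\hat\tau_U)\weak L(\gamma;\hat\tau_U)$ and $a_n^\ast(\hat\tau_U^\ast-\hat\tau_U)\Dkonv\xi(\hat\tau_U)$, all conditionally on the data. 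Since $\hat\tau_U\to_p\tau_0$ and, by the explicit density in Remark \ref{rmk:xi}, $\tau\mapsto\mathcal L(\xi(\tau))$ is continuous (Scheff\'{e}'s lemma), the conditional law of $\xi(\hat\tau_U)$ converges to that of $\xi(\tau_0)$, and combining with $a_n^\ast/a_n\to_p1$ one obtains $a_n(\hat\tau_U^\ast-\hat\tau_U)\Dkonv\xi(\tau_0)$ in probability. Finally, since the law of $\xi(\tau_0)$ has a continuous density that is positive at the relevant points, the bootstrap quantiles satisfy $a_n q^\ast_{\alpha}\to_p q_\alpha(\xi(\tau_0))$; writing $P(\tau_0\in[\hat\tau_U-q^\ast_{1-\alpha/2},\hat\tau_U-q^\ast_{\alpha/2}])=P(a_nq^\ast_{\alpha/2}\le a_n(\hat\tau_U-\tau_0)\le a_nq^\ast_{1-\alpha/2})$ and invoking Corollary \ref{cor:asymdist}, Slutsky and the portmanteau theorem (the boundary event has probability zero by continuity of the c.d.f.), this probability tends to $P(q_{\alpha/2}(\xi(\tau_0))\le\xi(\tau_0)\le q_{1-\alpha/2}(\xi(\tau_0)))=1-\alpha$.

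The main obstacle is making ``apply Theorem \ref{thm:weak} and Corollary \ref{cor:asymdist} inside the bootstrap world'' rigorous: the conditional weak convergence must hold \emph{in probability}, uniformly over the random parameters $(\hat\delta,\hat\Sigma_X,\hat\tau_U)$. Concretely, I would revisit the proofs of Theorems \ref{thm:rate} and \ref{thm:weak}, track every error bound, and verify that it can be controlled through a handful of scalar summaries --- e.g.\ $tr(\hat\Sigma_X^4)/\|\hat\Sigma_X\|_F^4$, $a_n^\ast/n$, $\log n/a_n^\ast$, $n\,\hat\delta^T\hat\Sigma_X\hat\delta/\|\hat\Sigma_X\|_F^2$ and $\hat\tau_U$ --- each of which converges in probability to the value that makes the corresponding bound vanish; only then does the conditional convergence hold in probability. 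Establishing the ancillary facts $\widehat\Delta/\Delta\to_p1$ and $\|\hat\Sigma_X\|_F^2/\|\Sigma\|_F^2\to_p1$ with this uniformity, and handling the discretization $\floor{n\hat\tau_U+n\gamma/a_n^\ast}$ with a \emph{random} rate $a_n^\ast$, are the points that need the most care; the remaining steps are routine adaptations of the unconditional arguments.
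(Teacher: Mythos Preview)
Your proposal is correct and follows essentially the same route as the paper: verify that the bootstrap sample satisfies the analog of Assumption~\ref{ass} (Gaussianity handles \ref{ass:cum}, hypothesis (a) gives \ref{ass:tr}, hypothesis (b) gives \ref{ass:delta} via the bound $\hat\delta^T\hat\Sigma_X\hat\delta\le\|\hat\Sigma_X\|\,\widehat\Delta$, and hypothesis (c) gives \ref{ass:rate}), then re-run Theorems~\ref{thm:rate}--\ref{thm:weak} and the argmax continuous mapping theorem conditionally on the data, and finally pass from the bootstrap rate to $a_n$ using $\hat a_n/a_n\to_p1$. The paper's proof is terser---it simply asserts that conditions (a) and (b) are the bootstrap counterparts of Assumptions~\ref{ass}\ref{ass:tr} and \ref{ass}\ref{ass:delta} without writing out the inequality you display, and it handles the ``in probability'' qualifier for conditional convergence by invoking the bootstrap stochastic orders $O_p^*,o_p^*$ of Chang and Park rather than your scalar-summary tracking---but the logical skeleton is identical, and your additional care about the continuity of $\tau\mapsto\mathcal L(\xi(\tau))$ and the quantile/coverage step only makes explicit what the paper leaves implicit.
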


\begin{remark}{\rm
    For the other two regimes besides the regime covered by Assumption \ref{ass}, the bootstrap estimator $\hat{\tau}_U^* - \hat{\tau}_U$ has the same asymptotic behavior as $\hat{\tau}_U - \tau_0$ if the three conditions in the above theorem are satisfied. The proof is very similar  to the proof of Theorem \ref{thm:boots} so we skip the details. The verification of the two conditions (a) and (b) in Theorem \ref{thm:boots} depend on what type of positive definite estimator $\hat{\Sigma}_X$ we adopt, and it requires a case-by-case analysis. Hence details are omitted. } 
\end{remark}

\begin{remark}
{\rm \cite{bhattacharjee2019change} required a stronger signal-to-noise condition for the bootstrap consistency result. Specifically, instead of the $SNR \rightarrow \infty$, they required $SNR\text{-}ADAP = \frac{n\|\delta\|^2}{p\log(p)\|A\|^2} \rightarrow \infty$. By contrast, the requirement for the signal-to-noise ratio in our bootstrap consistency result is identical to that for the consistency of our original estimator [cf. Theorem \ref{thm:weak}], which is $a_n/\log(n) \rightarrow \infty$. 
}
\end{remark}

\section{Simulation studies}
\label{sec:simulation}

In this section, we study the finite sample performance of our proposed estimator and confidence intervals. We consider the single change point model (CP1) with the change point located at $n\tau_0$, i.e.
\begin{equation}
X_t = \delta\1{t > n\tau_0} + \epsilon_t, \qquad t = 1,2,...,n,
\end{equation}
where $\{\epsilon_t\}$ are i.i.d. multivariate normal random vectors with mean zero and covariance $\Sigma$,  $\delta$ is the mean change vector, and $\tau_0 = 0.2$ or $0.5$. To study the impact of componentwise dependence on the performance, we include four different models for $\Sigma$: (1) Identity (ID: $\Sigma = I_p$); (2) AR(1) (AR: $\Sigma(i,j) = 0.8^{|i-j|}$); (3) Banded (BD: $\Sigma(i.j) = 0.5^{|i-j|}\1{|i-j| <= 2}$); (4) Compound Symmetric (CS: $\Sigma = 0.5I_p + 0.5\boldsymbol{1}\boldsymbol{1}^T$). 

The sample size $n$ is chosen from $\{50,100,200\}$ and the dimension $p$ is chosen from $\{50,150\}$. Furthermore we consider two cases for the sparsity of $\delta$. One is dense change where $\delta$ is formed by $p$ i.i.d. random values generated from Uniform distribution $Unif[-0.5,0.5]$. The other is sparse change, where we first generate a $p$-dimensional random vector by the same procedure as what we have done for dense change and record its norm as $\|\delta\|$, and then generate $\delta = \|\delta\|(1/\sqrt{5}, 1/\sqrt{5}, 1/\sqrt{5}, 1/\sqrt{5}, 1/\sqrt{5},0,...,0)^T$ as the sparse vector. We fix $\delta$ for all Monte-Carlo replicates with the same $(n,p)$ combination. 

We conduct two simulation studies under the above settings to evaluate the performance of the point estimators  and confidence intervals, and we comment on the results below for these two studies respectively.

\subsection{Finite sample performance of location estimators}

We examine the finite sample performance of the location estimators, including our U-statistic based estimator (denoted as $\hat{\tau}_U$) and the least squares based estimator described in \cite{bai2010common} (denoted as $\hat{\tau}_{LS}$) by $20000$ Monte-Carlo replicates. The bias, variance and  mean squared error (MSE) are summarized in Table \ref{tab:point0.2} for $\tau_{0} = 0.2$ and in  Table \ref{tab:point0.5} for $\tau_0 = 0.5$. As we can observe, $\hat{\tau}_U$ outperforms $\hat{\tau}_{LS}$ for almost all settings in terms of the MSE. There are two settings for $\tau_0 = 0.2$ with banded covariance structure ($(n,p) = (200,50)$ for sparse change and $(n,p) = (100,150)$ for dense change), where $\hat{\tau}_U$ has slightly larger MSE, and this could be due to random Monte Carlo errors.
As we break the MSE criterion into (squared) bias and variance, we spot an interesting pattern. The biases for $\hat{\tau}_U$ and $\hat{\tau}_{LS}$ are mostly comparable, with no one dominating the other. 
The advantage of $\hat{\tau}_U$ in MSE is mostly attributed to its smaller variance, which may be explained by the usage of a U-statistic based objective function, as U-statistic has the well-known minimal variance property in estimation.

It can also be seen that  when comparing the results for $\tau_0=0.2$ and $\tau_0=0.5$,  there are substantially smaller bias and variance for all settings when $\tau_0=0.5$, which is consistent with our intuition that  estimation is easier when $\tau_0 = 0.5$. Additionally, both methods exhibit a larger MSE for the compound symmetric case, as compared to other covariance structures. This is not surprising since the compound symmetric covariance matrix corresponds to  strong componentwise dependence and violates the weak cross-sectional dependence  assumptions necessary for both methods. Nevertheless,  as sample size gets  larger, the MSE gets smaller for all cases. A direct comparison between  the dense change and the sparse charge shows that the results for both cases are very similar for all combinations of $(n,p)$ and models. This is quite reasonable since the performance of both methods essentially depends on the $l_2$ norm of the mean change, which we hold at the same level. Thus the sparsity of the mean change is not the critical factor in determining the finite sample performance of both estimators. Overall, our new estimator enjoys the efficiency gain over the least squares counterpart in almost all settings and should be  preferred  in the high-dimensional environment.

\subsection{Finite sample performance of confidence intervals}
In this section we evaluate the finite sample performance for confidence intervals. For each setting, we generate 3000 Monte-Carlo replicates and construct 7 different 95\% confidence intervals for $\tau_0$ including:

\begin{enumerate}
\item Oracle U-statistic based CI ($U_1$): Constructed by Algorithm \ref{alg:CI} with the true value for $a_n$ replacing $\hat{a}_n$.

\item U-statistic based CI ($U_2$): Constructed by Algorithm \ref{alg:CI}.

\item U-statistic based parametric bootstrap I ($U_3$): Constructed by modifying Algorithm \ref{alg:CI_BBM}. We estimate the location by $\hat{\tau}_U$, and follow steps 2-5 to generate bootstrap samples. For each bootstrap sample we used our U-statistic based method instead of least squares based method to estimate the location. 

\item U-statistic based parametric bootstrap II ($U_4$): Constructed by Algorithm \ref{alg:CI_para}.

\item U-statistic based nonparametric bootstrap ($U_5$):  We sample with replacement based on the pre-break sample and post-break sample separately to generate bootstrap data, where the break date is estimated by $\hat{\tau}_U$. The break date for the bootstrap sample is estimated by our method.

\item CI in \cite{bai2010common} ($LS_1$): Constructed by Algorithm \ref{alg:CI_Bai}.
\item Adaptive CI in \cite{bhattacharjee2019change} ($LS_2$): Constructed by Algorithm \ref{alg:CI_BBM}, which is a modified version of the one in \cite{bhattacharjee2019change} due to the temporal independence model we assume here. This main difference between this one and $U_3$ is that the least squares based approach was used for the break date estimation for both original and bootstrap sample, whereas the U-statistic based approach was used in the construction of $U_3$. 

\end{enumerate}

One thing worth pointing out is that in Algorithm \ref{alg:CI_BBM}, \cite{bhattacharjee2019change}  originally used banded autocovariance matrix to generate bootstrap samples. However we found during our simulation that the banded covariance matrix may not be positive semi-definite and the sample covariance matrix  itself is not a good estimate when the dimension is high. To solve this issue,  we use the R package "PDSCE" to get a positive definite estimate for the high-dimensional covariance matrix.
 Specifically, denote $S$ as the sample covariance matrix and $R$ as the sample correlation matrix. Denote $S^+$ as the diagonal matrix with the same diagonal as $S$ and $S^- = S - S^+$.  Then the correlation matrix estimator is constructed as 
$$\hat{\Theta} = \argmin_{\Theta\succ 0}(\|\Theta - R\|_F^2/2 - \lambda_1\log|\Theta| + \lambda_2|\Theta^-|_1),$$
where $\lambda_1$ is a fixed small positive constant, $\lambda_2$ is a non-negative tuning parameter,  $|\cdot|$ is the determinant and $|\cdot|_1$ is the $l_1$ norm of the vectorized matrix. The tuning parameters are selected via  a default cross-validation step in  ``PDSCE''. Then the estimated covariance matrix is constructed as $\hat{\Sigma} = (S^+)^{1/2}\hat{\Theta}(S^+)^{1/2}$. See \cite{rothman2012positive} for more details about this methodology. Note that this is the covariance matrix estimate we used for our Algorithms \ref{alg:CI_para} and  \ref{alg:CI_BBM}, i.e., in the construction of $U_3, U_4$ and $LS_2$. 


The results are summarized in Tables \ref{tab:CI_ID}-\ref{tab:CI_CS}. We calculate the sample coverage probability as well as the average length for each CI. Each table corresponds to a particular covariance model with all results for both sparse and dense changes, all combinations of $(n,p)$s and two cases  $\tau_0=0.2,0.5$. It is apparent for some combinations of $(n,p)$, the coverages for all seven intervals are far below the nominal coverage level $95\%$, indicating the difficulty of constructing an interval with proper coverage, especially when $n$ is small and the dependence among components is strong.

The three intervals ($U_1$, $U_2$ and $LS_1$) are based on asymptotic approximation, which seem quite coarse as all these intervals exhibit serious undercoverage. It appears that in most cases $U_1$ and $U_2$ have better coverage than $LS_1$ when $p>n$ but have worse coverage than $LS_1$ when $p<n$, which is consistent with the asymptotic theory. Note that the undercoverage for 
$U_1$ and $U_2$ are tied to the fact that  we always use $a_n$ for $U_1$ and $\hat{a}_n$ for $U_2$ as the rate of convergence, which is only correct for our regime $a_n/n \rightarrow 0$. However for other regimes the rates of convergence can be  slower than $a_n$. This leads to an undercoverage. For the same reason, $LS_1$ fails to achieve the desired coverage probability since the theory is only valid for a specific regime and requires the cross-sectional independence.

The four bootstrap-based intervals ($U_3$, $U_4$, $U_5$ and $LS_2$) have overall almost uniform better coverages than the three counterparts based on asymptotic approximation. Among these four intervals, the ranking appears to be (in the order of preferences) $U_4>U_5>U_3>LS_2$. 
We can see that $U_4$ and $U_5$ have very comparable results for all settings. Both have about 95\% coverage probability even for small $(n,p)$ when $\tau_0 = 0.5$. When $\tau_0 = 0.2$, the problem gets harder but they can still have a desired coverage when $n$ and $p$ are large, sometimes even too conservative. We have no theoretical justification for the nonparametric bootstrap procedure yet, but the simulation results are quite encouraging.  We have also tried variants of $U_4$ by using a sparse estimate of $\delta$ and keeping the same $l_2$ norm, the results turn out to be  similar to what we have here. 

As a comparison, $LS_2$ is another least squares based interval and it has better coverage comparing to $LS_1$ by using a bootstrap procedure. But it still cannot achieve the desired confidence level  for most settings when $\tau_0 = 0.2$. When $\tau_0 = 0.5$, it has decent coverage probability when the sample size is large. As we have discussed before,  the rate of convergence of our estimator is faster than the least squares based estimator. Hence our methods ($U_3$, $U_4$ and $U_5$) can achieve the desired coverage with smaller $(n,p)$. Furthermore we observe that for large $p$ small $n$ situation, our methods still provide a good coverage whereas $LS_2$ cannot. This may be related to the fact that our methods work for both $n/p \rightarrow \infty$ and $n/p \rightarrow 0$ in theory, but the least squares based method only works for the case $p<<n$.  Note that a  higher coverage probability is usually associated with a longer interval.


Among other observations, we mention that for the same setting, the results for $\tau_0 = 0.5$ are always comparable or better than the results (in terms of more coverage and shorter interval length) for $\tau_0 = 0.2$ due to the fact  the estimation problem for $\tau_0 = 0.5$ is easier. For most settings, the results for sparse and dense changes are similar, which is consistent with the fact that the performance is mainly determined by the $l_2$ norm of the mean change rather than the mean change vector itself.
For the same $\tau_0$, there are relatively less differences between the results for ``ID'', ``AR'' and ``BD'' covariance models, compared to their differences from the "CS" case. This can be explained by the fact that the former three cases belong to the class of weakly dependent components model whereas the compound symmetric covariance structure implies strong dependence.



 Among all methods, the U-statistic based bootstrap procedures ($U_4$, $U_5$) perform the best,  achieving the desired coverage level in most settings even with a small sample size and moderate dependence within components.

 \subsection{Impact of $\delta^T\Sigma\delta$ on the finite sample coverage}
 
As shown in the previous subsection,  the confidence intervals constructed by the asymptotic approximation have substantially lower coverage probabilities than the desired 95\%. One possible explanation is that we used the limiting distribution corresponding to a particular regime to approximate the finite sample distribution of $\hat{\tau}_U-\tau_0$ regardless of which regime the data generating process falls into, which yields large approximation errors in many cases. Another plausible explanation is that 
the finite sample approximation error is related to the magnitude of $\delta^T\Sigma\delta$, which controls the amount of noise in the U-statistic based objective function. This can be seen from our theoretical derivation, as our objective function contains   terms as $\delta^TZ_i$, for $i = 1,2,...,n$. These terms are asymptotically negligible under Assumption \ref{ass}\ref{ass:delta}, but in finite sample, these interaction terms can have a substantial impact on the finite sample coverage. 
Theoretically the order of these interaction terms is proportional to $\delta^T\Sigma\delta=Var(\delta^TZ_i)$. To examine the impact of these interaction terms, as measured by the magnitude of $\delta^T\Sigma\delta$, we shall  design a small simulation experiment as follows. 
 
In the following experiment,  the sample size $n$ is still chosen from $\{50,100,200\}$ and $p$ is selected from $\{50,150\}$. The change occurs at $\tau_0 = 0.2$. We set $\Sigma$ as a diagonal matrix with elements $\Sigma(i,i) = 0.1$ if $i \leq p/2$ and $\Sigma(i,i) = 1$ for $i > p/2$.  We consider three cases for $\delta$ to represent different strength of the interaction terms. We fix $\|\delta\|^2 = 4$, and  set (1) Weak:  $\delta \propto (\boldsymbol{1}_{p/2}^T,\boldsymbol{0}_{p/2}^T)^T$;(2) Moderate:  $\delta = \boldsymbol{1}_p$;(3) Strong: $\delta \propto (\boldsymbol{0}_{p/2}^T,\boldsymbol{1}_{p/2}^T)^T$. It is easily seen that the magnitude of $\delta^T\Sigma\delta$  gradually increases as we move from case (1) to (2) and to (3). The results are summarized in Table \ref{tab:CI_int}.
 
 As we fix $\|\delta\|^2 = 4$, the signal of the problem is fixed. When we increase the strength of the interaction as quantified by $\delta^T\Sigma\delta$, we essentially increase the level of the noise, so the (finite sample) signal to noise ratio decreases. Consequently, for a fixed sample size and dimension combination,  the coverage probabilities for all methods decrease as the interaction gets stronger.
 For all intervals based on asymptotic approximations ($U_1$, $U_2$ and $LS_1$), it is interesting to observe that while the average length does not change much, the coverage drops as the strength of interaction terms moves from weak to moderate and then to strong. When $\delta^T\Sigma\delta$ is small, we see that $U_1$ and $U_2$ can indeed achieve a coverage of more than $90\%$ for $p=50, 150$, when $n=100$ and $200$, which corroborates our asymptotic theory. There is some noticeable impact on the coverage of bootstrap-based intervals ($U_3$, $U_4$, $U_5$ and $LS_2$) but 
 compared to the impact on asymptotic approximation based intervals ($U_1$, $U_2$ and $LS_1$), 
the strength of interaction terms plays a less significant role in the finite sample coverage. This might be due to the adaptive nature of the bootstrap method. A good theoretical explanation for this adaptiveness presumably involves second-order edgeworth expansion of the distribution of both $\hat{\tau}_U-\tau_0$ and $\hat{\tau}_{U}^*-\hat{\tau}_{U}$, which seems very challenging. 
 Overall $U_4$ and $U_5$ have the best coverage probabilities among all methods, although they appear to be conservative (i.e., over-coverage) in a few settings.
 
 


\section{Conclusions}
\label{sec:conclusions}

In this article, we introduce a new estimation method for the change point location in the mean of independent high-dimensional data.  The new U-statistic based objective function is natural given its unbiased and minimal variance property in classical estimation problems, and brings substantial efficiency gain to the change point location estimation in the high-dimensional setting, as demonstrated in both theory and simulations.
The convergence rate and asymptotic distribution of the location estimate are obtained under mild assumptions using new technical arguments that involve some nontrivial asymptotic theory for the high-dimensional U-statistics. A bootstrap-based approach was also proposed to construct a confidence interval, which seems to work well in all simulation settings. Our theoretical results and numerical  findings are significant as they suggest that (i) the U-statistic based point estimator is preferred to the least squares based counterpart in break date estimation, especially when $p>>n$; (ii) 
Bootstrap-based interval is fairly adaptive to different magnitude of change, and should be preferred to the asymptotic plug-in approach. In addition, U-statistic based estimation approach is recommended to couple with either nonparametric bootstrap or parametric bootstrap with a suitably estimated covariance matrix in constructing an interval for the break date.  

To conclude, the work we present in this article opens up several new directions for future research.  
The assumption of independence (over time) is crucial for the formulation of our U-statistic based objective function and derivation of asymptotic property of our estimator. It would be desirable to extend our methodology and theory to cover the temporally dependent case, as in practice many high-dimensional  time-ordered data have weak dependence over time. In view of recent work of \cite{wangshao19}, some trimming might be needed in forming the U-statistic based objective function. 
In addition, nonparametric bootstrap-based confidence interval performs well in simulation but our theory can only cover the parametric bootstrap. A complete theoretical justification for nonparametric bootstrap would be interesting. At last, our method and theory are limited to the relatively simple model with only one change point.  For the linear  regression model with low-dimensional covariates, see \cite{baiperron98} for a suite of least squares based procedures 
for the estimation of change point locations and the construction of tests that allow inference to be made about the presence of structural change and the number of change points. It would be certainly interesting to extend our U-statistic based approach to the model with multiple change points in mean and also to high-dimensional regression setting.
We leave these important topics for future investigation.

\newpage

\begin{table}
\footnotesize
	\centering

	\begin{tabular}{ccccccc|cccccc}
\hline
	&\multicolumn{6}{c|}{Sparse}&\multicolumn{6}{c}{Dense}\\\cline{2-13}
	$\Sigma$ & $p$   & $n$   &             & Bias   & Variance & MSE    & $p$   & $n$   &             & Bias   & Variance & MSE    \\ \hline
\multirow{12}{*}{ID}       & \multirow{6}{*}{50}  & \multirow{2}{*}{50}  & $\hat{\tau}_U$    & 792.8  & 309.6    & 372.4  & \multirow{6}{*}{50}  & \multirow{2}{*}{50}  & $\hat{\tau}_U$    & 796.9  & 316.6    & 380.1  \\
         &       &       & $\hat{\tau}_{LS}$ & 763.5  & 401.6    & 459.9  &       &       & $\hat{\tau}_{LS}$ & 771.4  & 411.1    & 470.6  \\ \cline{4-7} \cline{10-13} 
         &       & \multirow{2}{*}{100} & $\hat{\tau}_U$    & 86.0   & 31.1     & 31.8   &       & \multirow{2}{*}{100} & $\hat{\tau}_U$    & 86.9   & 31.6     & 32.3   \\
         &       &       & $\hat{\tau}_{LS}$ & 69.8   & 41.0     & 41.5   &       &       & $\hat{\tau}_{LS}$ & 71.8   & 38.9     & 39.4   \\ \cline{4-7} \cline{10-13} 
         &       & \multirow{2}{*}{200} & $\hat{\tau}_U$    & 4.9    & 0.8      & 0.8    &       & \multirow{2}{*}{200} & $\hat{\tau}_U$    & 4.6    & 0.8      & 0.8    \\
         &       &       & $\hat{\tau}_{LS}$ & 3.0    & 0.7      & 0.7    &       &       & $\hat{\tau}_{LS}$ & 2.6    & 0.8      & 0.8    \\ \cline{2-13} 
         & \multirow{6}{*}{150} & \multirow{2}{*}{50}  & $\hat{\tau}_U$    & 77.1   & 22.1     & 22.7   & \multirow{6}{*}{150} & \multirow{2}{*}{50}  & $\hat{\tau}_U$    & 71.0   & 19.9     & 20.4   \\
         &       &       & $\hat{\tau}_{LS}$ & 48.4   & 22.3     & 22.5   &       &       & $\hat{\tau}_{LS}$ & 48.4   & 22.5     & 22.8   \\ \cline{4-7} \cline{10-13} 
         &       & \multirow{2}{*}{100} & $\hat{\tau}_U$    & 4.6    & 0.6      & 0.6    &       & \multirow{2}{*}{100} & $\hat{\tau}_U$    & 5.8    & 0.6      & 0.6    \\
         &       &       & $\hat{\tau}_{LS}$ & 1.9    & 0.6      & 0.6    &       &       & $\hat{\tau}_{LS}$ & 2.4    & 0.6      & 0.6    \\ \cline{4-7} \cline{10-13} 
         &       & \multirow{2}{*}{200} & $\hat{\tau}_U$    & 0.6    & 0.1      & 0.1    &       & \multirow{2}{*}{200} & $\hat{\tau}_U$    & 0.7    & 0.1      & 0.1    \\
         &       &       & $\hat{\tau}_{LS}$ & 0.1    & 0.1      & 0.1    &       &       & $\hat{\tau}_{LS}$ & 0.3    & 0.1      & 0.1    \\ \hline
\multirow{12}{*}{AR}      & \multirow{6}{*}{50}  & \multirow{2}{*}{50}  & $\hat{\tau}_U$    & 1847.9 & 629.2    & 970.6  & \multirow{6}{*}{50}  & \multirow{2}{*}{50}  & $\hat{\tau}_U$    & 1871.0 & 626.9    & 977.0  \\
         &       &       & $\hat{\tau}_{LS}$ & 1871.3 & 860.4    & 1210.6 &       &       & $\hat{\tau}_{LS}$ & 1926.5 & 861.1    & 1232.1 \\ \cline{4-7} \cline{10-13} 
         &       & \multirow{2}{*}{100} & $\hat{\tau}_U$    & 920.7  & 379.7    & 464.4  &       & \multirow{2}{*}{100} & $\hat{\tau}_U$    & 821.0  & 345.9    & 413.3  \\
         &       &       & $\hat{\tau}_{LS}$ & 1011.4 & 578.5    & 680.7  &       &       & $\hat{\tau}_{LS}$ & 924.1  & 538.5    & 623.9  \\ \cline{4-7} \cline{10-13} 
         &       & \multirow{2}{*}{200} & $\hat{\tau}_U$    & 128.6  & 55.6     & 57.2   &       & \multirow{2}{*}{200} & $\hat{\tau}_U$    & 46.0   & 18.4     & 18.6   \\
         &       &       & $\hat{\tau}_{LS}$ & 134.0  & 86.1     & 87.9   &       &       & $\hat{\tau}_{LS}$ & 49.4   & 31.4     & 31.6   \\ \cline{2-13} 
         & \multirow{6}{*}{150} & \multirow{2}{*}{50}  & $\hat{\tau}_U$    & 907.7  & 356.7    & 439.1  & \multirow{6}{*}{150} & \multirow{2}{*}{50}  & $\hat{\tau}_U$    & 804.7  & 320.3    & 385.0  \\
         &       &       & $\hat{\tau}_{LS}$ & 909.4  & 477.2    & 559.9  &       &       & $\hat{\tau}_{LS}$ & 799.0  & 419.9    & 483.7  \\ \cline{4-7} \cline{10-13} 
         &       & \multirow{2}{*}{100} & $\hat{\tau}_U$    & 161.4  & 66.0     & 68.6   &       & \multirow{2}{*}{100} & $\hat{\tau}_U$    & 81.4   & 32.0     & 32.6   \\
         &       &       & $\hat{\tau}_{LS}$ & 152.6  & 90.9     & 93.3   &       &       & $\hat{\tau}_{LS}$ & 75.0   & 44.9     & 45.5   \\ \cline{4-7} \cline{10-13} 
         &       & \multirow{2}{*}{200} & $\hat{\tau}_U$    & 7.0    & 1.6      & 1.6    &       & \multirow{2}{*}{200} & $\hat{\tau}_U$    & 1.8    & 0.3      & 0.3    \\
         &       &       & $\hat{\tau}_{LS}$ & 4.1    & 1.6      & 1.6    &       &       & $\hat{\tau}_{LS}$ & 1.3    & 0.3      & 0.3    \\ \hline
\multirow{12}{*}{BD}   & \multirow{6}{*}{50}  & \multirow{2}{*}{50}  & $\hat{\tau}_U$    & 1215.9 & 462.6    & 610.4  & \multirow{6}{*}{50}  & \multirow{2}{*}{50}  & $\hat{\tau}_U$    & 1175.5 & 446.0    & 584.1  \\
         &       &       & $\hat{\tau}_{LS}$ & 1196.9 & 607.4    & 750.6  &       &       & $\hat{\tau}_{LS}$ & 1147.1 & 578.1    & 709.7  \\ \cline{4-7} \cline{10-13} 
         &       & \multirow{2}{*}{100} & $\hat{\tau}_U$    & 318.7  & 131.5    & 141.7  &       & \multirow{2}{*}{100} & $\hat{\tau}_U$    & 221.8  & 90.9     & 95.9   \\
         &       &       & $\hat{\tau}_{LS}$ & 299.5  & 177.5    & 186.4  &       &       & $\hat{\tau}_{LS}$ & 222.5  & 132.3    & 137.2  \\ \cline{4-7} \cline{10-13} 
         &       & \multirow{2}{*}{200} & $\hat{\tau}_U$    & 16.0   & 4.6      & 4.7    &       & \multirow{2}{*}{200} & $\hat{\tau}_U$    & 5.1    & 1.0      & 1.0    \\
         &       &       & $\hat{\tau}_{LS}$ & 10.4   & 4.3      & 4.3    &       &       & $\hat{\tau}_{LS}$ & 3.0    & 1.0      & 1.0    \\ \cline{2-13} 
         & \multirow{6}{*}{150} & \multirow{2}{*}{50}  & $\hat{\tau}_U$    & 254.0  & 95.8     & 102.3  & \multirow{6}{*}{150} & \multirow{2}{*}{50}  & $\hat{\tau}_U$    & 181.2  & 66.1     & 69.4   \\
         &       &       & $\hat{\tau}_{LS}$ & 209.0  & 113.0    & 117.3  &       &       & $\hat{\tau}_{LS}$ & 139.9  & 74.2     & 76.1   \\ \cline{4-7} \cline{10-13} 
         &       & \multirow{2}{*}{100} & $\hat{\tau}_U$    & 19.2   & 3.6      & 3.7    &       & \multirow{2}{*}{100} & $\hat{\tau}_U$    & 7.8    & 1.4      & 1.4    \\
         &       &       & $\hat{\tau}_{LS}$ & 11.4   & 3.8      & 3.8    &       &       & $\hat{\tau}_{LS}$ & 3.6    & 1.2      & 1.2    \\ \cline{4-7} \cline{10-13} 
         &       & \multirow{2}{*}{200} & $\hat{\tau}_U$    & 1.8    & 0.3      & 0.3    &       & \multirow{2}{*}{200} & $\hat{\tau}_U$    & 1.1    & 0.1      & 0.1    \\
         &       &       & $\hat{\tau}_{LS}$ & 0.6    & 0.3      & 0.3    &       &       & $\hat{\tau}_{LS}$ & 0.6    & 0.1      & 0.1    \\ \hline
\multirow{12}{*}{CS}       & \multirow{6}{*}{50}  & \multirow{2}{*}{50}  & $\hat{\tau}_U$    & 2359.6 & 706.3    & 1263.1 & \multirow{6}{*}{50}  & \multirow{2}{*}{50}  & $\hat{\tau}_U$    & 2334.9 & 701.0    & 1246.1 \\
         &       &       & $\hat{\tau}_{LS}$ & 2453.7 & 982.6    & 1584.6 &       &       & $\hat{\tau}_{LS}$ & 2446.0 & 983.2    & 1581.5 \\ \cline{4-7} \cline{10-13} 
         &       & \multirow{2}{*}{100} & $\hat{\tau}_U$    & 1623.2 & 597.2    & 860.6  &       & \multirow{2}{*}{100} & $\hat{\tau}_U$    & 1641.8 & 603.7    & 873.2  \\
         &       &       & $\hat{\tau}_{LS}$ & 1881.3 & 963.0    & 1316.9 &       &       & $\hat{\tau}_{LS}$ & 1848.8 & 953.0    & 1294.8 \\ \cline{4-7} \cline{10-13} 
         &       & \multirow{2}{*}{200} & $\hat{\tau}_U$    & 604.7  & 274.7    & 311.2  &       & \multirow{2}{*}{200} & $\hat{\tau}_U$    & 549.8  & 250.8    & 281.0  \\
         &       &       & $\hat{\tau}_{LS}$ & 813.5  & 528.5    & 594.7  &       &       & $\hat{\tau}_{LS}$ & 744.2  & 482.7    & 538.1  \\ \cline{2-13} 
         & \multirow{6}{*}{150} & \multirow{2}{*}{50}  & $\hat{\tau}_U$    & 2202.7 & 684.4    & 1169.6 & \multirow{6}{*}{150} & \multirow{2}{*}{50}  & $\hat{\tau}_U$    & 2194.8 & 684.6    & 1166.3 \\
         &       &       & $\hat{\tau}_{LS}$ & 2279.0 & 949.4    & 1468.7 &       &       & $\hat{\tau}_{LS}$ & 2301.1 & 957.4    & 1486.8 \\ \cline{4-7} \cline{10-13} 
         &       & \multirow{2}{*}{100} & $\hat{\tau}_U$    & 1508.2 & 568.9    & 796.3  &       & \multirow{2}{*}{100} & $\hat{\tau}_U$    & 1487.9 & 566.2    & 787.6  \\
         &       &       & $\hat{\tau}_{LS}$ & 1747.2 & 909.5    & 1214.7 &       &       & $\hat{\tau}_{LS}$ & 1757.0 & 913.6    & 1222.3 \\ \cline{4-7} \cline{10-13} 
         &       & \multirow{2}{*}{200} & $\hat{\tau}_U$    & 561.6  & 257.1    & 288.6  &       & \multirow{2}{*}{200} & $\hat{\tau}_U$    & 525.6  & 242.8    & 270.4  \\
         &       &       & $\hat{\tau}_{LS}$ & 775.7  & 503.5    & 563.7  &       &       & $\hat{\tau}_{LS}$ & 737.7  & 485.7    & 540.1  \\ \hline
	\end{tabular}
	\caption{Finite sample performance of location estimates ($\hat{\tau}_U$ and $\hat{\tau}_{LS}$) with $\tau_0 = 0.2$ (in $10^{-4}$)} \label{tab:point0.2}
\end{table}

\begin{table}
\footnotesize
	\centering

	\begin{tabular}{ccccccc|cccccc}
	\hline
	&\multicolumn{6}{c|}{Sparse}&\multicolumn{6}{c}{Dense}\\\cline{2-13}
	$\Sigma$ & $p$   & $n$   &             & Bias  & Variance & MSE   & $p$   & $n$   &             & Bias  & Variance & MSE   \\ \hline
\multirow{12}{*}{ID}       & \multirow{6}{*}{50}  & \multirow{2}{*}{50}  & $\hat{\tau}_U$    & 3.6   & 80.0     & 80.0  & \multirow{6}{*}{50}  & \multirow{2}{*}{50}  & $\hat{\tau}_U$    & 5.2   & 81.4     & 81.4  \\
         &       &       & $\hat{\tau}_{LS}$ & 2.8   & 110.5    & 110.5 &       &       & $\hat{\tau}_{LS}$ & -8.2  & 112.9    & 113.0 \\ \cline{4-7} \cline{10-13} 
         &       & \multirow{2}{*}{100} & $\hat{\tau}_U$    & 1.7   & 6.1      & 6.1   &       & \multirow{2}{*}{100} & $\hat{\tau}_U$    & -3.2  & 6.4      & 6.4   \\
         &       &       & $\hat{\tau}_{LS}$ & 2.1   & 6.6      & 6.6   &       &       & $\hat{\tau}_{LS}$ & -3.2  & 6.9      & 6.9   \\ \cline{4-7} \cline{10-13} 
         &       & \multirow{2}{*}{200} & $\hat{\tau}_U$    & 0.0   & 0.5      & 0.5   &       & \multirow{2}{*}{200} & $\hat{\tau}_U$    & 0.0   & 0.5      & 0.5   \\
         &       &       & $\hat{\tau}_{LS}$ & 0.0   & 0.5      & 0.5   &       &       & $\hat{\tau}_{LS}$ & -0.1  & 0.5      & 0.5   \\ \cline{2-13} 
         & \multirow{6}{*}{150} & \multirow{2}{*}{50}  & $\hat{\tau}_U$    & 2.3   & 3.2      & 3.2   & \multirow{6}{*}{150} & \multirow{2}{*}{50}  & $\hat{\tau}_U$    & -0.7  & 3.1      & 3.1   \\
         &       &       & $\hat{\tau}_{LS}$ & 2.0   & 3.3      & 3.3   &       &       & $\hat{\tau}_{LS}$ & -0.4  & 3.2      & 3.2   \\ \cline{4-7} \cline{10-13} 
         &       & \multirow{2}{*}{100} & $\hat{\tau}_U$    & 0.3   & 0.3      & 0.3   &       & \multirow{2}{*}{100} & $\hat{\tau}_U$    & 0.5   & 0.3      & 0.3   \\
         &       &       & $\hat{\tau}_{LS}$ & 0.2   & 0.3      & 0.3   &       &       & $\hat{\tau}_{LS}$ & 0.5   & 0.3      & 0.3   \\ \cline{4-7} \cline{10-13} 
         &       & \multirow{2}{*}{200} & $\hat{\tau}_U$    & -0.2  & 0.0      & 0.0   &       & \multirow{2}{*}{200} & $\hat{\tau}_U$    & -0.1  & 0.0      & 0.0   \\
         &       &       & $\hat{\tau}_{LS}$ & -0.2  & 0.0      & 0.0   &       &       & $\hat{\tau}_{LS}$ & -0.1  & 0.0      & 0.0   \\ \hline
\multirow{12}{*}{AR}       & \multirow{6}{*}{50}  & \multirow{2}{*}{50}  & $\hat{\tau}_U$    & -2.1  & 342.2    & 342.2 & \multirow{6}{*}{50}  & \multirow{2}{*}{50}  & $\hat{\tau}_U$    & 4.7   & 339.8    & 339.7 \\
         &       &       & $\hat{\tau}_{LS}$ & -8.9  & 521.1    & 521.1 &       &       & $\hat{\tau}_{LS}$ & 1.5   & 519.7    & 519.7 \\ \cline{4-7} \cline{10-13} 
         &       & \multirow{2}{*}{100} & $\hat{\tau}_U$    & -23.0 & 127.2    & 127.2 &       & \multirow{2}{*}{100} & $\hat{\tau}_U$    & 0.9   & 82.1     & 82.1  \\
         &       &       & $\hat{\tau}_{LS}$ & -25.2 & 202.1    & 202.2 &       &       & $\hat{\tau}_{LS}$ & 9.4   & 143.5    & 143.5 \\ \cline{4-7} \cline{10-13} 
         &       & \multirow{2}{*}{200} & $\hat{\tau}_U$    & 4.2   & 12.4     & 12.4  &       & \multirow{2}{*}{200} & $\hat{\tau}_U$    & 0.7   & 2.6      & 2.6   \\
         &       &       & $\hat{\tau}_{LS}$ & 4.1   & 14.6     & 14.6  &       &       & $\hat{\tau}_{LS}$ & 0.4   & 2.7      & 2.7   \\ \cline{2-13} 
         & \multirow{6}{*}{150} & \multirow{2}{*}{50}  & $\hat{\tau}_U$    & -1.2  & 90.4     & 90.4  & \multirow{6}{*}{150} & \multirow{2}{*}{50}  & $\hat{\tau}_U$    & 2.2   & 59.4     & 59.4  \\
         &       &       & $\hat{\tau}_{LS}$ & -11.5 & 133.7    & 133.7 &       &       & $\hat{\tau}_{LS}$ & -3.4  & 90.7     & 90.7  \\ \cline{4-7} \cline{10-13} 
         &       & \multirow{2}{*}{100} & $\hat{\tau}_U$    & -1.7  & 10.8     & 10.8  &       & \multirow{2}{*}{100} & $\hat{\tau}_U$    & 0.5   & 2.9      & 2.9   \\
         &       &       & $\hat{\tau}_{LS}$ & -1.2  & 12.1     & 12.1  &       &       & $\hat{\tau}_{LS}$ & 0.7   & 3.0      & 3.0   \\ \cline{4-7} \cline{10-13} 
         &       & \multirow{2}{*}{200} & $\hat{\tau}_U$    & -0.5  & 0.9      & 0.9   &       & \multirow{2}{*}{200} & $\hat{\tau}_U$    & -0.5  & 0.1      & 0.1   \\
         &       &       & $\hat{\tau}_{LS}$ & -0.5  & 0.9      & 0.9   &       &       & $\hat{\tau}_{LS}$ & -0.4  & 0.1      & 0.1   \\ \hline
\multirow{12}{*}{BD}   & \multirow{6}{*}{50}  & \multirow{2}{*}{50}  & $\hat{\tau}_U$    & -7.4  & 170.4    & 170.4 & \multirow{6}{*}{50}  & \multirow{2}{*}{50}  & $\hat{\tau}_U$    & 10.7  & 148.5    & 148.5 \\
         &       &       & $\hat{\tau}_{LS}$ & -0.8  & 245.5    & 245.5 &       &       & $\hat{\tau}_{LS}$ & -1.8  & 222.2    & 222.2 \\ \cline{4-7} \cline{10-13} 
         &       & \multirow{2}{*}{100} & $\hat{\tau}_U$    & 0.8   & 31.2     & 31.1  &       & \multirow{2}{*}{100} & $\hat{\tau}_U$    & 0.5   & 15.4     & 15.4  \\
         &       &       & $\hat{\tau}_{LS}$ & -0.4  & 38.5     & 38.5  &       &       & $\hat{\tau}_{LS}$ & -0.1  & 19.2     & 19.2  \\ \cline{4-7} \cline{10-13} 
         &       & \multirow{2}{*}{200} & $\hat{\tau}_U$    & 0.3   & 2.4      & 2.4   &       & \multirow{2}{*}{200} & $\hat{\tau}_U$    & -0.3  & 0.6      & 0.6   \\
         &       &       & $\hat{\tau}_{LS}$ & 0.3   & 2.4      & 2.4   &       &       & $\hat{\tau}_{LS}$ & -0.3  & 0.6      & 0.6   \\ \cline{2-13} 
         & \multirow{6}{*}{150} & \multirow{2}{*}{50}  & $\hat{\tau}_U$    & -2.7  & 15.6     & 15.6  & \multirow{6}{*}{150} & \multirow{2}{*}{50}  & $\hat{\tau}_U$    & 1.0   & 6.9      & 6.9   \\
         &       &       & $\hat{\tau}_{LS}$ & -1.6  & 18.8     & 18.8  &       &       & $\hat{\tau}_{LS}$ & 0.6   & 7.6      & 7.6   \\ \cline{4-7} \cline{10-13} 
         &       & \multirow{2}{*}{100} & $\hat{\tau}_U$    & 0.1   & 1.7      & 1.7   &       & \multirow{2}{*}{100} & $\hat{\tau}_U$    & 0.1   & 0.6      & 0.6   \\
         &       &       & $\hat{\tau}_{LS}$ & 0.0   & 1.7      & 1.7   &       &       & $\hat{\tau}_{LS}$ & 0.1   & 0.6      & 0.6   \\ \cline{4-7} \cline{10-13} 
         &       & \multirow{2}{*}{200} & $\hat{\tau}_U$    & 0.0   & 0.2      & 0.2   &       & \multirow{2}{*}{200} & $\hat{\tau}_U$    & 0.3   & 0.1      & 0.1   \\
         &       &       & $\hat{\tau}_{LS}$ & 0.1   & 0.2      & 0.2   &       &       & $\hat{\tau}_{LS}$ & 0.3   & 0.1      & 0.1   \\ \hline
\multirow{12}{*}{CS}       & \multirow{6}{*}{50}  & \multirow{2}{*}{50}  & $\hat{\tau}_U$    & -13.0 & 501.9    & 501.9 & \multirow{6}{*}{50}  & \multirow{2}{*}{50}  & $\hat{\tau}_U$    & -18.1 & 493.2    & 493.3 \\
         &       &       & $\hat{\tau}_{LS}$ & -13.8 & 767.3    & 767.3 &       &       & $\hat{\tau}_{LS}$ & -12.8 & 756.6    & 756.5 \\ \cline{4-7} \cline{10-13} 
         &       & \multirow{2}{*}{100} & $\hat{\tau}_U$    & 6.7   & 301.3    & 301.3 &       & \multirow{2}{*}{100} & $\hat{\tau}_U$    & -4.1  & 279.7    & 279.7 \\
         &       &       & $\hat{\tau}_{LS}$ & 2.9   & 572.7    & 572.7 &       &       & $\hat{\tau}_{LS}$ & 12.7  & 541.1    & 541.1 \\ \cline{4-7} \cline{10-13} 
         &       & \multirow{2}{*}{200} & $\hat{\tau}_U$    & 6.4   & 63.1     & 63.1  &       & \multirow{2}{*}{200} & $\hat{\tau}_U$    & 7.7   & 57.2     & 57.2  \\
         &       &       & $\hat{\tau}_{LS}$ & 10.9  & 132.1    & 132.1 &       &       & $\hat{\tau}_{LS}$ & 11.1  & 117.5    & 117.5 \\ \cline{2-13} 
         & \multirow{6}{*}{150} & \multirow{2}{*}{50}  & $\hat{\tau}_U$    & 19.3  & 437.2    & 437.2 & \multirow{6}{*}{150} & \multirow{2}{*}{50}  & $\hat{\tau}_U$    & -0.4  & 426.2    & 426.2 \\
         &       &       & $\hat{\tau}_{LS}$ & 12.4  & 684.4    & 684.3 &       &       & $\hat{\tau}_{LS}$ & -2.1  & 685.6    & 685.6 \\ \cline{4-7} \cline{10-13} 
         &       & \multirow{2}{*}{100} & $\hat{\tau}_U$    & -0.9  & 245.2    & 245.1 &       & \multirow{2}{*}{100} & $\hat{\tau}_U$    & 6.8   & 235.5    & 235.5 \\
         &       &       & $\hat{\tau}_{LS}$ & -7.1  & 471.3    & 471.2 &       &       & $\hat{\tau}_{LS}$ & 2.6   & 464.5    & 464.4 \\ \cline{4-7} \cline{10-13} 
         &       & \multirow{2}{*}{200} & $\hat{\tau}_U$    & 2.9   & 57.5     & 57.5  &       & \multirow{2}{*}{200} & $\hat{\tau}_U$    & 4.1   & 49.5     & 49.5  \\
         &       &       & $\hat{\tau}_{LS}$ & 7.4   & 114.3    & 114.3 &       &       & $\hat{\tau}_{LS}$ & 12.9  & 108.5    & 108.5 \\ \hline
	\end{tabular}
	\caption{Finite sample performance of location estimates ($\hat{\tau}_U$ and $\hat{\tau}_{LS}$) with $\tau_0 = 0.5$ (in $10^{-4}$)}\label{tab:point0.5}
\end{table}

\begin{table}[h!]
\footnotesize
\begin{tabular}{c|ccc|ccc|ccc|ccc}
\hline
               & \multicolumn{6}{c|}{$p = 50$}                           & \multicolumn{6}{c}{$p = 150$}                          \\ \cline{2-13}
               & \multicolumn{3}{c|}{Dense} & \multicolumn{3}{c|}{Sparse} & \multicolumn{3}{c|}{Dense} & \multicolumn{3}{c}{Sparse} \\ \hline
$\tau_0=0.2$   &     &     $n$    &        &    &     $n$      &        &    &    $n$       &        &     &     $n$     &        \\
               & 50      & $100$  & $200$  & $50$    & $100$   & $200$  & $50$    & $100$   & $200$  & $50$    & $100$   & $200$  \\ \hline
$U_1$          & 0.765   & 0.740  & 0.611  & 0.754   & 0.792   & 0.579  & 0.881   & 0.744   & 0.833  & 0.879   & 0.780   & 0.827  \\
Length         & 0.113   & 0.076  & 0.008  & 0.113   & 0.076   & 0.008  & 0.049   & 0.015   & 0.004  & 0.049   & 0.015   & 0.004  \\ \cline{2-13} 
$U_2$          & 0.797   & 0.768  & 0.685  & 0.798   & 0.798   & 0.661  & 0.862   & 0.800   & 0.833  & 0.846   & 0.833   & 0.827  \\
Length         & 0.172   & 0.098  & 0.009  & 0.164   & 0.096   & 0.009  & 0.067   & 0.018   & 0.004  & 0.068   & 0.018   & 0.004  \\ \cline{2-13} 
$U_3$          & 0.727   & 0.732  & 0.917  & 0.722   & 0.787   & 0.920  & 0.738   & 0.859   & 0.966  & 0.718   & 0.889   & 0.954  \\
Length         & 0.071   & 0.062  & 0.022  & 0.072   & 0.061   & 0.023  & 0.008   & 0.014   & 0.009  & 0.008   & 0.014   & 0.010  \\ \cline{2-13} 
$U_4$          & 0.886   & 0.902  & 0.961  & 0.894   & 0.923   & 0.961  & 0.959   & 0.975   & 0.977  & 0.955   & 0.967   & 0.969  \\
Length         & 0.460   & 0.345  & 0.035  & 0.268   & 0.223   & 0.035  & 0.174   & 0.037   & 0.012  & 0.141   & 0.037   & 0.012  \\ \cline{2-13} 
$U_5$          & 0.831   & 0.812  & 0.930  & 0.837   & 0.844   & 0.929  & 0.913   & 0.937   & 0.971  & 0.907   & 0.947   & 0.961  \\
Length         & 0.194   & 0.097  & 0.025  & 0.172   & 0.094   & 0.025  & 0.093   & 0.031   & 0.011  & 0.092   & 0.030   & 0.011  \\ \cline{2-13} 
$LS_1$         & 0.601   & 0.626  & 0.885  & 0.575   & 0.677   & 0.874  & 0.780   & 0.827   & 0.893  & 0.776   & 0.840   & 0.890  \\
Length         & 0.035   & 0.034  & 0.017  & 0.035   & 0.034   & 0.017  & 0.020   & 0.010   & 0.005  & 0.020   & 0.010   & 0.005  \\ \cline{2-13} 
$LS_2$         & 0.643   & 0.607  & 0.888  & 0.606   & 0.657   & 0.877  & 0.686   & 0.732   & 0.889  & 0.671   & 0.779   & 0.873  \\
Length         & 0.034   & 0.032  & 0.017  & 0.034   & 0.031   & 0.017  & 0.001   & 0.001   & 0.004  & 0.000   & 0.000   & 0.004  \\ \hline
 $\tau_0 = 0.5$ & \multicolumn{3}{c|}{$n$} & \multicolumn{3}{c|}{$n$} & \multicolumn{3}{c|}{$n$} & \multicolumn{3}{c}{$n$} \\
               & $50$    & $100$  & $200$  & $50$    & $100$   & $200$  & $50$    & $100$   & $200$  & $50$    & $100$   & $200$  \\ \hline
$U_1$          & 0.761   & 0.789  & 0.617  & 0.742   & 0.762   & 0.605  & 0.763   & 0.818   & 0.858  & 0.792   & 0.802   & 0.867  \\
Length         & 0.072   & 0.049  & 0.005  & 0.072   & 0.049   & 0.005  & 0.031   & 0.010   & 0.003  & 0.031   & 0.010   & 0.003  \\ \cline{2-13} 
$U_2$          & 0.820   & 0.755  & 0.625  & 0.772   & 0.754   & 0.611  & 0.830   & 0.818   & 0.858  & 0.851   & 0.803   & 0.867  \\
Length         & 0.098   & 0.056  & 0.006  & 0.093   & 0.058   & 0.006  & 0.037   & 0.011   & 0.003  & 0.037   & 0.010   & 0.003  \\ \cline{2-13} 
$U_3$          & 0.839   & 0.850  & 0.940  & 0.818   & 0.844   & 0.952  & 0.814   & 0.941   & 0.974  & 0.836   & 0.944   & 0.974  \\
Length         & 0.073   & 0.064  & 0.023  & 0.072   & 0.065   & 0.023  & 0.012   & 0.018   & 0.010  & 0.012   & 0.017   & 0.010  \\ \cline{2-13} 
$U_4$          & 0.950   & 0.958  & 0.967  & 0.957   & 0.945   & 0.970  & 0.968   & 0.972   & 0.974  & 0.978   & 0.972   & 0.976  \\
Length         & 0.250   & 0.166  & 0.030  & 0.240   & 0.170   & 0.029  & 0.071   & 0.026   & 0.010  & 0.070   & 0.026   & 0.010  \\ \cline{2-13} 
$U_5$          & 0.916   & 0.887  & 0.946  & 0.903   & 0.872   & 0.949  & 0.963   & 0.962   & 0.972  & 0.968   & 0.961   & 0.971  \\
Length         & 0.136   & 0.081  & 0.024  & 0.133   & 0.083   & 0.024  & 0.052   & 0.021   & 0.010  & 0.052   & 0.020   & 0.010  \\ \cline{2-13} 
$LS_1$         & 0.781   & 0.777  & 0.911  & 0.767   & 0.770   & 0.924  & 0.838   & 0.895   & 0.914  & 0.876   & 0.891   & 0.929  \\
Length         & 0.055   & 0.043  & 0.019  & 0.055   & 0.045   & 0.019  & 0.020   & 0.010   & 0.005  & 0.020   & 0.010   & 0.005  \\ \cline{2-13} 
$LS_2$         & 0.750   & 0.782  & 0.925  & 0.736   & 0.772   & 0.933  & 0.763   & 0.837   & 0.947  & 0.789   & 0.827   & 0.951  \\
Length         & 0.041   & 0.040  & 0.019  & 0.041   & 0.041   & 0.019  & 0.000   & 0.002   & 0.008  & 0.000   & 0.002   & 0.007  \\ \hline
\end{tabular}
\caption{Coverage probability and average length of seven  confidence intervals for the ID covariance model} \label{tab:CI_ID}
\end{table}

\begin{table}[h!]
\footnotesize
\begin{tabular}{c|ccc|ccc|ccc|ccc}
\hline
               & \multicolumn{6}{c|}{$p = 50$}                           & \multicolumn{6}{c}{$p = 150$}                          \\ \cline{2-13}
               & \multicolumn{3}{c|}{Dense} & \multicolumn{3}{c|}{Sparse} & \multicolumn{3}{c|}{Dense} & \multicolumn{3}{c}{Sparse} \\ \hline
$\tau_0=0.2$   &     &     $n$    &        &    &     $n$      &        &    &    $n$       &        &     &     $n$     &        \\
               & 50      & $100$  & $200$  & $50$    & $100$   & $200$  & $50$    & $100$   & $200$  & $50$    & $100$   & $200$  \\ \hline
$U_1$          & 0.740   & 0.751  & 0.820  & 0.773   & 0.713   & 0.622  & 0.801   & 0.843   & 0.898  & 0.780   & 0.791   & 0.725  \\
Length         & 0.490   & 0.330  & 0.036  & 0.315   & 0.212   & 0.023  & 0.253   & 0.077   & 0.013  & 0.141   & 0.044   & 0.012  \\ \cline{2-13} 
$U_2$          & 0.676   & 0.639  & 0.802  & 0.688   & 0.650   & 0.600  & 0.760   & 0.837   & 0.872  & 0.751   & 0.768   & 0.650  \\
Length         & 0.622   & 0.378  & 0.041  & 0.579   & 0.360   & 0.027  & 0.346   & 0.094   & 0.014  & 0.183   & 0.052   & 0.013  \\ \cline{2-13} 
$U_3$          & 0.627   & 0.644  & 0.910  & 0.753   & 0.729   & 0.911  & 0.564   & 0.781   & 0.931  & 0.712   & 0.845   & 0.890  \\
Length         & 0.494   & 0.344  & 0.061  & 0.381   & 0.287   & 0.086  & 0.090   & 0.044   & 0.014  & 0.092   & 0.055   & 0.029  \\ \cline{2-13} 
$U_4$          & 0.666   & 0.719  & 0.970  & 0.825   & 0.817   & 0.971  & 0.763   & 0.939   & 0.995  & 0.906   & 0.950   & 0.970  \\
Length         & 0.743   & 0.686  & 0.297  & 0.661   & 0.591   & 0.203  & 0.627   & 0.335   & 0.057  & 0.415   & 0.171   & 0.064  \\ \cline{2-13} 
$U_5$          & 0.645   & 0.682  & 0.943  & 0.784   & 0.783   & 0.933  & 0.726   & 0.888   & 0.970  & 0.858   & 0.910   & 0.922  \\
Length         & 0.570   & 0.489  & 0.084  & 0.456   & 0.397   & 0.108  & 0.344   & 0.099   & 0.023  & 0.261   & 0.090   & 0.040  \\ \cline{2-13} 
$LS_1$         & 0.220   & 0.274  & 0.720  & 0.357   & 0.352   & 0.605  & 0.367   & 0.537   & 0.816  & 0.464   & 0.531   & 0.610  \\
Length         & 0.032   & 0.028  & 0.018  & 0.038   & 0.035   & 0.020  & 0.020   & 0.020   & 0.005  & 0.020   & 0.010   & 0.005  \\ \cline{2-13} 
$LS_2$         & 0.430   & 0.429  & 0.852  & 0.584   & 0.594   & 0.861  & 0.421   & 0.662   & 0.898  & 0.586   & 0.774   & 0.960  \\
Length         & 0.110   & 0.088  & 0.034  & 0.140   & 0.122   & 0.064  & 0.034   & 0.022   & 0.010  & 0.046   & 0.037   & 0.023  \\ \hline
 $\tau_0 = 0.5$ & \multicolumn{3}{c|}{$n$} & \multicolumn{3}{c|}{$n$} & \multicolumn{3}{c|}{$n$} & \multicolumn{3}{c}{$n$} \\
               & $50$    & $100$  & $200$  & $50$    & $100$   & $200$  & $50$    & $100$   & $200$  & $50$    & $100$   & $200$  \\ \hline
$U_1$          & 0.780   & 0.771  & 0.817  & 0.723   & 0.720   & 0.614  & 0.857   & 0.824   & 0.904  & 0.769   & 0.779   & 0.742  \\
Length         & 0.315   & 0.212  & 0.023  & 0.313   & 0.212   & 0.023  & 0.141   & 0.044   & 0.012  & 0.141   & 0.044   & 0.012  \\ \cline{2-13} 
$U_2$          & 0.711   & 0.719  & 0.781  & 0.648   & 0.648   & 0.586  & 0.856   & 0.910   & 0.862  & 0.749   & 0.757   & 0.673  \\
Length         & 0.356   & 0.231  & 0.025  & 0.300   & 0.223   & 0.027  & 0.163   & 0.048   & 0.013  & 0.172   & 0.052   & 0.013  \\ \cline{2-13} 
$U_3$          & 0.754   & 0.766  & 0.938  & 0.694   & 0.738   & 0.898  & 0.780   & 0.912   & 0.941  & 0.705   & 0.829   & 0.898  \\
Length         & 0.363   & 0.245  & 0.049  & 0.333   & 0.266   & 0.087  & 0.069   & 0.034   & 0.017  & 0.090   & 0.055   & 0.029  \\ \cline{2-13} 
$U_4$          & 0.824   & 0.854  & 0.995  & 0.782   & 0.830   & 0.961  & 0.949   & 0.987   & 0.999  & 0.905   & 0.953   & 0.973  \\
Length         & 0.684   & 0.610  & 0.197  & 0.556   & 0.519   & 0.204  & 0.416   & 0.162   & 0.064  & 0.390   & 0.171   & 0.064  \\ \cline{2-13} 
$U_5$          & 0.779   & 0.812  & 0.949  & 0.721   & 0.785   & 0.923  & 0.910   & 0.956   & 0.970  & 0.863   & 0.886   & 0.928  \\
Length         & 0.454   & 0.374  & 0.058  & 0.397   & 0.361   & 0.110  & 0.189   & 0.059   & 0.023  & 0.221   & 0.090   & 0.040  \\ \cline{2-13} 
$LS_1$         & 0.386   & 0.445  & 0.786  & 0.321   & 0.343   & 0.601  & 0.576   & 0.738   & 0.786  & 0.467   & 0.553   & 0.611  \\
Length         & 0.042   & 0.037  & 0.020  & 0.038   & 0.035   & 0.020  & 0.020   & 0.010   & 0.005  & 0.020   & 0.010   & 0.005  \\ \cline{2-13} 
$LS_2$         & 0.561   & 0.595  & 0.901  & 0.539   & 0.596   & 0.852  & 0.679   & 0.852   & 0.904  & 0.595   & 0.771   & 0.882  \\
Length         & 0.115   & 0.097  & 0.032  & 0.139   & 0.118   & 0.065  & 0.036   & 0.020   & 0.010  & 0.046   & 0.038   & 0.023  \\ \hline
\end{tabular}
\caption{Coverage probability and average length of seven  confidence intervals for the AR covariance model} \label{tab:CI_AR}
\end{table}

\begin{table}[h!]
\footnotesize
\begin{tabular}{c|ccc|ccc|ccc|ccc}
\hline
               & \multicolumn{6}{c|}{$p = 50$}                           & \multicolumn{6}{c}{$p = 150$}                          \\ \cline{2-13}
               & \multicolumn{3}{c|}{Dense} & \multicolumn{3}{c|}{Sparse} & \multicolumn{3}{c|}{Dense} & \multicolumn{3}{c}{Sparse} \\ \hline
$\tau_0=0.2$   &     &     $n$    &        &    &     $n$      &        &    &    $n$       &        &     &     $n$     &        \\
               & 50      & $100$  & $200$  & $50$    & $100$   & $200$  & $50$    & $100$   & $200$  & $50$    & $100$   & $200$  \\ \hline
$U_1$          & 0.751   & 0.770  & 0.575  & 0.763   & 0.775   & 0.615  & 0.877   & 0.801   & 0.816  & 0.888   & 0.773   & 0.817  \\
Length         & 0.113   & 0.076  & 0.008  & 0.113   & 0.076   & 0.008  & 0.049   & 0.015   & 0.004  & 0.049   & 0.015   & 0.004  \\ \cline{2-13} 
$U_2$          & 0.801   & 0.796  & 0.648  & 0.785   & 0.785   & 0.665  & 0.856   & 0.861   & 0.816  & 0.866   & 0.827   & 0.817  \\
Length         & 0.179   & 0.097  & 0.009  & 0.170   & 0.099   & 0.009  & 0.067   & 0.017   & 0.004  & 0.066   & 0.018   & 0.004  \\ \cline{2-13} 
$U_3$          & 0.714   & 0.777  & 0.912  & 0.719   & 0.763   & 0.928  & 0.742   & 0.912   & 0.952  & 0.728   & 0.885   & 0.949  \\
Length         & 0.072   & 0.061  & 0.023  & 0.073   & 0.062   & 0.022  & 0.008   & 0.013   & 0.009  & 0.008   & 0.013   & 0.010  \\ \cline{2-13} 
$U_4$          & 0.886   & 0.919  & 0.961  & 0.882   & 0.925   & 0.971  & 0.951   & 0.978   & 0.962  & 0.965   & 0.977   & 0.972  \\
Length         & 0.468   & 0.338  & 0.035  & 0.273   & 0.225   & 0.034  & 0.168   & 0.037   & 0.001  & 0.138   & 0.038   & 0.012  \\ \cline{2-13} 
$U_5$          & 0.827   & 0.833  & 0.925  & 0.824   & 0.837   & 0.943  & 0.900   & 0.960   & 0.962  & 0.921   & 0.951   & 0.962  \\
Length         & 0.199   & 0.096  & 0.025  & 0.175   & 0.096   & 0.025  & 0.092   & 0.030   & 0.001  & 0.091   & 0.030   & 0.011  \\ \cline{2-13} 
$LS_1$         & 0.609   & 0.646  & 0.876  & 0.602   & 0.650   & 0.899  & 0.770   & 0.864   & 0.885  & 0.788   & 0.857   & 0.900  \\
Length         & 0.035   & 0.034  & 0.017  & 0.036   & 0.034   & 0.017  & 0.020   & 0.010   & 0.005  & 0.020   & 0.010   & 0.005  \\ \cline{2-13} 
$LS_2$         & 0.626   & 0.632  & 0.870  & 0.617   & 0.652   & 0.899  & 0.683   & 0.803   & 0.870  & 0.678   & 0.781   & 0.878  \\
Length         & 0.035   & 0.031  & 0.017  & 0.035   & 0.031   & 0.017  & 0.000   & 0.001   & 0.004  & 0.000   & 0.000   & 0.004  \\\hline $\tau_0 = 0.5$ & \multicolumn{3}{c|}{$n$} & \multicolumn{3}{c|}{$n$} & \multicolumn{3}{c|}{$n$} & \multicolumn{3}{c}{$n$} \\
               & $50$    & $100$   & $200$  & $50$    & $100$  & $200$  & $50$    & $100$   & $200$   & $50$    & $100$  & $200$  \\ \hline
$U_1$          & 0.755   & 0.758  & 0.632  & 0.738   & 0.755   & 0.645  & 0.769   & 0.801   & 0.851  & 0.758   & 0.796   & 0.859  \\
Length         & 0.072   & 0.049  & 0.005  & 0.072   & 0.049   & 0.005  & 0.031   & 0.010   & 0.003  & 0.031   & 0.010   & 0.003  \\ \cline{2-13} 
$U_2$          & 0.792   & 0.752  & 0.638  & 0.788   & 0.748   & 0.648  & 0.826   & 0.803   & 0.851  & 0.839   & 0.799   & 0.859  \\
Length         & 0.092   & 0.060  & 0.006  & 0.095   & 0.058   & 0.006  & 0.037   & 0.011   & 0.003  & 0.038   & 0.011   & 0.003  \\ \cline{2-13} 
$U_3$          & 0.814   & 0.839  & 0.943  & 0.826   & 0.833   & 0.945  & 0.818   & 0.942   & 0.972  & 0.825   & 0.935   & 0.977  \\
Length         & 0.072   & 0.065  & 0.023  & 0.072   & 0.065   & 0.023  & 0.012   & 0.018   & 0.010  & 0.013   & 0.018   & 0.010  \\ \cline{2-13} 
$U_4$          & 0.942   & 0.941  & 0.965  & 0.938   & 0.945   & 0.961  & 0.970   & 0.965   & 0.974  & 0.972   & 0.962   & 0.980  \\
Length         & 0.239   & 0.172  & 0.030  & 0.239   & 0.171   & 0.029  & 0.072   & 0.026   & 0.010  & 0.072   & 0.026   & 0.010  \\ \cline{2-13} 
$U_5$          & 0.903   & 0.882  & 0.946  & 0.902   & 0.878   & 0.947  & 0.955   & 0.956   & 0.973  & 0.958   & 0.951   & 0.976  \\
Length         & 0.132   & 0.084  & 0.024  & 0.133   & 0.084   & 0.024  & 0.052   & 0.020   & 0.010  & 0.053   & 0.021   & 0.010  \\ \cline{2-13} 
$LS_1$      & 0.766   & 0.776  & 0.933  & 0.769   & 0.758   & 0.913  & 0.858   & 0.888   & 0.911  & 0.848   & 0.870   & 0.922  \\
Length         & 0.055   & 0.044  & 0.020  & 0.056   & 0.044   & 0.019  & 0.020   & 0.010   & 0.005  & 0.020   & 0.010   & 0.005  \\ \cline{2-13} 
$LS_2$         & 0.750   & 0.762  & 0.925  & 0.741   & 0.750   & 0.931  & 0.770   & 0.817   & 0.948  & 0.758   & 0.824   & 0.956  \\
Length         & 0.041   & 0.041  & 0.019  & 0.041   & 0.041   & 0.019  & 0.000   & 0.002   & 0.008  & 0.000   & 0.003   & 0.007  \\ \hline
\end{tabular}
\caption{Coverage probability and average length of seven  confidence intervals for the BD covariance model}\label{tab:CI_BD}
\end{table}

\begin{table}[h!]
\footnotesize
\begin{tabular}{c|ccc|ccc|ccc|ccc}
\hline
               & \multicolumn{6}{c|}{$p = 50$}                           & \multicolumn{6}{c}{$p = 150$}                          \\ \cline{2-13}
               & \multicolumn{3}{c|}{Dense} & \multicolumn{3}{c|}{Sparse} & \multicolumn{3}{c|}{Dense} & \multicolumn{3}{c}{Sparse} \\ \hline
$\tau_0 = 0.2$ &         & $n$     &        &         & $n$    &        &         & $n$     &         &         & $n$    &        \\
               & 50      & $100$   & $200$  & $50$    & $100$  & $200$  & $50$    & $100$   & $200$   & $50$    & $100$  & $200$  \\ \hline
$U_1$          & 1.000   & 0.818   & 0.777  & 1.000   & 0.800  & 0.738  & 1.000   & 0.738   & 0.746   & 1.000   & 0.719  & 0.743  \\
Length         & 0.949   & 0.754   & 0.108  & 0.952   & 0.748  & 0.108  & 1.000   & 0.508   & 0.157   & 1.000   & 0.504  & 0.157  \\ \cline{2-13} 
$U_2$          & 0.591   & 0.573   & 0.717  & 0.588   & 0.575  & 0.689  & 0.583   & 0.588   & 0.682   & 0.574   & 0.595  & 0.659  \\
Length         & 0.451   & 0.428   & 0.113  & 0.491   & 0.448  & 0.118  & 0.481   & 0.364   & 0.151   & 0.487   & 0.381  & 0.156  \\ \cline{2-13} 
$U_3$          & 0.558   & 0.584   & 0.857  & 0.553   & 0.592  & 0.824  & 0.502   & 0.648   & 0.797   & 0.523   & 0.635  & 0.792  \\
Length         & 0.357   & 0.376   & 0.277  & 0.375   & 0.364  & 0.279  & 0.355   & 0.353   & 0.300   & 0.369   & 0.342  & 0.310  \\ \cline{2-13} 
$U_4$          & 0.565   & 0.601   & 0.871  & 0.561   & 0.598  & 0.837  & 0.579   & 0.668   & 0.809   & 0.537   & 0.649  & 0.808  \\
Length         & 0.372   & 0.689   & 0.315  & 0.389   & 0.376  & 0.313  & 0.374   & 0.371   & 0.321   & 0.390   & 0.359  & 0.330  \\ \cline{2-13} 
$U_5$          & 0.548   & 0.578   & 0.859  & 0.547   & 0.582  & 0.826  & 0.505   & 0.653   & 0.796   & 0.523   & 0.632  & 0.794  \\
Length         & 0.351   & 0.373   & 0.286  & 0.367   & 0.362  & 0.288  & 0.355   & 0.354   & 0.297   & 0.371   & 0.344  & 0.307  \\ \cline{2-13} 
$LS_1$         & 0.135   & 0.122   & 0.442  & 0.137   & 0.105  & 0.391  & 0.118   & 0.115   & 0.281   & 0.092   & 0.128  & 0.235  \\
Length         & 0.027   & 0.023   & 0.016  & 0.028   & 0.022  & 0.016  & 0.020   & 0.010   & 0.005   & 0.020   & 0.010  & 0.005  \\ \cline{2-13} 
$LS_2$         & 0.420   & 0.394   & 0.736  & 0.394   & 0.383  & 0.694  & 0.375   & 0.460   & 0.662   & 0.392   & 0.464  & 0.623  \\
Length         & 0.241   & 0.222   & 0.090  & 0.250   & 0.215  & 0.104  & 0.232   & 0.188   & 0.115   & 0.246   & 0.191  & 0.125  \\ \hline
$\tau_0 = 0.5$ & \multicolumn{3}{c|}{$n$} & \multicolumn{3}{c|}{$n$} & \multicolumn{3}{c|}{$n$} & \multicolumn{3}{c}{$n$} \\
               & $50$    & $100$   & $200$  & $50$    & $100$  & $200$  & $50$    & $100$   & $200$   & $50$    & $100$  & $200$  \\ \hline
$U_1$          & 1.000   & 0.848   & 0.835  & 1.000   & 0.865  & 0.767  & 1.000   & 0.770   & 0.811   & 1.000   & 0.778  & 0.804  \\
Length         & 0.831   & 0.602   & 0.069  & 0.831   & 0.604  & 0.069  & 0.911   & 0.368   & 0.101   & 0.915   & 0.369  & 0.101  \\ \cline{2-13} 
$U_2$          & 0.608   & 0.589   & 0.813  & 0.602   & 0.588  & 0.734  & 0.578   & 0.651   & 0.775   & 0.586   & 0.677  & 0.761  \\
Length         & 0.467   & 0.397   & 0.071  & 0.443   & 0.409  & 0.073  & 0.457   & 0.314   & 0.100   & 0.465   & 0.336  & 0.101  \\ \cline{2-13} 
$U_3$          & 0.652   & 0.663   & 0.950  & 0.648   & 0.656  & 0.922  & 0.590   & 0.750   & 0.911   & 0.600   & 0.766  & 0.910  \\
Length         & 0.505   & 0.518   & 0.198  & 0.502   & 0.526  & 0.240  & 0.473   & 0.494   & 0.342   & 0.476   & 0.503  & 0.357  \\ \cline{2-13} 
$U_4$          & 0.674   & 0.683   & 0.968  & 0.668   & 0.673  & 0.942  & 0.617   & 0.777   & 0.921   & 0.626   & 0.782  & 0.923  \\
Length         & 0.566   & 0.570   & 0.613  & 0.562   & 0.575  & 0.592  & 0.535   & 0.617   & 0.660   & 0.541   & 0.623  & 0.658  \\ \cline{2-13} 
$U_5$          & 0.642   & 0.659   & 0.950  & 0.630   & 0.650  & 0.927  & 0.586   & 0.758   & 0.908   & 0.602   & 0.770  & 0.910  \\
Length         & 0.496   & 0.511   & 0.248  & 0.493   & 0.518  & 0.291  & 0.479   & 0.521   & 0.341   & 0.485   & 0.529  & 0.355  \\ \cline{2-13} 
$LS_1$       & 0.218   & 0.190   & 0.667  & 0.219   & 0.206  & 0.574  & 0.136   & 0.216   & 0.456   & 0.157   & 0.258  & 0.423  \\
Length         & 0.032   & 0.027   & 0.019  & 0.031   & 0.027  & 0.019  & 0.020   & 0.010   & 0.005   & 0.020   & 0.010  & 0.005  \\ \cline{2-13} 
$LS_2$         & 0.527   & 0.513   & 0.877  & 0.560   & 0.536  & 0.861  & 0.499   & 0.627   & 0.825   & 0.497   & 0.618  & 0.840  \\
Length         & 0.248   & 0.229   & 0.064  & 0.277   & 0.248  & 0.090  & 0.255   & 0.171   & 0.083   & 0.247   & 0.165  & 0.094  \\ \hline
\end{tabular}
\caption{Coverage probability and average length of seven  confidence intervals for the CS covariance model}\label{tab:CI_CS}
\end{table}

\begin{table}[h!]
\footnotesize
\centering
\begin{tabular}{c|ccc|ccc|ccc}
\hline
$\tau_0 = 0.2$             & \multicolumn{3}{c|}{Weak} & \multicolumn{3}{c|}{Moderate} & \multicolumn{3}{c}{Strong} \\ \hline
\multirow{2}{*}{$p = 50$}  & \multicolumn{3}{c|}{$n$}  & \multicolumn{3}{c|}{$n$}      & \multicolumn{3}{c}{$n$}    \\ \cline{2-10} 
                           & 50      & $100$  & $200$  & $50$     & $100$    & $200$   & $50$    & $100$   & $200$   \\ \hline
$U_1$                      & 0.883   & 0.946  & 0.924  & 0.809    & 0.881    & 0.726   & 0.778   & 0.759   & 0.594   \\
Length                     & 0.087   & 0.022  & 0.005  & 0.087    & 0.022    & 0.005   & 0.087   & 0.022   & 0.005   \\ \cline{2-10} 
$U_2$                      & 0.859   & 0.909  & 0.924  & 0.797    & 0.814    & 0.730   & 0.770   & 0.668   & 0.614   \\
Length                     & 0.118   & 0.024  & 0.006  & 0.131    & 0.026    & 0.006   & 0.132   & 0.027   & 0.006   \\ \cline{2-10} 
$U_3$                      & 0.862   & 0.959  & 0.991  & 0.790    & 0.928    & 0.947   & 0.808   & 0.892   & 0.939   \\
Length                     & 0.067   & 0.022  & 0.010  & 0.091    & 0.037    & 0.017   & 0.109   & 0.053   & 0.027   \\ \cline{2-10} 
$U_4$                      & 0.943   & 0.991  & 1.000  & 0.891    & 0.963    & 0.962   & 0.888   & 0.907   & 0.918   \\
Length                     & 0.247   & 0.063  & 0.021  & 0.241    & 0.068    & 0.021   & 0.234   & 0.074   & 0.021   \\ \cline{2-10} 
$U_5$                      & 0.918   & 0.976  & 0.991  & 0.862    & 0.948    & 0.957   & 0.863   & 0.906   & 0.943   \\
Length                     & 0.139   & 0.031  & 0.010  & 0.168    & 0.047    & 0.019   & 0.181   & 0.064   & 0.028   \\ \cline{2-10} 
$LS_1$                     & 0.718   & 0.945  & 0.995  & 0.610    & 0.874    & 0.935   & 0.559   & 0.738   & 0.862   \\
Length                     & 0.024   & 0.025  & 0.015  & 0.027    & 0.023    & 0.015   & 0.027   & 0.022   & 0.014   \\ \cline{2-10} 
$LS_2$                     & 0.723   & 0.878  & 0.934  & 0.703    & 0.888    & 0.919   & 0.683   & 0.833   & 0.921   \\
Length                     & 0.020   & 0.006  & 0.000  & 0.041    & 0.021    & 0.011   & 0.048   & 0.035   & 0.021   \\ \hline
\multirow{2}{*}{$p = 150$} & \multicolumn{3}{c|}{$n$}  & \multicolumn{3}{c|}{$n$}      & \multicolumn{3}{c}{$n$}    \\ \cline{2-10} 
                           & $50$    & $100$  & $200$  & $50$     & $100$    & $200$   & $50$    & $100$   & $200$   \\ \hline
$U_1$                      & 0.802   & 0.916  & 0.916  & 0.795    & 0.861    & 0.839   & 0.786   & 0.822   & 0.738   \\
Length                     & 0.261   & 0.066  & 0.016  & 0.261    & 0.066    & 0.016   & 0.261   & 0.066   & 0.016   \\ \cline{2-10} 
$U_2$                      & 0.801   & 0.908  & 0.926  & 0.789    & 0.845    & 0.863   & 0.763   & 0.806   & 0.761   \\
Length                     & 0.305   & 0.076  & 0.017  & 0.283    & 0.079    & 0.018   & 0.281   & 0.082   & 0.018   \\ \cline{2-10} 
$U_3$                      & 0.523   & 0.804  & 0.923  & 0.484    & 0.746    & 0.907   & 0.523   & 0.717   & 0.860   \\
Length                     & 0.046   & 0.024  & 0.011  & 0.047    & 0.031    & 0.018   & 0.052   & 0.036   & 0.022   \\ \cline{2-10} 
$U_4$                      & 0.820   & 0.965  & 0.987  & 0.791    & 0.942    & 0.971   & 0.792   & 0.900   & 0.918   \\
Length                     & 0.321   & 0.169  & 0.034  & 0.313    & 0.168    & 0.036   & 0.311   & 0.164   & 0.036   \\ \cline{2-10} 
$U_5$                      & 0.728   & 0.893  & 0.949  & 0.690    & 0.850    & 0.935   & 0.707   & 0.838   & 0.904   \\
Length                     & 0.209   & 0.061  & 0.018  & 0.203    & 0.066    & 0.024   & 0.210   & 0.072   & 0.029   \\ \cline{2-10} 
$LS_1$                     & 0.421   & 0.651  & 0.831  & 0.379    & 0.577    & 0.744   & 0.401   & 0.550   & 0.658   \\
Length                     & 0.020   & 0.010  & 0.006  & 0.020    & 0.010    & 0.007   & 0.020   & 0.010   & 0.007   \\ \cline{2-10} 
$LS_2$                     & 0.330   & 0.631  & 0.836  & 0.302    & 0.692    & 0.839   & 0.352   & 0.655   & 0.788   \\
Length                     & 0.007   & 0.007  & 0.005  & 0.009    & 0.019    & 0.010   & 0.015   & 0.020   & 0.014   \\ \hline
\end{tabular}
\caption{Coverage probability and average length of seven  confidence intervals for different interactions}\label{tab:CI_int}
\end{table}

\begin{figure}[h!]
    \centering
    \includegraphics[scale = 0.4]{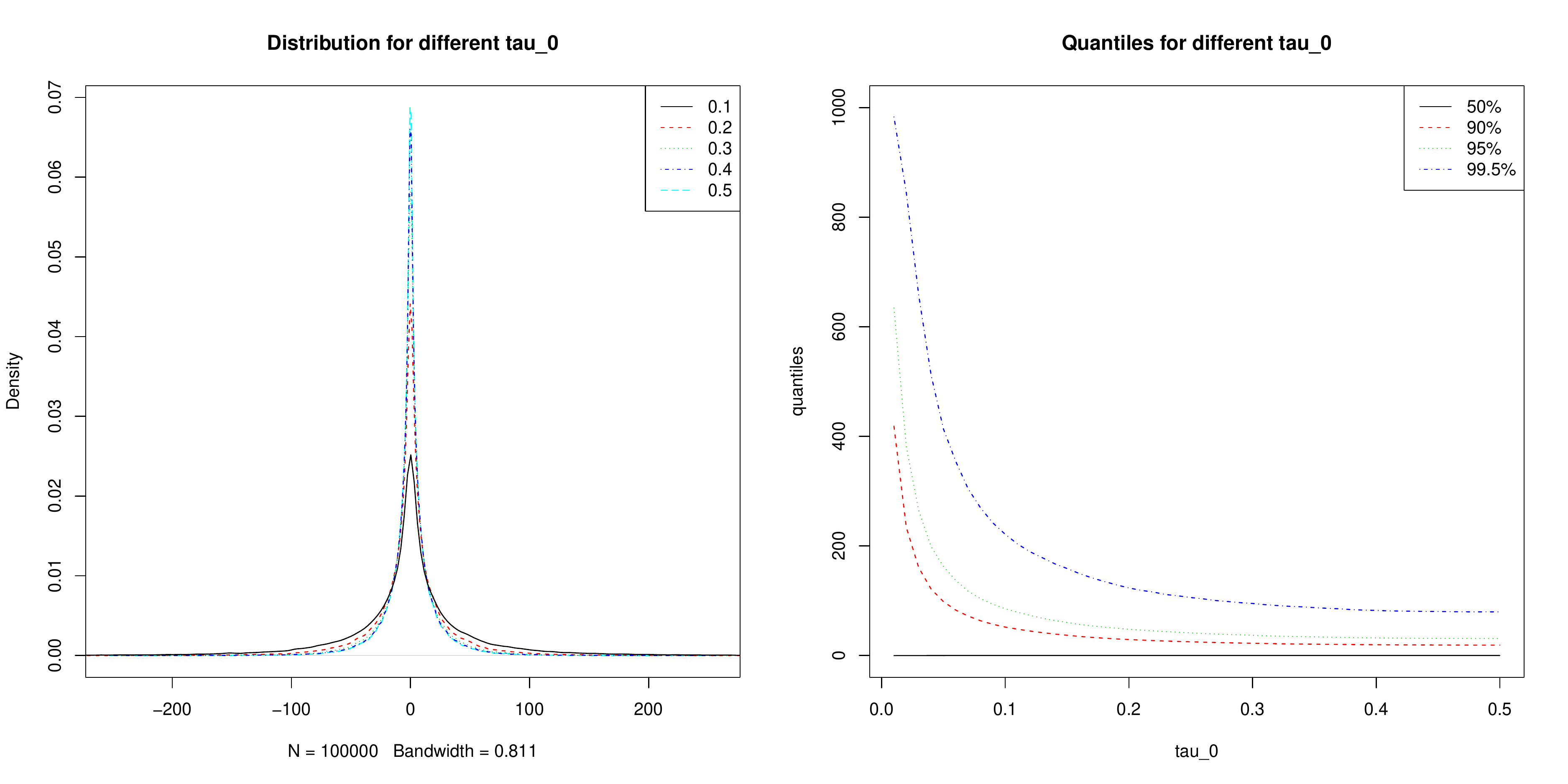}
    \caption{Density plot and quantile plot for different values of $\tau_0$}
    \label{fig:dist}
\end{figure}
\newpage

\begin{algorithm}[h!]
	\begin{enumerate}
		\item Estimate $k_0$ by $\hat{k}_{LS}$ in \cite{bai2010common}.
		\item Estimate the pre-mean and post-mean by sample average for the pre-sample and the post-sample using $\hat{k}_{LS}$ as the break point. Denote the estimation as $\hat{\mu}_{pre}$ and $\hat{\mu}_{post}$.
		\item Estimate the variance $\sigma_j^2$ by $\hat{\sigma}_j^2$ for each individual series:
		$$\hat{\sigma}_j^2 = \left(\sum_{i = 1}^{\hat{k}_{LS}}(X_{i,j} - \hat{\mu}_{pre,j})^2 +\sum_{i = \hat{k}_{LS}+1}^{n}(X_{i,j} - \hat{\mu}_{post,j})^2\right)/(n-2), $$
		for all $j = 1,2,...,p$.
		
		\item Estimate $A_p$ by $\hat{A}_p$ where 
		$$\hat{A}_p = \frac{\|\hat{\mu}_{post} - \hat{\mu}_{pre}\|_2^4}{\sum_{j = 1}^{p}(\hat{\mu}_{post,j} - \hat{\mu}_{pre,j})^2\hat{\sigma}_j^2}.$$
		
		\item A 95\% CI for $\tau_0$: $[\hat{\tau}_{LS} - \floor{11/(n\hat{A}_p)}, \hat{\tau}_{LS} + \lceil11/(n\hat{A}_p) \rceil]$, where $\hat{\tau}_{LS} = \hat{k}_{LS}/n$.
	\end{enumerate}
	\caption{Algorithm for constructing a confidence interval for $\tau_0$ in \cite{bai2010common}}\label{alg:CI_Bai}
\end{algorithm}
\begin{algorithm}[h!]
	\begin{enumerate}
		\item  Estimate $\tau_0$ by $\hat{\tau}_{BBM} = \hat{k}_{BBM}/n$ in \cite{bhattacharjee2019change}
		
		\item Estimate the pre-mean and post-mean by sample average for the pre-sample and the post-sample using $\hat{k}_{BBM}$ as the break point. Denote the estimation as $\hat{\mu}_{pre}$ and $\hat{\mu}_{post}$.
		
		\item Estimate $\Sigma$ by some positive semi-definite estimator $\hat{\Sigma}_X$.
		
		\item Generate random vectors $\epsilon_1$,...,$\epsilon_n$ in $\mathbb{R}^p$ from  distribution $\mathcal{N}(0,\hat{\Sigma}_X)$.
		
		\item Generate ${X}_t^* = \hat{\mu}_{pre} + \epsilon_t$ if $t \leq \hat{k}_{BBM}$ and ${X}_t^* = \hat{\mu}_{post} + \epsilon_t$ if $t > \hat{k}_{BBM}$.
		
		\item Estimate $\hat{h}$ by $\hat{h} = \argmin_{h \in (n(c^* - \hat{\tau}_{BBM}),n(1 - c^* - \hat{\tau}_{BBM}))}\hat{L}(h),$ where
		$$\hat{L}(h) = \frac{1}{n}\sum_{j = 1}^{p}\left[\sum_{t = 1}^{n\hat{\tau}_{BBM} + h}\left({X}^*_{t,j} - \hat{\mu}_{pre,j}\right)^2 + \sum_{t = n\hat{\tau}_{BBM} + h + 1}^{n}\left({X}^*_{t,j} - \hat{\mu}_{post,j}\right)^2\right].$$
		
		\item Repeat step 4-6 for $B$ times to generate $\hat{h}_1$,...,$\hat{h}_B$, and 95\% CI for $\tau_0$ is $[\hat{\tau}_{BBM}-q^*_{0.975}/n, \hat{\tau}_{BBM}-q^*_{0.025}/n]$, where ${q}^*_{0.025}$ and $q^*_{0.975}$ are the sample $2.5\%$ and $97.5\%$ quantiles based on $\hat{h}_1$,...,$\hat{h}_B$.
	\end{enumerate}
	\caption{Algorithm for constructing a confidence interval for $\tau_0$ in \cite{bhattacharjee2019change} (modified for independent data)}\label{alg:CI_BBM}
\end{algorithm}

\newpage
\clearpage
\bibliography{cploc}

\clearpage
\section{Technical Appendix A}
\label{sec:appendix}
In this section, we gather some auxiliary results in Section \ref{sec:result}, and present the proofs of all main theorems and corollaries. All the constants $C$, $C_1$, $C_2$ ... stated in the appendix are generic and their specific values may vary from line to line and are not important.
\subsection{Preliminary Results} \label{sec:result}
\begin{lemma}\label{lem:order}
	For any $1 \leq n_1 < n_2 \leq n$, under Assumption \ref{ass}\ref{ass:tr} and \ref{ass}\ref{ass:cum}, we have 
	$$\sum_{i_1,i_2,i_3,i_4 = n_1+1}^{n_2}\sqrt{\sum_{l_1,l_2,l_3,l_4 = 1}^{p}(\E[Z_{i_1,l_1}Z_{i_2,l_2}Z_{i_3,l_3}Z_{i_4,l_4}])^2} \leq C(n_2-n_1)^2\|\Sigma\|_F^2.$$
\end{lemma}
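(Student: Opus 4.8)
The plan is to split the sum over $(i_1,i_2,i_3,i_4)$ according to the coincidence pattern of the four time indices. Since $\{Z_t\}$ are i.i.d.\ with mean zero, partition the slots $\{1,2,3,4\}$ by putting two slots in the same block iff the corresponding time indices coincide. If the resulting set partition contains a singleton block $\{s\}$, then $Z_{i_s}$ is independent of the remaining factors and $\E[Z_{i_s,l_s}]=0$, so $\E[Z_{i_1,l_1}Z_{i_2,l_2}Z_{i_3,l_3}Z_{i_4,l_4}]=0$ identically in $(l_1,\dots,l_4)$ and that term is absent. Hence only two kinds of patterns survive: the one where $i_1=i_2=i_3=i_4$, and the three $(2,2)$ pairings $\{12\}\{34\}$, $\{13\}\{24\}$, $\{14\}\{23\}$. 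It is important here that the square root sits inside the sum over $(l_1,\dots,l_4)$, so one bounds the $\ell^2$-norm of the coefficient tensor for each fixed time-index tuple and only then sums over tuples.

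For the pairing patterns, take $\{12\}\{34\}$ (the other two are identical after relabelling): here $i_1=i_2=:i$ and $i_3=i_4=:j$ with $i\neq j$, so by independence $\E[Z_{i,l_1}Z_{i,l_2}Z_{j,l_3}Z_{j,l_4}]=\Sigma_{l_1l_2}\Sigma_{l_3l_4}$, and
$$\Big(\sum_{l_1,l_2,l_3,l_4=1}^{p}\Sigma_{l_1l_2}^2\Sigma_{l_3l_4}^2\Big)^{1/2}=\Big(\sum_{l_1,l_2}\Sigma_{l_1l_2}^2\Big)^{1/2}\Big(\sum_{l_3,l_4}\Sigma_{l_3l_4}^2\Big)^{1/2}=\|\Sigma\|_F^2 .$$
Since there are at most $(n_2-n_1)^2$ choices of $(i,j)$ and three pairings, these patterns contribute at most $3(n_2-n_1)^2\|\Sigma\|_F^2$.

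For the all-equal pattern there are $n_2-n_1$ terms, each equal to $\big(\sum_{l_1,\dots,l_4}(\E[Z_{0,l_1}Z_{0,l_2}Z_{0,l_3}Z_{0,l_4}])^2\big)^{1/2}$. Expanding the fourth moment of the mean-zero vector $Z_0$ via the moment--cumulant formula, the partitions containing a singleton drop out, leaving
$$\E[Z_{0,l_1}Z_{0,l_2}Z_{0,l_3}Z_{0,l_4}]=cum(Z_{0,l_1},Z_{0,l_2},Z_{0,l_3},Z_{0,l_4})+\Sigma_{l_1l_2}\Sigma_{l_3l_4}+\Sigma_{l_1l_3}\Sigma_{l_2l_4}+\Sigma_{l_1l_4}\Sigma_{l_2l_3}.$$
Applying Minkowski's inequality in $\ell^2$ over $(l_1,\dots,l_4)$, using the exact computation above for each of the three pairing tensors and Assumption~\ref{ass}\ref{ass:cum} with $h=4$ for the cumulant tensor, one gets the bound $(\sqrt C+3)\|\Sigma\|_F^2$ for each such term, hence a total contribution of at most $(n_2-n_1)(\sqrt C+3)\|\Sigma\|_F^2$. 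Adding the two contributions and using $n_2-n_1\le (n_2-n_1)^2$ (as $n_1<n_2$ are integers) yields the claim with a constant that is a fixed multiple of $1+\sqrt C$.

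No genuine obstacle is expected: the argument is bookkeeping of which coincidence patterns survive, plus one application of Minkowski's inequality to peel the fourth-moment tensor into its cumulant part (controlled by Assumption~\ref{ass}\ref{ass:cum}) and its three covariance-pairing parts (computed exactly). The only things I would double-check carefully are the exact counts of index tuples realising each pattern and the vanishing-by-mean-zero of every partition containing a singleton block; the crucial quantitative input is the $h=4$ case of the cumulant bound.
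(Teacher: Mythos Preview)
Your proof is correct and follows essentially the same approach as the paper: split the outer sum over $(i_1,\dots,i_4)$ by coincidence pattern, observe that any pattern with a singleton vanishes by independence and mean zero, compute the three $(2,2)$ pairings exactly as $\|\Sigma\|_F^2$ each, and bound the all-equal term via the moment--cumulant expansion together with Assumption~\ref{ass}\ref{ass:cum} at $h=4$. The only cosmetic difference is that you invoke Minkowski's inequality for the all-equal case whereas the paper uses the crude $(a+b+c+d)^2\le C\sum a_i^2$ bound; also note a small slip of the pen --- the square root sits \emph{outside} the $l$-sum (but inside the $i$-sum), which is exactly what your subsequent reasoning uses.
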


	\begin{lemma}\label{lem:Xi1Xi4}
	Under Assumption \ref{ass}\ref{ass:tr} and \ref{ass}\ref{ass:cum}, there exists a constant $C<\infty$ such that for all $j_1 \leq i_1, \cdots, j_4 \leq i_4$,
	\[
	\Big| \sum_{l_1,\cdots,l_4 = 1}^{p}\E[X_{i_1+1,l_1}X_{j_1,l_1}\cdots X_{i_4+1,l_4}X_{j_4,l_4}]  \Big| \leq C\|\Sigma\|_F^4.
	\]
\end{lemma}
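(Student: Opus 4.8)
The plan is to expand the eight-fold expectation using temporal independence and reduce it to a finite list of contractions of $\Sigma$ and of low-order joint cumulants of a single $Z_0$; the argument runs parallel to (and is a shade heavier than) the proof of Lemma~\ref{lem:order}. Group the eight centered scalar factors $X_{i_1+1,l_1},X_{j_1,l_1},\dots,X_{i_4+1,l_4},X_{j_4,l_4}$ according to their time index. Since vectors at distinct times are independent and mean zero, the expectation vanishes unless every time-group contains at least two factors, in which case it factors as $\prod_t\E\big[\prod_{k\in K_t}Z_{0,l_k}\big]$, where $K_t\subseteq\{1,2,3,4\}$ records the labels of those pairs having an endpoint at time $t$ (here one uses that the $Z_t$ are i.i.d.\ to replace every $X_{t,\cdot}$ by $Z_{0,\cdot}$). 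Because $j_k\le i_k<i_k+1$, the two endpoints of pair $k$ lie in distinct time-groups, so each of the four labels belongs to exactly two groups; hence there are exactly $2$, $3$, or $4$ nonempty time-groups, each of size between $2$ and $4$ and with sizes summing to $8$.

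Inside each single-time factor I would invoke the moment-to-cumulant (Leonov--Shiryaev) identity, writing $\E[\prod_{k\in K}Z_{0,l_k}]$ as a sum of products of joint cumulants $cum(Z_{0,l_{k_1}},\dots)$ of orders $2,3,4$, order $2$ being simply $\Sigma_{l_al_b}$. After substitution, the quantity to bound becomes a finite sum, with bounded coefficients and of a combinatorial type that does not depend on $n$ or $p$ (so the final bound is automatically uniform over all admissible $i_k,j_k$), of expressions of the form $\sum_{l_1,\dots,l_4}(\text{product of }\Sigma\text{-entries})\cdot(\text{product of cumulants})$. Each such expression I would control by iterated Cauchy--Schwarz together with three inputs: Assumption~\ref{ass}\ref{ass:cum}, which gives $\sum_{l}cum_h(\cdot)^2\le C\|\Sigma\|_F^h$ for $h\le 4$; the elementary bound $\|\Sigma\|\le\|\Sigma\|_F$, which absorbs a residual spectral-norm factor coming from contractions of the type $u^\top\Sigma v$ with $u,v$ slices of a cumulant tensor; and $\sum_{l,l'}\Sigma_{l,l'}^2=\|\Sigma\|_F^2$. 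Whenever a single-time factor is a genuine fourth moment, one may instead quote Lemma~\ref{lem:order} with $n_2-n_1=1$, which yields $\sum_{l_1,\dots,l_4}\big(\E[\prod_kZ_{0,l_k}]\big)^2\le C\|\Sigma\|_F^4$ directly and thereby settles the two-group configuration and the $\{2,2,4\}$ three-group configuration at once.

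The one contraction not controlled by norm inequalities alone is the ``four-cycle'' $\sum_{l_1,\dots,l_4}\Sigma_{l_1l_2}\Sigma_{l_2l_3}\Sigma_{l_3l_4}\Sigma_{l_4l_1}=tr(\Sigma^4)$, which appears precisely when all four time-groups have size two and the incidence multigraph on the labels is a $4$-cycle (it also reappears upon expanding a size-four moment into its Wick pairings). This is exactly where Assumption~\ref{ass}\ref{ass:tr} enters: $tr(\Sigma^4)=o(\|\Sigma\|_F^4)=O(\|\Sigma\|_F^4)$. The only remaining $4$-group pattern, two disjoint double edges, contributes $(\sum_{l,l'}\Sigma_{l,l'}^2)^2=\|\Sigma\|_F^4$; and the $\{2,3,3\}$ three-group pattern, whose groups are forced to be $\{a,b\},\{a,c,d\},\{b,c,d\}$, reduces after one use of $|u^\top\Sigma u|\le\|\Sigma\|\,\|u\|^2$ and Assumption~\ref{ass}\ref{ass:cum} with $h=3$ to $\|\Sigma\|\,\|\Sigma\|_F^3\le\|\Sigma\|_F^4$. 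Adding up the finitely many cases gives the asserted bound $C\|\Sigma\|_F^4$.

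I expect the main obstacle to be organizational rather than analytical: setting up the partition/cumulant expansion cleanly and then enumerating the time-group configurations and, within each, the admissible ways the four labels can be distributed, verifying that every resulting $\Sigma$-and-cumulant contraction is $O(\|\Sigma\|_F^4)$. Once one notices that the sole pattern requiring an input beyond trivial norm inequalities is the four-cycle term handled by Assumption~\ref{ass}\ref{ass:tr}, what remains is routine bookkeeping of the same flavour as --- but lighter than --- the proof of Lemma~\ref{lem:order}.
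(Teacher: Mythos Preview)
Your proposal is correct and follows the natural route: factor the eight-fold moment across time groups using independence and mean-zero, expand each within-time moment via Leonov--Shiryaev, and bound every resulting $\Sigma$-and-cumulant contraction by $C\|\Sigma\|_F^4$ using Cauchy--Schwarz, Assumption~\ref{ass}\ref{ass:cum}, and $\|\Sigma\|\le\|\Sigma\|_F$, with Assumption~\ref{ass}\ref{ass:tr} reserved for the single $tr(\Sigma^4)$ four-cycle term. Your enumeration of the admissible group-size profiles $\{4,4\}$, $\{2,2,4\}$, $\{2,3,3\}$, $\{2,2,2,2\}$ is complete (the constraint $j_k\le i_k<i_k+1$ indeed caps each group at four and forces each label into exactly two distinct groups), and your case-by-case bounds are all valid.

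The paper itself does not give a proof here: it simply states that the argument is identical to that of Lemma~8.1 in \cite{wang2019a} and omits the details. What you have written is precisely the kind of partition/cumulant bookkeeping that such a lemma requires, so there is no substantive difference to report---your outline is, if anything, more explicit than what the paper provides. One cosmetic point: the lemma is stated for $X$ but is only ever applied to the centered i.i.d.\ vectors $Z$ (see its uses in the proofs of Proposition~\ref{prop:tightness} and Theorem~\ref{thm:weak}); your reading of ``$X$'' as ``$Z$'' is the intended one.
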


The following identities will be used several times in the proof and are displayed in the following proposition.
\begin{proposition}\label{prop:terms} For any $k \leq k_0$,
\begin{enumerate}
    \item 
	\begin{flalign*}
	    G_n(k) &= G_n^Z(k) + \E[G_n(k)] - \frac{2(k-1)(n-k-1)(n-k_0)}{k(n-k)}\sum_{i = 1}^{k}\delta^TZ_i&\\
	    &+\frac{2(k-1)(n-k_0-1)}{n-k}\sum_{j = k_0+1}^{n}\delta^TZ_j + \frac{2(k-1)(n-k_0)}{n-k}\sum_{k+1}^{k_0}\delta^TZ_j.&
	\end{flalign*}
	\item \begin{flalign*}
&G_n^Z(k_0) - G_n^Z(k)&\\
= & 2\frac{(n-k-1)(n-1)}{(n-k)k}\sum_{i = k+1}^{k_0}\sum_{j = 1}^{k}Z_i^TZ_j - 2\frac{(k_0-1)(n-1)}{(n-k_0)k_0}\sum_{i = k+1}^{k_0}\sum_{j = k_0+1}^{n}Z_i^TZ_j&\\
+ &\frac{(n-1)(n-k-k_0)}{k(n-k_0)}\sum_{i,j = k+1, i\neq j}^{k_0}Z_i^TZ_j- \frac{(n-1)(k_0-k)}{kk_0}\sum_{i,j=1,i\neq j}^{k_0}Z_i^TZ_j&\\
+&\frac{(n-1)(k_0-k)}{(n-k_0)(n-k)}\sum_{i,j = k+1,i\neq j}^{n}Z_i^TZ_j - \frac{2(n-1)(n-k_0-k)(k_0-k)}{kk_0(n-k)(n-k_0)}\sum_{i = 1}^{k}\sum_{j = k_0+1}^{n}Z_i^TZ_j&\\
:=&S_{1,n}(k)+S_{2,n}(k)+S_{3,n}(k)+S_{4,n}(k)+S_{5,n}(k)+S_{6,n}(k).
\end{flalign*}
For the simplicity of the notations we denote $S_{i,n}(k)$ as $S_i(k)$ for $i = 1,2,...,6$.

\end{enumerate}
	
\end{proposition}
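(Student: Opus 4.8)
Both displayed identities are deterministic; the only genuine ``input'' is the identification of the pure-$\delta$ contribution with $\E[G_n(k)]$ via Lemma~\ref{lem:max}. The plan for each is to substitute the definitions, expand the bilinear products, and collect terms, the only subtlety being the bookkeeping of the $i_1\ne i_2$ and $j_1\ne j_2$ restrictions.

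For the first identity, since $\mu_1=0$ and $k\le k_0$ we have $X_i=Z_i$ for $i\le k$ and $X_j=Z_j+\delta\,\mathbf{1}(j>k_0)$ for $k<j\le n$, so $X_i-X_j=(Z_i-Z_j)-\delta\,\mathbf{1}(j>k_0)$. Plugging this into $G_n(k)$ and expanding $(X_{i_1}-X_{j_1})^T(X_{i_2}-X_{j_2})$ produces: the pure $(Z_i-Z_j)$ part, which is $k(n-k)G_n^Z(k)$ by definition; the pure $\|\delta\|^2$ part, which equals $\|\delta\|^2\,k(k-1)(n-k_0)(n-k_0-1)$ once one notes that $j_1,j_2>k_0$ with $j_1\ne j_2$ forces $(n-k_0)(n-k_0-1)$ choices, and after dividing by $k(n-k)$ this is precisely $\E[G_n(k)]$ by Lemma~\ref{lem:max}; and two ``mixed'' parts $-\delta^T(Z_{i_2}-Z_{j_2})\mathbf{1}(j_1>k_0)$ and its mirror image under $(i_1\leftrightarrow i_2,\,j_1\leftrightarrow j_2)$, which are equal. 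For the mixed part I would sum out $i_1$ (a factor $k-1$), then evaluate $\sum_{j_1\ne j_2}\mathbf{1}(j_1>k_0)\,(\delta^TZ_{i_2}-\delta^TZ_{j_2})$ by splitting according to whether $j_2\le k_0$ or $j_2>k_0$; regrouping the emerging $\sum_{i\le k}\delta^TZ_i$, $\sum_{k<j\le k_0}\delta^TZ_j$, $\sum_{j>k_0}\delta^TZ_j$ and dividing by $k(n-k)$ yields exactly the three linear-in-$\delta$ terms in the statement.

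For the second identity, the cleanest route is to first establish a single ``master'' formula: for any split of $\{1,\dots,n\}$ into a pre-block $P$ (size $p$) and post-block $Q$ (size $q=n-p$),
\[
pq\,G_n^Z=q(q-1)D_P+p(p-1)D_Q-2(p-1)(q-1)\langle P,Q\rangle,
\]
where $D_S:=\sum_{i\ne j\in S}Z_i^TZ_j$ and $\langle P,Q\rangle:=\sum_{i\in P,\,j\in Q}Z_i^TZ_j$; this follows by expanding $(Z_{i_1}-Z_{j_1})^T(Z_{i_2}-Z_{j_2})=Z_{i_1}^TZ_{i_2}-Z_{i_1}^TZ_{j_2}-Z_{j_1}^TZ_{i_2}+Z_{j_1}^TZ_{j_2}$ and counting the free indices in each of the four sums (e.g. the $Z_{i_1}^TZ_{j_2}$ sum has $p-1$ free choices for $i_2$ and $q-1$ for $j_1$). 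I would then apply it with $P=\{1,\dots,k_0\}$ and with $P=\{1,\dots,k\}$, use $D_{A\cup B}=D_A+D_B+2\langle A,B\rangle$ together with additivity of $\langle\cdot,\cdot\rangle$ to rewrite everything in terms of the three atomic blocks $[1,k]$, $[k+1,k_0]$, $[k_0+1,n]$, and subtract. Collecting the coefficients of $D_{[1,k_0]}$, $D_{[k+1,n]}$, $D_{[k+1,k_0]}$, $\langle[1,k],[k+1,k_0]\rangle$, $\langle[k+1,k_0],[k_0+1,n]\rangle$ and $\langle[1,k],[k_0+1,n]\rangle$ and simplifying the rational functions gives $S_4,S_5,S_3,S_1,S_2,S_6$ respectively; the common factor $n-1$ in each $S_i$ drops out of combinations such as $(n-k)+(k-1)=n-1$ when the $G_n^Z(k_0)$ and $G_n^Z(k)$ contributions are added.

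The only place I expect trouble is the index bookkeeping: the $(p-1),(q-1),(k-1),(k_0-1)$ factors all come from the $\ne$ restrictions, and a single miscount there corrupts every subsequent term. But there is no genuine mathematical difficulty here; the proposition is entirely a careful rearrangement, best organized around the master identity for $G_n^Z$ on a general bipartition, with Lemma~\ref{lem:max} supplying the one non-mechanical ingredient in part (1).
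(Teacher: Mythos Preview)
Your proposal is correct and follows essentially the same route as the paper. Your ``master formula'' $pq\,G_n^Z=q(q-1)D_P+p(p-1)D_Q-2(p-1)(q-1)\langle P,Q\rangle$ is exactly the three-term decomposition the paper writes as $A_1+A_2-A_3$ (resp.\ $B_1+B_2-B_3$) for the split at $k_0$ (resp.\ $k$); the paper then subtracts term-by-term via $A_i-B_i$, which amounts to your use of $D_{A\cup B}=D_A+D_B+2\langle A,B\rangle$ to break into atomic blocks.
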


\begin{proposition} \label{prop:Q}
	Under Assumptions \ref{ass}\ref{ass:tr} and \ref{ass}\ref{ass:cum}, as $n \wedge p \rightarrow \infty$, for any $0 \leq a < b \leq 1$,
	$$\frac{\sqrt{2}}{n\|\Sigma\|_F}\sum_{i = \floor{na}+1}^{\floor{nb}-1}\sum_{j = \floor{na}+1}^{i}Z_{i+1}^TZ_j\rightsquigarrow Q(a,b)\text{ in } l_{\infty}([0,1]^2),$$
	where $Q(a,b)$ is a centered Gaussian process on $[0,1]^2$ with covariance structure given by
	$$Cov\{Q(a_1,b_1),Q(a_2,b_2)\} = (b_1\wedge b_2 - a_1 \vee a_2)^2\1{b_1\wedge b_2 > a_1 \vee a_2}.$$ 
\end{proposition}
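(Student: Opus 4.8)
After the change of index $m=i+1$, the object of interest is $T_n(a,b):=\frac{\sqrt 2}{n\|\Sigma\|_F}\sum_{\floor{na}+1\le j<m\le\floor{nb}}Z_j^TZ_m$, a normalized sum of inner products over the strict lower triangle of the index block $\{\floor{na}+1,\dots,\floor{nb}\}$. As usual, weak convergence of $T_n$ to $Q$ in $l_\infty([0,1]^2)$ will follow from (i) convergence of the finite-dimensional distributions and (ii) asymptotic tightness, and I would treat these separately. Since $T_n(\cdot,\cdot)$ is a bounded step function of $(a,b)$ and the limit $Q$ has a continuous version (its incremental variance $\mathrm{Var}(Q(a,b')-Q(a,b))=(b'-b)(b'+b-2a)$ is controlled by $|b'-b|$, and symmetrically in $a$), convergence in $D([0,1]^2)$ and in $l_\infty([0,1]^2)$ coincide, so any of the standard multiparameter tightness criteria applies.

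For (i), fix points $(a_1,b_1),\dots,(a_r,b_r)$ and, by the Cram\'er--Wold device, consider $S_n:=\sum_{\ell=1}^r c_\ell T_n(a_\ell,b_\ell)$. Reorganizing the double sums by their larger index $m$ gives $S_n=\frac{\sqrt 2}{n\|\Sigma\|_F}\sum_m D_m$, where $D_m:=Z_m^T V_{m}$ and $V_m:=\sum_{\ell}c_\ell\,\1{\floor{na_\ell}+1<m\le\floor{nb_\ell}}\sum_{j=\floor{na_\ell}+1}^{m-1}Z_j$ is $\mathcal F_{m-1}$-measurable for $\mathcal F_m:=\sigma(Z_1,\dots,Z_m)$; since $Z_m$ has mean zero and is independent of $\mathcal F_{m-1}$, $(D_m)$ is a martingale difference array. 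I would then apply the martingale CLT (e.g. Hall and Heyde, Corollary 3.1). The conditional variance is $\sum_m\E[D_m^2\mid\mathcal F_{m-1}]=\sum_m V_m^T\Sigma V_m$; a direct moment computation using only independence of the $Z_j$'s and $\E[Z_j^T\Sigma Z_j]=\|\Sigma\|_F^2$ shows $\E\big[\frac{2}{n^2\|\Sigma\|_F^2}\sum_m V_m^T\Sigma V_m\big]\to\sum_{\ell,\ell'}c_\ell c_{\ell'}(b_\ell\wedge b_{\ell'}-a_\ell\vee a_{\ell'})^2\1{b_\ell\wedge b_{\ell'}>a_\ell\vee a_{\ell'}}=\mathrm{Var}\big(\sum_\ell c_\ell Q(a_\ell,b_\ell)\big)$. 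I would upgrade this to convergence in probability by showing $\mathrm{Var}\big(\sum_m V_m^T\Sigma V_m\big)=o(n^4\|\Sigma\|_F^4)$: expanding this variance produces sums of fourth-order moments of the $Z_j$'s which are bounded through $tr(\Sigma^4)=o(\|\Sigma\|_F^4)$ and the cumulant bound of Assumption \ref{ass}\ref{ass:cum}, in the spirit of Lemma \ref{lem:Xi1Xi4}. The conditional Lindeberg condition I would verify via the Lyapunov bound $\sum_m\E[D_m^4]=o(n^4\|\Sigma\|_F^4)$, again a fourth-moment estimate controlled by Assumption \ref{ass}\ref{ass:cum} and Lemma \ref{lem:order}. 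This delivers $S_n\Dkonv\mathcal N\big(0,\mathrm{Var}(\sum_\ell c_\ell Q(a_\ell,b_\ell))\big)$, which is exactly the claimed finite-dimensional convergence.

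For (ii), I would establish a moment bound on rectangular increments. For a rectangle $R=(a_1,a_2]\times(b_1,b_2]$ lying below the diagonal (so that all contributing pairs automatically satisfy $j<m$), the increment reduces to $\Delta_R T_n=\frac{\sqrt 2}{n\|\Sigma\|_F}\sum_{\floor{na_1}<j\le\floor{na_2}}\sum_{\floor{nb_1}<m\le\floor{nb_2}}Z_j^TZ_m$, a normalized double sum over two disjoint index blocks; a short computation gives $\E[(\Delta_R T_n)^2]=2\,\mathrm{Leb}(R)+o(1)$, so $\mu:=2\,\mathrm{Leb}$ is the natural control measure, and expanding the fourth power and classifying the index-coincidence patterns (each resulting block bounded uniformly in $p$ by Lemmas \ref{lem:order} and \ref{lem:Xi1Xi4}) yields $\E[(\Delta_R T_n)^4]\le C(\mu(R))^2$. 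Increments of rectangles meeting the diagonal are of strictly smaller order, and regions with $b\le a$ contribute nothing. This moment inequality is precisely the hypothesis of the Bickel--Wichura tightness criterion in $D([0,1]^2)$ (with exponents $\gamma_1=\gamma_2=2$ and spare exponent $\beta=1$), so, together with (i), it gives $T_n\rightsquigarrow Q$ in $l_\infty([0,1]^2)$.

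The main obstacle is uniformity in $p$: because the dimension is allowed to diverge arbitrarily fast relative to $n$, neither the law of large numbers for the conditional variance in step (i) nor the fourth-moment bound for tightness in step (ii) can be obtained from fixed-dimension arguments; both must be carried out through explicit moment and joint-cumulant expansions whose negligible and leading-order blocks are separated only with the help of Assumption \ref{ass}\ref{ass:tr}--\ref{ass}\ref{ass:cum}. The combinatorial bookkeeping of which summation indices coincide is therefore the delicate part, and the preliminary Lemmas \ref{lem:order} and \ref{lem:Xi1Xi4} are exactly the tools meant to absorb it; the remaining work is the lengthy but routine reduction of the variance and fourth-moment estimates to repeated applications of those lemmas.
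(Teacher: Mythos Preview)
Your proposal is correct and follows essentially the same strategy the paper relies on. The paper does not prove Proposition~\ref{prop:Q} itself but cites \cite{wang2019a}; however, the machinery you describe---martingale CLT (Hall--Heyde, Corollary~3.1) for the finite-dimensional distributions after reorganizing by the larger index, combined with a fourth-moment increment bound for tightness, both driven by Assumptions~\ref{ass}\ref{ass:tr}--\ref{ass}\ref{ass:cum} through Lemmas~\ref{lem:order} and~\ref{lem:Xi1Xi4}---is precisely the template the present paper uses for the closely related Theorem~\ref{thm:weak} and Proposition~\ref{prop:tightness}, and is the natural argument in \cite{wang2019a} as well.
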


\begin{remark}
	The centered Gaussian process $Q$ can be regarded as a 2-D analogue of the standard Brownian motion. Suppose $M_n$ is an $n$-by-$n$ matrix containing i.i.d. standard normal random variables, and we take $Q_n(a,b)$ as the standardized sum of all variables of $M_n$ in the region bounded by rows $\floor{na}+1$, $\floor{nb}$ and columns $\floor{na}+1$, $\floor{nb}$, for any $0 \leq a < b \leq 1$. As $n \rightarrow \infty$, $Q_n(a,b) \rightsquigarrow Q(a,b)$. The proof of Proposition \ref{prop:Q} can be found in \cite{wang2019a}.
\end{remark}

\begin{proposition}[Tightness]\label{prop:tightness}
	Define $$H_n(\gamma) = \frac{\sqrt{2}\sqrt{b_n}}{n\|\Sigma\|_F}\left\{G_n^Z({n\tau_0}) - G_n^Z(\floor{n\tau_0 + n\gamma/b_n})\right\}$$
	for all $\gamma \in [-M,M]$. For any $\gamma_1,\gamma_2 \in [-M,M]$ and $\gamma_1 \neq \gamma_2$, denote $k_1 = \floor{n\tau_0 + n\gamma_1/b_n}$ and $k_2 = \floor{n\tau_0+n\gamma_2/b_n}$. Let $b_n$ satisfy $1/b_n + b_n/n = O(1)$. Under Assumption \ref{ass}\ref{ass:tr} and \ref{ass}\ref{ass:cum}, we have
	$$\E\left[\left\{H_n(\gamma_2) - H_n(\gamma_1)\right\}^4\right] \leq Cb_n^2(k_2 - k_1)^2/n^2$$
	for all sufficiently large $n$ and some positive constant $C$.
\end{proposition}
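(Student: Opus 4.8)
The plan is to strip off the normalizing constants, apply the algebraic increment identity of Proposition~\ref{prop:terms}(2), and reduce everything to a single uniform fourth--moment bound for quadratic forms in the $Z_t$'s. Expanding the difference, $H_n(\gamma_2)-H_n(\gamma_1)=\frac{\sqrt2\sqrt{b_n}}{n\|\Sigma\|_F}\{G_n^Z(k_1)-G_n^Z(k_2)\}$, so the assertion is equivalent to
\[
\E\big[(G_n^Z(k_1)-G_n^Z(k_2))^4\big]\le C\,n^2(k_2-k_1)^2\|\Sigma\|_F^4 .
\]
Take $k_1\le k_2$ without loss of generality and set $\kappa_0:=\floor{n\tau_0}$. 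Because $\gamma_1,\gamma_2\in[-M,M]$ and $1/b_n+b_n/n=O(1)$, we have $|k_i-\kappa_0|=O(n/b_n)=o(n)$, $k_2-k_1\le n|\gamma_2-\gamma_1|/b_n+1=o(n)$, while $k_1,k_2,n-k_1,n-k_2$ are all of exact order $n$. The identity in Proposition~\ref{prop:terms}(2) is purely algebraic (it uses $k\le k_0$ but not that $k_0$ is the true change point), so it applies with anchor $\kappa_0$; moreover, since $(Z_t)_{t\le n}\overset{d}{=}(Z_{n+1-t})_{t\le n}$, the case $\kappa_0\le k_1\le k_2$ reduces by the reflection $t\mapsto n+1-t$ to the case $k_1\le k_2\le\kappa_0$, and the case $k_1\le\kappa_0\le k_2$ splits as $G_n^Z(k_1)-G_n^Z(k_2)=[G_n^Z(k_1)-G_n^Z(\kappa_0)]+[G_n^Z(\kappa_0)-G_n^Z(k_2)]$ with $\kappa_0-k_1\le k_2-k_1$ and $k_2-\kappa_0\le k_2-k_1$. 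Thus it suffices to treat $k_1\le k_2\le\kappa_0$.

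In that case Proposition~\ref{prop:terms}(2), applied at $k_1$ and at $k_2$ and subtracted, gives $G_n^Z(k_1)-G_n^Z(k_2)=\sum_{m=1}^6\{S_m(k_2)-S_m(k_1)\}$, and by $(\sum_{m=1}^6 a_m)^4\le 6^3\sum_m a_m^4$ I would bound each $\E[(S_m(k_1)-S_m(k_2))^4]$ separately. Writing $S_m(k)=c_m(k)T_m(k)$ with $c_m(k)$ a ratio of polynomials in $(k,\kappa_0,n)$ and $T_m(k)$ either a double sum $\sum_{u\in I_m(k),v\in J_m(k)}Z_u^TZ_v$ over a rectangle with $I_m(k)\cap J_m(k)=\varnothing$ (for $S_1,S_2,S_6$) or a diagonal--free double sum $\sum_{u\ne v}Z_u^TZ_v$ (for $S_3,S_4,S_5$), and splitting $S_m(k_1)-S_m(k_2)=c_m(k_1)\{T_m(k_1)-T_m(k_2)\}+\{c_m(k_1)-c_m(k_2)\}T_m(k_2)$, one checks on the window that $c_m(k_1)=O(1)$, that $c_m(k_1)-c_m(k_2)=O((k_2-k_1)/n)$ by the mean value theorem, that the symmetric difference of the index sets of $T_m(k_1)$ and $T_m(k_2)$ contains $O(n(k_2-k_1))$ lattice points, and that $T_m(k_2)$ ranges over $O(n^2)$ points. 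Hence the whole estimate follows from a single uniform bound: for disjoint index blocks $I,J$ (or a single block with the diagonal removed) of sizes $m_1,m_2$,
\[
\E\Big[\Big(\textstyle\sum_{u\in I,v\in J}Z_u^TZ_v\Big)^4\Big]\le C\,(m_1m_2)^2\|\Sigma\|_F^4 ,
\]
since then $c_m(k_1)^4\E[(T_m(k_1)-T_m(k_2))^4]=O(n^2(k_2-k_1)^2\|\Sigma\|_F^4)$, and $(c_m(k_1)-c_m(k_2))^4\E[T_m(k_2)^4]=O((k_2-k_1)^4\|\Sigma\|_F^4)$, which is $O(n^2(k_2-k_1)^2\|\Sigma\|_F^4)$ because $k_2-k_1=o(n)$. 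Summing over $m$ and over the three cases then yields the Proposition.

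The hard part will be the uniform moment bound just displayed. Expanding the fourth power produces $\sum\E[\prod_{a=1}^4 Z_{u_a}^TZ_{v_a}]$, and by temporal independence a summand vanishes unless the eight time indices $u_1,\dots,u_4,v_1,\dots,v_4$ group into blocks of size at least two. The ``doubly paired'' configurations (the four $u$'s in two pairs, the four $v$'s in two pairs) contribute exactly the $O((m_1m_2)^2\|\Sigma\|_F^4)$ term; every other partition has strictly fewer free time indices, and one must show it is no larger. This is where Assumption~\ref{ass}\ref{ass:cum} (joint cumulants up to order six) together with the preliminary Lemmas~\ref{lem:order} and \ref{lem:Xi1Xi4} are essential: expanding each expectation into products of cumulants and bounding the componentwise sums $\sum_{l_1,\dots}cum^2(Z_{0,l_1},\dots)$ and $\sum_{l_1,\dots}\E[Z_{\cdot,l_1}\cdots]$ by powers of $\|\Sigma\|_F$ shows that a partition with $r<4$ free time indices contributes $O(n^{r}\|\Sigma\|_F^4)$, hence is dominated since $m_1,m_2=O(n)$. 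The identical bookkeeping, with $\E[Z_u^TZ_v]$, $u\ne v$, replacing a genuine cross term, handles the diagonal--free sums appearing in $S_3,S_4,S_5$. I expect this partition--by--partition accounting — and checking in each case that the correct power of $(k_2-k_1)$ is retained — to be the only genuinely delicate step; the coefficient--and--window bookkeeping above is routine.
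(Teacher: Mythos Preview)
Your approach is correct, but you have taken a longer detour than necessary. The key simplification you missed is that the algebraic identity of Proposition~\ref{prop:terms}(2) can be applied \emph{directly} to the pair $(k_1,k_2)$: since the identity is purely algebraic in $(k,k_0)$, take $k_0\leftarrow k_2$ and $k\leftarrow k_1$ (for $k_1<k_2$). This yields $G_n^Z(k_2)-G_n^Z(k_1)=\sum_{m=1}^6 H_n^{(m)}$ in one shot, with no anchor $\kappa_0$, no reflection/case split, and no product--rule decomposition $c_m(k_1)\{T_m(k_1)-T_m(k_2)\}+\{c_m(k_1)-c_m(k_2)\}T_m(k_2)$. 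The paper then bounds each $\E[(H_n^{(m)})^4]$ directly: for the three rectangular terms $H_n^{(1)},H_n^{(2)},H_n^{(6)}$ it uses Cauchy--Schwarz together with Lemma~\ref{lem:order} (this is exactly your ``uniform bound'' $\E[(\sum_{u\in I,v\in J}Z_u^TZ_v)^4]\le C(m_1m_2)^2\|\Sigma\|_F^4$, obtained in one line), and for the three diagonal--free terms $H_n^{(3)},H_n^{(4)},H_n^{(5)}$ it invokes Lemma~\ref{lem:Xi1Xi4} to get $\E[(\sum_{i,j\in I,i\ne j}Z_i^TZ_j)^4]\le C|I|^4\|\Sigma\|_F^4$.

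So the ``hard part'' you anticipate --- the partition--by--partition cumulant accounting --- is already fully encapsulated in Lemmas~\ref{lem:order} and~\ref{lem:Xi1Xi4}, and no further expansion is needed here. Your route through $\kappa_0$ also forces you to verify coefficient bounds like $c_m(k_1)-c_m(k_2)=O((k_2-k_1)/n)$ and to decompose each $T_m(k_1)-T_m(k_2)$ into rectangles, all of which is avoided by anchoring at $k_2$. Both arguments yield the same bound; the paper's is shorter and more transparent.
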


\begin{remark}
	By Lemma 9.8 in \cite{wangshao19}, the assertion in Proposition \ref{prop:tightness} is a sufficient condition to show the tightness. 
\end{remark}

\begin{lemma}[H\'ajek-R\'enyi's inequality (\cite{birnbaum1961some} or \cite{bai1994least})]\label{HR1}
	Assume that $\{\epsilon_t\}$ is martingale difference sequence with variance $E(\epsilon_t^2) = \sigma^2_t$, and $\{c_k\}$ is a non-increasing positive sequence of constants. Then for $\alpha > 0$,
	$$P\left(\max_{m \leq k \leq n}c_k\left|\sum_{i = 1}^{k}\epsilon_i\right| \geq \alpha\right) \leq \frac{1}{\alpha^2}\left(c_m^2\sum_{i = 1}^{m}\sigma_i^2 + \sum_{i = m+1}^{n}c_i^2\sigma_i^2\right).$$
	Specifically, if $c_k = 1/k$, 
	$$P\left(\max_{m \leq k \leq n}\frac{1}{k}\left|\sum_{i = 1}^{k}\epsilon_i\right| \geq \alpha\right) \leq \frac{1}{\alpha^2}\left(\frac{1}{m^2}\sum_{i = 1}^{m}\sigma_i^2 + \sum_{i = m+1}^{n}\frac{\sigma_i^2}{i^2}\right).$$
\end{lemma}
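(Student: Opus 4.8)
The plan is to run the classical martingale argument of H\'ajek and R\'enyi. Write $S_k=\sum_{i=1}^k\epsilon_i$ and let $\cF_k=\sigma(\epsilon_1,\dots,\epsilon_k)$, so that $(S_k,\cF_k)_{k\ge 1}$ is an $L^2$ martingale. The first step is to split the event of interest according to the first index at which the weighted partial sum crosses level $\alpha$: put $A_m=\{c_m|S_m|\ge\alpha\}$ and, for $m<k\le n$, $A_k=\{\,c_j|S_j|<\alpha\ \text{for}\ m\le j<k,\ c_k|S_k|\ge\alpha\,\}$. These events are pairwise disjoint, each $A_k$ is $\cF_k$-measurable, and $\bigcup_{k=m}^n A_k=\{\max_{m\le k\le n}c_k|S_k|\ge\alpha\}$. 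On $A_k$ we have $\alpha^2\le c_k^2S_k^2$, so Markov's inequality gives $P(A_k)\le\alpha^{-2}c_k^2\,\E[S_k^2\1{A_k}]$; summing over $k$ reduces the claim to bounding $\sum_{k=m}^n c_k^2\,\E[S_k^2\1{A_k}]$ by $c_m^2\sum_{i=1}^m\sigma_i^2+\sum_{i=m+1}^n c_i^2\sigma_i^2$.

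To obtain that bound I would combine two summations by parts with one martingale orthogonality identity. Since $\{c_k\}$ is non-increasing, write $c_k^2=c_n^2+\sum_{j=k}^{n-1}(c_j^2-c_{j+1}^2)$ with every increment $c_j^2-c_{j+1}^2\ge 0$; interchanging the order of summation turns $\sum_{k=m}^n c_k^2\,\E[S_k^2\1{A_k}]$ into $c_n^2\sum_{k=m}^n\E[S_k^2\1{A_k}]+\sum_{j=m}^{n-1}(c_j^2-c_{j+1}^2)\sum_{k=m}^j\E[S_k^2\1{A_k}]$. The crucial estimate is that for $k\le j$ one has $\E[S_k^2\1{A_k}]\le\E[S_j^2\1{A_k}]$: indeed $S_j=S_k+(S_j-S_k)$ where $S_j-S_k$ is a sum of martingale differences each orthogonal to the $\cF_k$-measurable variable $S_k\1{A_k}$, hence the cross term vanishes and $\E[S_j^2\1{A_k}]=\E[S_k^2\1{A_k}]+\E[(S_j-S_k)^2\1{A_k}]\ge\E[S_k^2\1{A_k}]$. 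Using this together with the disjointness of the $A_k$ gives $\sum_{k=m}^j\E[S_k^2\1{A_k}]\le\sum_{k=m}^j\E[S_j^2\1{A_k}]=\E\big[S_j^2\1{\bigcup_{k=m}^j A_k}\big]\le\E[S_j^2]=\sum_{i=1}^j\sigma_i^2$, and likewise $\sum_{k=m}^n\E[S_k^2\1{A_k}]\le\sum_{i=1}^n\sigma_i^2$.

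Plugging these bounds in leaves $\alpha^{-2}\big(c_n^2\sum_{i=1}^n\sigma_i^2+\sum_{j=m}^{n-1}(c_j^2-c_{j+1}^2)\sum_{i=1}^j\sigma_i^2\big)$, and a second, purely algebraic summation by parts---now resumming in $j$ against the partial sums $T_j:=\sum_{i=1}^j\sigma_i^2$ via $T_j-T_{j-1}=\sigma_j^2$---telescopes this into exactly $\alpha^{-2}\big(c_m^2\sum_{i=1}^m\sigma_i^2+\sum_{i=m+1}^n c_i^2\sigma_i^2\big)$, proving the first inequality. The displayed special case follows at once by taking $c_k=1/k$, which is positive and non-increasing. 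The only point that really needs care is the orthogonality step, which hinges on $A_k$ being $\cF_k$-measurable---true because it is defined via the first exceedance time---so that $\E[S_k(S_j-S_k)\1{A_k}]=0$; the remainder is bookkeeping with the two Abel summations and the monotonicity of $\{c_k\}$, so I do not anticipate any substantial obstacle. As a consistency check, when $c_k\equiv c$ this reduces to Doob's/Kolmogorov's $L^2$ maximal inequality.
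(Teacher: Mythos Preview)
Your argument is the classical H\'ajek--R\'enyi proof via first-exceedance decomposition, martingale orthogonality, and Abel summation, and each step checks out, including the final telescoping that recovers $c_m^2\sum_{i=1}^m\sigma_i^2+\sum_{i=m+1}^n c_i^2\sigma_i^2$. The paper itself does not prove this lemma at all: it is stated with attribution to \cite{birnbaum1961some} and \cite{bai1994least} and used as a black box, so there is nothing to compare against beyond noting that your sketch supplies exactly the standard proof those references contain.
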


\begin{proposition}\label{lem:S1}
    Under Assumption \ref{ass}, for any positive $\eta$ and $\epsilon$, there exists $M_0 > 0$ such that for all $M > M_0$ and sufficiently large $n$ and $p$,
$$P\left(\sup_{k \in [k_0/2, k_0-nM/a_n]}\frac{1}{(k_0-k)\|\Sigma\|_F}\left|\left(\sum_{i = k+1}^{k_0}Z_i^T\sum_{j = 1}^{k}Z_j\right)\right| > \eta \sqrt{a_n}\right) < \epsilon.$$
\end{proposition}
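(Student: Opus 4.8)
Write $S_k=\sum_{j=1}^kZ_j$, $T_k=\sum_{i=k+1}^{k_0}Z_i$, so that the quantity in question is $A(k):=\sum_{i=k+1}^{k_0}Z_i^{T}\sum_{j=1}^{k}Z_j=S_k^{T}T_k$, and the plan is to control $\sup_k|A(k)|/\{(k_0-k)\|\Sigma\|_F\}$ by a \emph{weighted} (H\'ajek--R\'enyi type) maximal inequality, after splitting $A(k)$ into two pieces that both vanish at $k=k_0$. Setting $D_a^{(b)}:=\sum_{a\le j<i\le b}Z_i^{T}Z_j$, one checks the algebraic identity $A(k)=(D_1^{(k_0)}-D_1^{(k)})-D_{k+1}^{(k_0)}=:B(k)-C(k)$. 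Since the $Z_i$ are i.i.d.\ with mean $0$, all mixed fourth-order terms cancel, giving exact second moments $\E[(D_a^{(b)})^{2}]=\binom{b-a+1}{2}\|\Sigma\|_F^{2}$ and $\E[(A(k))^{2}]=(k_0-k)k\|\Sigma\|_F^{2}$; no moment assumption beyond finiteness of $\Sigma$ is needed for these. By a union bound it suffices to bound separately the probability that $|B(k)|$, resp.\ $|C(k)|$, exceeds $\tfrac{\eta}{2}\sqrt{a_n}(k_0-k)\|\Sigma\|_F$ for $k\in[k_0/2,\,k_0-m_0]$, where $m_0:=nM/a_n$; throughout I write $m=k_0-k\in[m_0,k_0/2]$.

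For $C(k)=D_{k+1}^{(k_0)}$, let the filtration reveal $Z_{k_0},Z_{k_0-1},\dots$ one at a time: then $\{D_{k_0-m+1}^{(k_0)}\}_{m\ge1}$ is a martingale in $m$, with $m$th increment $Z_{k_0-m+1}^{T}\sum_{i=k_0-m+2}^{k_0}Z_i$ of variance $(m-1)\|\Sigma\|_F^{2}$. Apply Lemma~\ref{HR1} with weights $c_m=1/m$ and threshold $\alpha=\tfrac{\eta}{2}\sqrt{a_n}\|\Sigma\|_F$:
$$P\Bigl(\max_{m_0\le m\le k_0/2}\tfrac{1}{m}\bigl|D_{k_0-m+1}^{(k_0)}\bigr|\ge\alpha\Bigr)\le\frac{\|\Sigma\|_F^{2}}{\alpha^{2}}\Bigl(\frac12+\textstyle\sum_{m_0<r\le k_0/2}\frac{r-1}{r^{2}}\Bigr)\le\frac{C(1+\log a_n)}{\eta^{2}a_n}.$$
By Assumption~\ref{ass}\ref{ass:rate} one has $\log a_n\le\log n=o(a_n)$, so this tends to $0$ as $n\to\infty$; in particular it is $<\epsilon/2$ for all large $n$, for every $M$.

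For $B(k)=D_1^{(k_0)}-D_1^{(k)}$ the natural martingale $D_1^{(\cdot)}$ runs in the wrong direction --- we need a maximum over a left-neighbourhood of $k_0$ --- and a crude ``increment-from-the-right'' estimate produces a factor of order $a_n/M^{2}$, which diverges. The fix is a dyadic decomposition of the range of $m$: set blocks $\mathcal B_\ell=\{m:2^{\ell}m_0\le m<2^{\ell+1}m_0\}$ for $\ell\ge0$ (the last one truncated at $k_0/2$), and on $\mathcal B_\ell$ write $B(k)=(D_1^{(k_0)}-D_1^{(\underline k_\ell)})+(D_1^{(k)}-D_1^{(\underline k_\ell)})$ with $\underline k_\ell:=k_0-2^{\ell+1}m_0$. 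The first summand is constant in $k$ on the block, with $\E[(D_1^{(k_0)}-D_1^{(\underline k_\ell)})^{2}]\le2^{\ell+1}m_0\,k_0\|\Sigma\|_F^{2}\asymp2^{\ell}m_0n\|\Sigma\|_F^{2}$; the second is a forward-martingale increment in $k\in[\underline k_\ell,\overline k_\ell]$, so Doob's $L^{2}$ maximal inequality bounds $\E[\sup_{k\in\mathcal B_\ell}(D_1^{(k)}-D_1^{(\underline k_\ell)})^{2}]$ by the same order. Hence $\E[\sup_{k\in\mathcal B_\ell}B(k)^{2}]\le C\,2^{\ell}m_0n\|\Sigma\|_F^{2}$, and since $k_0-k\ge2^{\ell}m_0$ on $\mathcal B_\ell$, Chebyshev gives
$$P\Bigl(\sup_{k\in\mathcal B_\ell}\frac{|B(k)|}{(k_0-k)\|\Sigma\|_F}>\frac{\eta}{2}\sqrt{a_n}\Bigr)\le\frac{C\,2^{\ell}m_0n}{\eta^{2}a_n\,2^{2\ell}m_0^{2}}=\frac{C}{\eta^{2}M\,2^{\ell}}.$$
Summing the geometric series over $\ell\ge0$ yields a bound $C'/(\eta^{2}M)$, which is $<\epsilon/2$ once $M>M_0:=2C'/(\eta^{2}\epsilon)$. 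Combining the $B$ and $C$ estimates proves the claim.

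The step I expect to be the main obstacle is the $B(k)$ term. Unlike $C(k)$, which has a clean forward-martingale structure amenable to Lemma~\ref{HR1}, the increment $D_1^{(k_0)}-D_1^{(k)}$ is a ``tail'' of a martingale, so it is not a forward martingale in $m$, and any estimate that does not respect the scale --- the fact that a larger window $k_0-k$ tolerates a correspondingly larger fluctuation --- blows up as $a_n\to\infty$. Getting the dyadic/scale-by-scale bookkeeping right (so that the final constant is $O(1/M)$ \emph{uniformly} in $n$, which is exactly what lets an arbitrary $\epsilon$ be beaten by taking $M$ large) is where the real work lies; the rest is a routine application of Lemma~\ref{HR1}, Doob's inequality, and the exact second-moment identities above.
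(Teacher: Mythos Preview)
Your argument is correct (modulo a harmless sign slip: on each block you should have $B(k)=(D_1^{(k_0)}-D_1^{(\underline k_\ell)})-(D_1^{(k)}-D_1^{(\underline k_\ell)})$, not $+$; this does not affect the $L^2$ bounds). The second-moment identities, the H\'ajek--R\'enyi treatment of $C(k)$, and the dyadic block argument for $B(k)$ all check out, and your final bound $C'/(\eta^{2}M)$ for the $B$-part plus an $o(1)$ for the $C$-part yields the claim.

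Your route is genuinely different from the paper's. The paper does not use the decomposition $A(k)=B(k)-C(k)$ at all; instead it splits the index range at the intermediate scale $n/\sqrt{a_n}$ into $\Omega_n^{(2)}=[k_0/2,\,k_0-n/\sqrt{a_n}]$ and $\Omega_n^{(3)}=(k_0-n/\sqrt{a_n},\,k_0-nM/a_n)$. On $\Omega_n^{(2)}$ it invokes the process convergence of Theorem~\ref{thm:weak} (with $b_n=\sqrt{a_n}$), so that after rescaling the supremum is $O_p(1)$ while the threshold is $\eta\sqrt{b_n}\to\infty$. On $\Omega_n^{(3)}$ it rewrites $\sum_{i=k+1}^{k_0}Z_i^{T}\sum_{j=1}^{k}Z_j$ via the ``full'' inner sums $\sum_{j\neq i}^{n}Z_j$, then shifts indices and further decomposes into three martingale pieces $U^{(1)},U^{(2)},U^{(3)}$, each controlled by H\'ajek--R\'enyi or Chebyshev to give bounds of order $C/M$ or $C/M^{2}$.

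Your argument is shorter and more elementary: it uses only exact second moments, Doob's $L^{2}$ maximal inequality, Lemma~\ref{HR1}, and a dyadic chaining, and in particular it does not appeal to Theorem~\ref{thm:weak} or to Assumptions~\ref{ass}\ref{ass:tr}--\ref{ass}\ref{ass:cum}. What the paper's approach buys is reuse of its central process-convergence infrastructure; what yours buys is a self-contained proof under weaker hypotheses and a cleaner explanation of why the bound scales like $1/M$.
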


\begin{lemma}\label{lem:rateterms}
	Under Assumptions \ref{ass}, for any fixed positive constant $M$, we have the following results for some positive constants $C_1,C_2,C_3$,
	\begin{enumerate}[label = (\alph*)]
	    \item $\max_{1\leq k_1 < k_2 \leq n}\left|\sum_{i = k_1+1}^{k_2}\delta^TZ_j\right| = o_p(n\|\delta\|^2/\sqrt{a_n})$. \label{lem:sumdelta}
	    \item $\max_{1\leq k_1<k_2 \leq n}\left|\sum_{i = k_1}^{k_2}\sum_{j = k_1}^{i}Z_{i+1}^TZ_j\right| = O_p(n\|\Sigma\|_F)$.\label{lem:sum}
	    \item $\max_{1\leq k_1 < k_2 < k_3\leq n}\left|\sum_{i = k_1}^{k_2}\sum_{j = k_2+1}^{k_3}Z_{i}^TZ_j\right| = O_p(n\|\Sigma\|_F)$.\label{lem:sumcross}
	    \item $\max_{k_1\leq k \leq k_2}\frac{1}{k}\left|\sum_{i = 1}^{k}\delta^TZ_i\right| = o_p(\sqrt{n}\|\delta\|^2/\sqrt{a_nk_1})$, for any $1 \leq k_1 \leq k_2 \leq n$. \label{lem:avedelta1}
	    \item $\max_{1\leq k \leq n}\frac{1}{n-k}\left|\sum_{i = k+1}^{n}\delta^TZ_i\right| = o_p(\sqrt{n}\|\delta\|^2/\sqrt{a_n})$.\label{lem:avedelta2}
	    \item $P\left(\max_{1 \leq k \leq k_0 - nM/a_n}\frac{1}{k_0-k}|\sum_{i = k+1}^{k_0}\sum_{k_0+1}^{n}Z_i^TZ_j| > \lambda\right) \leq \frac{C_1(n-k_0)\|\Sigma\|_F^2}{\lambda^2}\left(\frac{a_n}{nM}\right)$.\label{lem:HRcross}
	    \item $P\left(\max_{1 \leq k \leq k_0 - nM/a_n}|\sum_{i = 1}^{k}\sum_{j = k_0+1}^{n}Z_i^TZ_j| > \lambda\right) \leq \frac{C_2k_0(n-k_0)\|\Sigma\|_F^2}{\lambda^2}$.\label{lem:HRcrosssum}
	    \item $P\left(\max_{1 \leq k \leq k_0 - nM/a_n}\frac{1}{k_0-k}|\sum_{i,j = k+1,i\neq j}^{k_0}Z_i^TZ_j| > \lambda\right) \leq \frac{C_3\log(k_0)\|\Sigma\|_F^2}{\lambda^2}$.
	    \label{lem:HRmid}
	\end{enumerate}

\end{lemma}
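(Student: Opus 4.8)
The plan is to read Lemma~\ref{lem:rateterms} as a list of maximal inequalities for partial sums of the i.i.d.\ noise $\{Z_i\}$ --- some scalar, some $1/k$-weighted, some bilinear --- which fall into three natural groups. In every group Assumption~\ref{ass}\ref{ass:delta}, in the equivalent form $\sqrt{\delta^{T}\Sigma\delta}=o(\sqrt{n}\|\delta\|^{2}/\sqrt{a_n})$, is the device that upgrades a crude variance bound to the claimed $o_p$ rate, while Assumptions~\ref{ass}\ref{ass:tr}--\ref{ass}\ref{ass:cum} enter only through Proposition~\ref{prop:Q}. I would dispatch the groups in the order $\{(a),(d),(e)\}$, then $\{(b),(c)\}$, then $\{(f),(g),(h)\}$.

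\emph{Group 1 (scalar terms in $\delta$): parts (a), (d), (e).} Set $S_k=\sum_{i=1}^{k}\delta^{T}Z_i$, a mean-zero random walk with i.i.d.\ increments of variance $\delta^{T}\Sigma\delta$. For (a) I would use $\max_{k_1<k_2}|S_{k_2}-S_{k_1}|\le 2\max_{0\le k\le n}|S_k|$ together with Kolmogorov's maximal inequality, $P(\max_{k\le n}|S_k|>\lambda)\le n\delta^{T}\Sigma\delta/\lambda^{2}$, so $\max_{k\le n}|S_k|=O_p(\sqrt{n\delta^{T}\Sigma\delta})=o_p(n\|\delta\|^{2}/\sqrt{a_n})$. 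For (d) I would apply the $c_k=1/k$ case of the H\'ajek--R\'enyi inequality (Lemma~\ref{HR1}) with $m=k_1$; the bound is $\delta^{T}\Sigma\delta\,\alpha^{-2}(k_1^{-1}+\sum_{i>k_1}i^{-2})\le 2\delta^{T}\Sigma\delta/(\alpha^{2}k_1)$, so $\max_{k_1\le k\le k_2}k^{-1}|S_k|=O_p(\sqrt{\delta^{T}\Sigma\delta/k_1})=o_p(\sqrt{n}\|\delta\|^{2}/\sqrt{a_n k_1})$. Part (e) is the time-reversed version: with $Y_j=\delta^{T}Z_{n+1-j}$ one has $\sum_{i=k+1}^{n}\delta^{T}Z_i=\sum_{j=1}^{n-k}Y_j$, and Lemma~\ref{HR1} with $c_m=1/m$ and $m=1$ gives $\max_{1\le m\le n-1}m^{-1}|\sum_{j\le m}Y_j|=O_p(\sqrt{\delta^{T}\Sigma\delta})=o_p(\sqrt{n}\|\delta\|^{2}/\sqrt{a_n})$.

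\emph{Group 2 (triangular and cross bilinear terms): parts (b), (c).} Write $T(a,b)=\sum_{a\le i<j\le b}Z_i^{T}Z_j$; a reindexing identifies the double sum in (b) with $T(k_1,k_2+1)$. Proposition~\ref{prop:Q} says $\frac{\sqrt{2}}{n\|\Sigma\|_F}T(\floor{na}+1,\floor{nb})\rightsquigarrow Q(a,b)$ in $l_{\infty}([0,1]^{2})$, and since the Gaussian sheet $Q$ has a.s.\ bounded sample paths, the continuous mapping theorem for the supremum functional yields $\sup_{a<b}\frac{\sqrt{2}}{n\|\Sigma\|_F}|T(\floor{na}+1,\floor{nb})|=O_p(1)$; as $(a,b)$ ranges over $[0,1]^{2}$ this hits every index pair, so $\max_{1\le k_1<k_2\le n}|T(k_1,k_2)|=O_p(n\|\Sigma\|_F)$, which is (b). For (c) I would use the elementary decomposition $\sum_{i=k_1}^{k_2}\sum_{j=k_2+1}^{k_3}Z_i^{T}Z_j=T(k_1,k_3)-T(k_1,k_2)-T(k_2+1,k_3)$: each of the three pieces is uniformly $O_p(n\|\Sigma\|_F)$ by (b), hence so is the cross term.

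\emph{Group 3 (post-break interaction terms): parts (f), (g), (h).} For (g) and (f) I would condition on $W=\sum_{j=k_0+1}^{n}Z_j$, which is independent of $Z_1,\dots,Z_{k_0}$. Given $W$, $k\mapsto\sum_{i=1}^{k}Z_i^{T}W$ is a martingale with per-step variance $W^{T}\Sigma W$, so Kolmogorov gives $P(\max_{k\le k_0}|\sum_{i\le k}Z_i^{T}W|>\lambda\mid W)\le k_0 W^{T}\Sigma W/\lambda^{2}$; taking expectations and using $\E[W^{T}\Sigma W]=(n-k_0)\,\mathrm{tr}(\Sigma^{2})=(n-k_0)\|\Sigma\|_F^{2}$ gives (g). For (f), reparametrize $m=k_0-k\ge nM/a_n$; given $W$, $m\mapsto\sum_{s=1}^{m}Z_{k_0-s+1}^{T}W$ is again a martingale with step variance $W^{T}\Sigma W$, and the $c_m=1/m$ H\'ajek--R\'enyi bound with $m_0=\lceil nM/a_n\rceil$ is $W^{T}\Sigma W\,\lambda^{-2}(m_0^{-1}+\sum_{m>m_0}m^{-2})\le C W^{T}\Sigma W/(\lambda^{2}m_0)$, whence the factor $a_n/(nM)$ after taking expectations. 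Part (h) is the one I expect to be the real obstacle, because there the martingale structure must be found \emph{inside} the quadratic form: with $S_m=\sum_{i=k_0-m+1}^{k_0}Z_i$ and $R_m=\sum_{k_0-m+1\le i\ne j\le k_0}Z_i^{T}Z_j=\|S_m\|^{2}-\sum_{i=k_0-m+1}^{k_0}\|Z_i\|^{2}$, the recursion $R_m=R_{m-1}+2Z_{k_0-m+1}^{T}S_{m-1}$ shows that, relative to $\mathcal{F}_m=\sigma(Z_{k_0},\dots,Z_{k_0-m+1})$, $R_m$ is a martingale whose $s$-th increment has second moment $4\,\E[S_{s-1}^{T}\Sigma S_{s-1}]=4(s-1)\|\Sigma\|_F^{2}$. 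Applying Lemma~\ref{HR1} with $c_m=1/m$ and $m_0=\lceil nM/a_n\rceil$, the ``$m\le m_0$'' block contributes $m_0^{-2}\sum_{s\le m_0}4(s-1)\|\Sigma\|_F^{2}\le 2\|\Sigma\|_F^{2}$ and the tail contributes $\sum_{s>m_0}4(s-1)s^{-2}\|\Sigma\|_F^{2}\le 4\|\Sigma\|_F^{2}\sum_{s=2}^{k_0}s^{-1}\le C\|\Sigma\|_F^{2}\log k_0$, which is exactly (h). The subtle point --- and the only genuinely new computation --- is that the unweighted variance of $R_m$ grows like $m^{2}$, and it is precisely the $1/m$ weighting of H\'ajek--R\'enyi combined with the merely linear growth of the increment variances that collapses this to the $\log k_0$ rate; checking that the exponents line up so that the $a_n/(nM)$ factor in (f) and the $\log k_0$ factor in (h) (and nothing larger) come out is where the care goes, and all remaining steps are routine bookkeeping with the generic constants $C_1,C_2,C_3$.
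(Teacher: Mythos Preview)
Your proposal is correct and follows essentially the same route as the paper: Kolmogorov's inequality for (a), H\'ajek--R\'enyi (Lemma~\ref{HR1}) with $c_k=1/k$ for (d) and (e), Proposition~\ref{prop:Q} plus continuous mapping for (b), the triangular decomposition for (c), and the martingale structure with increment variance $\propto l\|\Sigma\|_F^{2}$ for (h). The only cosmetic difference is in (f) and (g), where the paper re-indexes via a distributional coupling $Z_i'$ rather than conditioning on $W=\sum_{j>k_0}Z_j$, but these are the same argument and yield the same bounds.
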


\subsection{Proof of Lemma \ref{lem:max}}
Consider the case $k \leq k_0$ first. By the definition of $G_n(k)$,
\begin{align*}
\E[G_n(k)] &= \frac{1}{k(n-k)}\sum_{i_1,i_2 = 1, i_1 \neq i_2}^k\sum_{j_1,j_2 = k_0+1}^{n}\|\delta\|^2 = \frac{k(k-1)(n-k_0)(n-k_0-1)}{k(n-k)}\|\delta\|^2 \\
&= \frac{(k-1)(n-k_0)(n-k_0-1)}{(n-k)}\|\delta\|^2,
\end{align*}
which is an increasing function of $k$. Thus it achieves its maximum at $k = k_0$ where $\E[G_n(k_0)] = (k_0-1)(n-k_0-1)\|\delta\|^2$. By similar arguments, we see that when $k \geq k_0$, $\E[G_n(k)] = (n-k-1)k_0(k_0-1)\|\delta\|^2/k$, which achives the maximum when $k = k_0$. This completes the proof.

\subsection{Proof of Theorem \ref{thm:rate}}
Assume $k \leq k_0$ first and we need to show for any $\epsilon>0$, there exists $M_0,N_0 > 0$, such that for any $M > \max(M_0,1)$, $n > N_0$,
$$P\left(\max_{k \in \Omega_n(M)}G_n(k) \geq G_n(k_0) \right) < \epsilon,$$
where $\Omega_n(M) = \{k:1 \leq k < k_0 - nM/a_n\}$. We further decompose $\Omega_n(M) = \bigcup_{i = 1}^{3}\Omega_n^{(i)}(M)$, where $\Omega_n^{(1)}(M) = \{k:1 \leq k < k_0/2\}$, $\Omega_n^{(2)}(M) = \{k:k_0/2 \leq k \leq k_0 - n/\sqrt{a_n}\}$ and $\Omega_n^{(3)}(M) = \{k: k_0 - n/\sqrt{a_n} < k < k_0 - nM/a_n\}$. It is easy to see that the three sets $\Omega_n^{(1)}(M), \Omega_n^{(2)}(M)$ and $\Omega_n^{(3)}(M)$ are disjoint for large enough $n$, say $n > N_0$.

For $\Omega_n^{(1)}(M)$,
\begin{align*}
&P\left(\max_{k \in \Omega_n^{(1)}(M)}G_n(k) \geq G_n(k_0) \right)\\
= &P\left(\max_{k \in \Omega_n^{(1)}(M)}\{G_n(k) - \E[G_n(k)] - G_n(k_0) + \E[G_n(k_0)] + \E[G_n(k)] - \E[G_n(k_0)]\} \geq 0\right)\\
\leq & P\left(\max_{k = 1,2,...,k_0}(|G_n(k) - \E[G_n(k)]| + |G_n(k_0) - \E[G_n(k_0)]|) + \max_{k \in \Omega_n^{(1)}(M)}(\E[G_n(k)] - \E[G_n(k_0)]) \geq 0\right)\\
\end{align*}

Since $\E[G_n(k)] = (k-1)(n-k_0)(n-k_0-1)\|\delta\|^2/(n-k)$ as stated in Lemma \ref{lem:max},
\begin{align*}
\max_{k \in \Omega_n^{(1)}(M)}\left(\E[G_n(k)] - \E[G_n(k_0)]\right) =& \E[G_n(k_0/2)] - \E[G_n(k_0)] = -\frac{(n-k_0-1)(n-1)k_0}{2(n-k_0/2)}\|\delta\|^2.
\end{align*}

Then
\begin{align*}
&P\left(\max_{k \in \Omega_n^{(1)}(M)}G_n(k) \geq G_n(k_0) \right)\\
\leq & P\left(2\max_{k = 1,2,...k_0}|G_n(k) - \E[G_n(k)]| \geq \frac{(n-k_0-1)(n-1)k_0}{2(n-k_0/2)}\|\delta\|^2\right).
\end{align*}

Notice that by Proposition \ref{prop:terms},
\begin{align*}
\max_{k = 1,2,...,k_0}|G_n(k) - \E[G_n(k)]| &\leq  \max_{k = 1,2,...,k_0}|G_n^Z(k)| + \max_{k = 1,...,k_0}\left|\frac{2(n-k_0)(n-k-1)(k-1)}{k(n-k)}\sum_{j = 1}^{k}\delta^TZ_j\right| \\
&+ \max_{k = 1,...,k_0}\left|\frac{2(k-1)(n-k_0-1)}{n-k}\sum_{j = k_0+1}^{n}\delta^TZ_j\right| \\
&+ \max_{k = 1,...,k_0}\left|\frac{2(k-1)(n-k_0)}{n-k}\sum_{j = k+1}^{k_0}\delta^TZ_{j}\right|.
\end{align*}

By Lemma \ref{lem:rateterms}\ref{lem:sumdelta}, the last three terms in the above inequalities are all $o_p(n^2\|\delta\|^2/\sqrt{a_n})$. In addition, 
\begin{align*}
\max_{k \leq k_0}|G_n^Z(k)| &\leq \max_{k \leq k_0}\left|\frac{2(n-k)(n-k-1)}{k(n-k)}\sum_{i = 1}^{k-1}\sum_{j = 1}^{i}Z_{i+1}^TZ_j\right| + \max_{k \leq k_0}\left|\frac{2k(k-1)}{k(n-k)}\sum_{i = k+1}^{n-1}\sum_{j = k+1}^{i}Z_{i+1}^TZ_j\right|\\
&+ \max_{k \leq k_0}\left|\frac{2(k-1)(n-k-1)}{k(n-k)}\sum_{i = 1}^{k}\sum_{j = k+1}^{n}Z_{i+1}^TZ_j\right|\\
&= O_p(n\sqrt{\log(n)}\|\Sigma\|_F) + O_p(n\|\Sigma\|_F) + O_p(n\|\Sigma\|_F) = O_p(n\sqrt{\log(n)}\|\Sigma\|_F),
\end{align*} 
where the bound for the first term is due to Lemma \ref{HR1} by letting $c_k = 1/k$ and $\epsilon_i = Z_{i+1}^T\sum_{j = 1}^iZ_j$, and the bounds for the second and third term are due to \ref{lem:sum} and \ref{lem:sumcross} in Lemma \ref{lem:rateterms}.

Hence
\begin{align*}
&P\left(\max_{k \in \Omega_n^{(1)}(M)}G_n(k) \geq G_n(k_0) \right)\\
\leq & P\left(2\max_{k = 1,2,...,k_0}|G_n(k) - \E[G_n(k)]| \geq \frac{(n-k_0-1)(n-1)k_0}{2(n-k_0/2)}\|\delta\|^2\right)\\
=& P\left(\frac{2\max_{k = 1,2,...,k_0}|G_n(k) - \E[G_n(k)]|}{\frac{(n-k_0-1)(n-1)k_0}{2(n-k_0/2)}\|\delta\|^2} \geq 1 \right)\rightarrow 0
\end{align*}
for sufficiently large $n$, since under Assumption \ref{ass}\ref{ass:delta} and \ref{ass}\ref{ass:rate},
\begin{align*}
    \frac{2\max_{k = 1,2,...,k_0}|G_n(k) - \E[G_n(k)]|}{\frac{(n-k_0-1)(n-1)k_0}{2(n-k_0/2)}\|\delta\|^2}
    &= \frac{\{O_p(n\sqrt{\log(n)}\|\Sigma\|_F) + o_p(n^{2}\|\delta\|^2/\sqrt{a_n})\}(n-k_0/2)}{(n-k_0-1)(n-1)k_0\|\delta\|^2}\\
    &= O_p\left(\frac{\sqrt{\log(n)}}{\sqrt{a_n}}\right) + o_p(a_n^{-1/2}) = o_p(1).
\end{align*}

For $k \in \Omega_n^{(2)}(M) \bigcup \Omega_n^{(3)}(M)$, we need to decompose $G_n(k_0) - G_n(k)$ further as 
\begin{align*}
&G_n(k_0) - G_n(k)\\
= & G_n^Z(k_0) - G_n^Z(k) + \E[G_n(k_0)] - \E[G_n(k)]- \frac{2(k_0-1)(n-k_0-1)}{k_0}\sum_{i = 1}^{k_0}\delta^TZ_i \\
&+ \frac{2(k_0-1)(n-k_0-1)}{n-k_0}\sum_{j = k_0+1}^{n}\delta^TZ_j +\frac{2(n-k-1)(k-1)(n-k_0)}{k(n-k)}\sum_{i = 1}^{k}\delta^TZ_i \\
&- \frac{2(k-1)(n-k_0-1)}{n-k}\sum_{j = k_0+1}^{n}\delta^TZ_j - \frac{2(k-1)(n-k_0)}{n-k}\sum_{j = k+1}^{k_0}\delta^TZ_j \\
= & G_n^Z(k_0) - G_n^Z(k) + (k_0-k)\frac{(n-k_0-1)(n-1)}{n-k}\|\delta\|^2+\frac{2(n-1)(k_0-k)(k_0+k-n)}{k_0(n-k)}\frac{1}{k}\sum_{i = 1}^{k}\delta^TZ_{i}\\
&+\frac{2(n-k_0-1)(n-1)(k-k_0)}{n-k}\frac{1}{n-k_0}\sum_{i = k_0+1}^{n}\delta^TZ_i\\
&+\frac{2(n-1)(k_0^2 - nk_0 - k + n)}{k_0(n-k)}\sum_{i = k+1}^{k_0}\delta^TZ_i\\
=&G_n^Z(k_0) - G_n^Z(k) + (k_0-k)\frac{(n-k_0-1)(n-1)}{n-k}\|\delta\|^2 + R_1(k) + R_2(k) + R_3(k).
\end{align*}

Observing that the third term is always nonnegative, we want to show that it dominates the other terms with probability converging to 1, for every $k \in \Omega_n^{(2)}(M) \bigcup \Omega_n^{(3)}(M)$. Then $G_n(k_0) - G_n(k)$ is nonnegative for every $k \in \Omega_n^{(2)}(M) \bigcup \Omega_n^{(3)}(M)$  with probability converging to 1. Specifically we want to show that for any fixed $\eta > 0$ and for any $\epsilon > 0$, when $n$,$p$ and $M$ are sufficiently large,
\begin{equation} \label{eq:R}
  P\left(\max_{k \in \Omega_n^{(2)}(M) \bigcup \Omega_n^{(3)}(M)}\frac{|R_i(k)|}{(k_0-k)(n-k_0-1)(n-1)\|\delta\|^2/(n-k)} > \eta\right) \leq \epsilon  
\end{equation} for all $i = 1,2,3$, and
\begin{align}\label{eq:Gz}
	&P\left(\max_{k \in \Omega_n^{(2)}(M) \bigcup \Omega_n^{(3)}(M)} \frac{|G_n^Z(k_0) - G_n^Z(k)|}{(k_0-k)(n-k_0-1)(n-1)\|\delta\|^2/(n-k)} > \eta\right) < \epsilon.
	\end{align}
To verify Equation (\ref{eq:R}), it follows from Lemma \ref{lem:rateterms}\ref{lem:sumdelta} that for $i = 1,2$, $$\max_{k \in \Omega_n^{(2)}(M) \bigcup \Omega_n^{(3)}(M)}R_i(k)/(k_0-k) = o_p(n\|\delta\|^2/\sqrt{a_n}).$$ Hence for $i = 1,2$,

\begin{align*}
    &\max_{k \in \Omega_n^{(2)}(M) \bigcup \Omega_n^{(3)}(M)}\frac{R_i(k)}{(k_0-k)(n-k_0-1)(n-1)\|\delta\|^2/(n-k)} \\
    \leq &\frac{n}{(n-k_0-1)(n-1)\|\delta\|^2}\max_{k \in \Omega_n^{(2)}(M) \bigcup \Omega_n^{(3)}(M)}R_i(k)/(k_0-k) = o_p(1/\sqrt{a_n}) = o_p(1).
\end{align*}

For $R_3(k)$, we apply Lemma \ref{HR1} by setting $c_k$ = $(k_0-k)^{-1}$, summation from ${nM}/{a_n}$ to $k_0/2$ and $\epsilon_i = \delta^TZ_i$,
\begin{align*}
    &P\left(\max_{k \in \Omega_n^{(2)}(M) \bigcup \Omega_n^{(3)}(M)}\frac{|R_3(k)|}{(k_0-k)(n-k_0-1)(n-1)\|\delta\|^2/(n-k)} > \eta\right)\\
    =&P\left(\max_{k \in \Omega_n^{(2)}(M) \bigcup \Omega_n^{(3)}(M)}\frac{2(n-1)|(k_0^2 - nk_0 - k + n)|}{(n-k_0-1)(n-1)k_0\|\delta\|^2}\frac{1}{k_0 - k}\left|\sum_{i = k+1}^{k_0}\delta^TZ_i\right| > \eta\right)\\
    =&P\left(\max_{k \in \Omega_n^{(2)}(M) \bigcup \Omega_n^{(3)}(M)}\frac{1}{k_0 - k}\left|\sum_{i = k+1}^{k_0}\delta^TZ_i\right| > C\eta\|\delta\|^2\right) \leq \frac{\delta^T\Sigma\delta}{C^2\eta^2\|\delta\|^4nM/a_n}\\
    &=o\left(\frac{n\|\delta\|^4}{\|\delta\|^4nM}\right) = o(1).
    \end{align*}
    Thus $R_3(k)$ is also uniformly dominated.
    
As for Equation (\ref{eq:Gz}), it is equivalent to show
	$$P\left(\sup_{k \in \Omega_n^{(2)}(M) \bigcup \Omega_n^{(3)}(M)}  \frac{1}{(k_0-k)}|G_n^Z(k) - G_n^Z(k_0)| \geq n\eta\|\delta\|^2\right) < \epsilon,$$
	for any positive constant $\eta$.

By Proposition \ref{prop:terms} we have
\begin{align*}
&G_n^Z(k_0) - G_n^Z(k) \\
=& 2\frac{(n-k-1)(n-1)}{(n-k)k}\sum_{i = k+1}^{k_0}\sum_{j = 1}^{k}Z_i^TZ_j - 2\frac{(k_0-1)(n-1)}{(n-k_0)k_0}\sum_{i = k+1}^{k_0}\sum_{j = k_0+1}^{n}Z_i^TZ_j\\
+ &\frac{(n-1)(n-k-k_0)}{k(n-k_0)}\sum_{i,j = k+1, i\neq j}^{k_0}Z_i^TZ_j- \frac{(n-1)(k_0-k)}{kk_0}\sum_{i,j=1,i\neq j}^{k_0}Z_i^TZ_j\\
+&\frac{(n-1)(k_0-k)}{(n-k_0)(n-k)}\sum_{i,j = k+1,i\neq j}^{n}Z_i^TZ_j - \frac{2(n-1)(n-k_0-k)(k_0-k)}{kk_0(n-k)(n-k_0)}\sum_{i = 1}^{k}\sum_{j = k_0+1}^{n}Z_i^TZ_j\\
=&S_1(k)+S_2(k)+S_3(k)+S_4(k)+S_5(k)+S_6(k).
\end{align*}

The remaining steps are to show $S_i(k)/(k_0-k)$ is dominated by $n\|\delta\|^2$ on the set $\Omega_n^{(2)}(M) \bigcup \Omega_n^{(3)}(M)$, for all $i = 1,\cdots,6$. Note that in this set, $k/n > \tau_0/2$. To show the result, by Lemma \ref{lem:rateterms}\ref{lem:sum}, $\sup_{k < k_0}|S_4(k)|/(k_0-k)$ and $\sup_{k < k_0}|S_5(k)|/(k_0-k)$ are all $O_p(\|\Sigma\|_F)$, hence $o_p(n\|\delta\|^2)$ since $\sqrt{a_n} \rightarrow \infty$ by Assumption \ref{ass}\ref{ass:rate}. So $S_4(k)$ and $S_5(k)$ are both asymptotically negligible.

For $S_2(k)$,
$$\frac{1}{k_0-k}|S_2(k)| = \frac{2(k_0-1)(n-1)}{(n-k_0)k_0}\left|\frac{1}{k_0-k}\sum_{i = k+1}^{k_0}\sum_{j = k_0+1}^{n}Z_i^TZ_j\right|. $$
By Lemma \ref{lem:rateterms}\ref{lem:HRcross}, for any positive constant $\eta$,
$$P\left(\max_{k_0/2 \leq k \leq k_0 - nM/a_n}\frac{1}{k_0-k}\left|\sum_{i = k+1}^{k_0}\sum_{k_0+1}^{n}Z_i^TZ_j\right| > \eta n\|\delta\|^2\right) \leq \frac{C_1(n-k_0)\|\Sigma\|_F^2}{\eta^2n^2\|\delta\|^4}\left(\frac{a_n}{nM}\right) \leq \frac{C_1'}{M} \leq \epsilon/6$$
for sufficiently large $M$. Hence $S_2(k)$ is dominated. 

For $S_3(k)$, by Lemma \ref{lem:rateterms}\ref{lem:HRmid},
$$P\left(\max_{k_0/2+1 \leq k \leq k_0 - nM/a_n}\frac{1}{k_0-k}\left|\sum_{i,j = k+1,i\neq j}^{k_0}Z_i^TZ_j\right| > \eta n \|\delta\|^2\right) \leq \frac{C_3\log(k_0-Mn/a_n)\|\Sigma\|_F^2}{\eta^2n^2\|\delta\|^4} \leq \epsilon/6$$
for all large $n$ under Assumption \ref{ass}\ref{ass:rate}.

For $S_6(k)$, by Lemma \ref{lem:rateterms}\ref{lem:HRcrosssum},
$$P\left(\max_{k_0/2+1 \leq k \leq k_0 - nM/a_n}\frac{1}{k}\left|\sum_{i = 1}^{k}\sum_{j = k_0+1}^{n}Z_i^TZ_j\right| > \eta n\|\delta\|^2\right) \leq \frac{C_2(n-k_0)\|\Sigma\|_F^2}{\eta^2n^2k_0\|\delta\|^4} \leq \frac{C_2'}{a_n} \leq \epsilon/6$$
for all large enough $n$.

It remains to deal with $S_1(k)$. Note that 
\begin{align*}
S_1(k) &= \frac{2(n-k-1)(n-1)}{(n-k)k}\left(\sum_{i = k+1}^{k_0}Z_i^T\sum_{j = 1}^{k}Z_j\right).
\end{align*}
For any $\eta > 0$, we want to show
\begin{align*}
P\left(\sup_{k \in \Omega_n^{(2)}(M) \bigcup \Omega_n^{(3)}(M)}\left|\frac{2(n-k-1)(n-1)}{(k_0-k)(n-k)k}\left(\sum_{i = k+1}^{k_0}Z_i^T\sum_{j = 1}^{k}Z_j\right)\right| > \eta n\|\delta\|^2\right) < \epsilon/6.
\end{align*}

This is equivalent to prove that for any positive $\eta$, 
$$P\left(\sup_{k \in \Omega_n^{(2)}(M)\bigcup\Omega_n^{(3)}(M)}\frac{1}{(k_0-k)\|\Sigma\|_F}\left|\left(\sum_{i = k+1}^{k_0}Z_i^T\sum_{j = 1}^{k}Z_j\right)\right| > \eta \sqrt{a_n}\right) < \epsilon,$$
which was proved in Proposition \ref{lem:S1}. The proof is thus complete.

\subsection{Proof of Theorem \ref{thm:weak}}
In view of Proposition \ref{prop:tightness} which shows the tightness, we shall only present the proof for the finite-dimensional convergence. For any $k < k_0$, it follows from Proposition \ref{prop:terms} that
\begin{align*}
&G_n^Z(k_0) - G_n^Z(k)\\
= & 2\frac{(n-k-1)(n-1)}{(n-k)k}\sum_{i = k+1}^{k_0}\sum_{j = 1}^{k}Z_i^TZ_j - 2\frac{(k_0-1)(n-1)}{(n-k_0)k_0}\sum_{i = k+1}^{k_0}\sum_{j = k_0+1}^{n}Z_i^TZ_j\\
+ &\frac{(n-1)(n-k-k_0)}{k(n-k_0)}\sum_{i,j = k+1, i\neq j}^{k_0}Z_i^TZ_j- \frac{(n-1)(k_0-k)}{kk_0}\sum_{i,j=1,i\neq j}^{k_0}Z_i^TZ_j\\
+&\frac{(n-1)(k_0-k)}{(n-k_0)(n-k)}\sum_{i,j = k+1,i\neq j}^{n}Z_i^TZ_j - \frac{2(n-1)(n-k_0-k)(k_0-k)}{kk_0(n-k)(n-k_0)}\sum_{i = 1}^{k}\sum_{j = k_0+1}^{n}Z_i^TZ_j\\
=&S_1(k)+S_2(k)+S_3(k)+S_4(k)+S_5(k)+S_6(k).
\end{align*}

Here we only need to consider $k = \floor{k_0-n\gamma/b_n}$ for any $\gamma \in (0,M]$. For simplicity we assume $n\gamma/b_n$ is an integer. Since $b_n \rightarrow \infty$ and $k - k_0 =  o(n)$, we have $k/n \rightarrow \tau_0$.

For $S_3(k)$,
\begin{align*}
&Var\left(\frac{\sqrt{2}\sqrt{b_n}}{n\|\Sigma\|_F}S_3(k)\right)\\
=&\frac{8b_n}{n^2\|\Sigma\|_F^2}\frac{(n-1)^2(n-k-k_0)^2(k_0-k+1)(k_0-k)}{k^2(n-k_0)^2}\|\Sigma\|_F^2\\
\leq &\frac{C\gamma^2}{b_n} + \frac{C\gamma}{n} \rightarrow 0
\end{align*}
for some positive constant $C$. Hence $\frac{\sqrt{2}\sqrt{b_n}}{n\|\Sigma\|_F}S_3(k) = o_p(1)$, for any fixed $\gamma \in (0,M]$.

For $S_4(k)$,
\begin{align*}
&\frac{\sqrt{2}\sqrt{b_n}}{n\|\Sigma\|_F}S_4(k) = -\frac{\sqrt{2}\sqrt{b_n}}{n\|\Sigma\|_F}\frac{(n-1)(k_0-k)}{kk_0}\sum_{i,j = 1, i\neq j}^{k_0}Z_i^TZ_j\\
=&-\frac{2\sqrt{b_n}n\gamma}{b_n}\frac{(n-1)}{kk_0}\left(\frac{\sqrt{2}}{n\|\Sigma\|_F}\sum_{i = 1}^{k_0-1}\sum_{j = 1}^{i}Z_{i+1}^TZ_j\right) = \frac{\gamma}{\sqrt{b_n}}O_p(1) = o_p(1)
\end{align*}
since $\frac{\sqrt{2}}{n\|\Sigma\|_F}\sum_{i = 1}^{k_0-1}\sum_{j = 1}^{i}Z_{i+1}^TZ_j \overset{\mathcal{D}}{\rightarrow} Q(0,\tau_0)$, which is a simple consequence of Proposition \ref{prop:Q}.

By a similar argument, for $S_5(k)$,
\begin{align*}
&\frac{\sqrt{2}\sqrt{b_n}}{n\|\Sigma\|_F}S_5(k) = \frac{\sqrt{2}\sqrt{b_n}}{n\|\Sigma\|_F}\frac{(n-1)(k_0-k)}{(n-k_0)(n-k)}\sum_{i,j = k+1, i\neq j}^{n}Z_i^TZ_j\\
=&\frac{2\sqrt{b_n}n\gamma}{b_n}\frac{(n-1)}{(n-k_0)(n-k)}\left(\frac{\sqrt{2}}{n\|\Sigma\|_F}\sum_{i = k+1}^{n-1}\sum_{j = k+1}^{i}Z_{i+1}^TZ_j\right) = \frac{\gamma}{\sqrt{b_n}}O_p(1) = o_p(1)
\end{align*}
since 
\begin{align*}
&Var\left(\frac{\sqrt{2}}{n\|\Sigma\|_F}\sum_{i = k+1}^{n-1}\sum_{j = k+1}^{i}Z_{i+1}^TZ_j - \frac{\sqrt{2}}{n\|\Sigma\|_F}\sum_{i = k_0+1}^{n-1}\sum_{j = k_0+1}^{i}Z_{i+1}^TZ_j\right)\\
=&Var\left(\frac{\sqrt{2}}{n\|\Sigma\|_F}\sum_{j = k+1}^{n-1}\sum_{i = j}^{n-1}Z_{i+1}^TZ_j - \frac{\sqrt{2}}{n\|\Sigma\|_F}\sum_{j = k_0+1}^{n-1}\sum_{i = j}^{n-1}Z_{i+1}^TZ_j\right)\\
=&\frac{(2n-k-k_0-1)(k_0-k)\|\Sigma\|_F^2}{n^2\|\Sigma\|_F^2} = \frac{\gamma(2n-k-k_0-1)}{b_nn} \rightarrow 0
\end{align*}
and $\frac{\sqrt{2}}{n\|\Sigma\|_F}\sum_{i = k_0+1}^{n}\sum_{j = k_0+1}^{i}Z_{i+1}^TZ_j \overset{\mathcal{D}}{\rightarrow} Q(\tau_0,1)$.

For $S_6(k)$, we note that 
\begin{align*}
&Var\left(\frac{\sqrt{2}\sqrt{b_n}}{n\|\Sigma\|_F}S_6(k)\right)\\
=&\frac{8b_n}{n^2\|\Sigma\|_F^2}\frac{(n-1)^2(k_0-k)^2(n-k_0-k)^2}{k^2k_0^2(n-k)^2(n-k_0)^2}k(n-k_0)\|\Sigma\|_F^2 \leq \frac{C\gamma^2}{b_nn^2} \rightarrow 0.
\end{align*}
Hence $\frac{\sqrt{2}\sqrt{b_n}}{n\|\Sigma\|_F}S_6(k) = o_p(1)$.

The only remaining two terms are $S_1(k)$ and $S_2(k)$. These two terms are not asymptotically negligible and we can employ martingale CLT to get the asymptotic distribution. 

By Corollary 3.1 in \cite{hall1980martingale}, for any square-integrable martingale difference triangular array $Y_{n,i}$ for $i = 1,2,...,k_n$ with $k_n \rightarrow \infty$ and $\mathcal{F}_{n,i}$ is the natural filtration for $Y_{n,i},Y_{n,i-1},\cdots$, if 
\begin{enumerate}
	\item $\sum_{i = 1}^{k_n}\E[Y_{n,i}^4] \rightarrow 0$ (Lyapunov's Condition), and
	\item $V_{nk_n} = \sum_{i = 1}^{k_n}\E[Y_{n,i}^2|\mathcal{F}_{n,i-1}] \rightarrow_p \sigma^2,$
\end{enumerate}
then we have $S_n = \sum_{i = 1}^{k_n}Y_{n,i} \overset{\mathcal{D}}{\rightarrow} N(0,\sigma^2)$.

Consider any $k_1 < k_2 < k$. Without loss of generality, assume there exists $\gamma_1,\gamma_2$ such that $k_0-k_1 = n\gamma_1/b_n$ and $k_0 - k_2 = n\gamma_2/b_n$, which means $\gamma_2 < \gamma_1$. For any $\alpha_1,\alpha_2,\alpha_3,\alpha_4 \in \mathbb{R}$, consider
\begin{align*}
&\frac{\sqrt{2}\sqrt{b_n}}{n\|\Sigma\|_F}\left(\alpha_1\sum_{i = k_1+1}^{k_0}\sum_{j = 1}^{k_1}Z_i^TZ_j + \alpha_2\sum_{i = k_2+1}^{k_0}\sum_{j = 1}^{k_2}Z_i^TZ_j + \alpha_3\sum_{i = k_1+1}^{k_0}\sum_{j = k_0+1}^{n}Z_i^TZ_j + \alpha_4\sum_{i = k_2+1}^{k_0}\sum_{j = k_0+1}^{n}Z_i^TZ_j\right)\\
=&\sum_{i = k_1+1}^{n}Y_{n,i}
\end{align*}
where we define $Y_{n,i}$ as
\begin{enumerate}
	\item $Y_{n,i} = \alpha_1\frac{\sqrt{2}\sqrt{b_n}}{n\|\Sigma\|_F}Z_{i}^T\sum_{j = 1}^{k_1}Z_j$, for $i = k_1+1,...,k_2$;
	
	\item $Y_{n,i} = \alpha_1\frac{\sqrt{2}\sqrt{b_n}}{n\|\Sigma\|_F}Z_{i}^T\sum_{j = 1}^{k_1}Z_j + \alpha_2\frac{\sqrt{2}\sqrt{b_n}}{n\|\Sigma\|_F}Z_{i}^T\sum_{j = 1}^{k_2}Z_j$, for $i = k_2+1,...,k_0$;
	
	\item $Y_{n,i} = \alpha_3\frac{\sqrt{2}\sqrt{b_n}}{n\|\Sigma\|_F}Z_{i}^T\sum_{i = k_1+1}^{k_0}Z_j + \alpha_4\frac{\sqrt{2}\sqrt{b_n}}{n\|\Sigma\|_F}Z_{i}^T\sum_{i = k_2+1}^{k_0}Z_j$, for $i = k_0+1,...,n$
\end{enumerate}
and define $\cF_{n,i}$ as the natural filtration of $Z_{i},Z_{i-1},...$. It is easy to verify that $Y_{n,i}$ is a martingale difference sequence adaptive to $\cF_{n,i}$.

Then for $i = k_1+1,...k_2$,
\begin{align*}
&\E[Y_{n,i}^4]\\
=&\frac{4\alpha_1^4b_n^2}{n^4\|\Sigma\|_F^4}\sum_{j_1,j_2,j_3,j_4 = 1}^{k_1}\sum_{l_1,\cdots,l_4 = 1}^{p}\E[Z_{i,l_1}Z_{i,l_2}Z_{i,l_3}Z_{i,l_4}]\E[Z_{j_1,l_1}Z_{j_2,l_2}Z_{j_3,l_3}Z_{j_4,l_4}]\\
=&\frac{4\alpha_1^4b_n^2}{n^4\|\Sigma\|_F^4}\sum_{j_1 = 1}^{k_1}\sum_{l_1,\cdots,l_4 = 1}^{p}\E[Z_{i,l_1}Z_{i,l_2}Z_{i,l_3}Z_{i,l_4}]\E[Z_{j_1,l_1}Z_{j_1,l_2}Z_{j_1,l_3}Z_{j_1,l_4}]\\
+&\frac{12\alpha_1^4b_n^2}{n^4\|\Sigma\|_F^4}\sum_{j_1,j_2 = 1,j_1 \neq j_2}^{k_1}\sum_{l_1,\cdots,l_4 = 1}^{p}\E[Z_{i,l_1}Z_{i,l_2}Z_{i,l_3}Z_{i,l_4}]\E[Z_{j_1,l_1}Z_{j_1,l_2}]\E[Z_{j_2,l_3}Z_{j_2,l_4}]\\
\leq&\frac{4\alpha_1^4b_n^2k_1}{n^4\|\Sigma\|_F^4}\sqrt{\sum_{l_1,\cdots,l_4 = 1}^{p}\E[Z_{i,l_1}Z_{i,l_2}Z_{i,l_3}Z_{i,l_4}]^2}\sqrt{\sum_{l_1,\cdots,l_4 = 1}^{p}\E[Z_{j_1,l_1}Z_{j_1,l_2}Z_{j_1,l_3}Z_{j_1,l_4}]^2}\\
+&\frac{12\alpha_1^4b_n^2k_1^2}{n^4\|\Sigma\|_F^4}\sqrt{\sum_{l_1,\cdots,l_4 = 1}^{p}\E[Z_{i,l_1}Z_{i,l_2}Z_{i,l_3}Z_{i,l_4}]^2}\sqrt{\sum_{l_1,\cdots,l_4 = 1}^{p}\Sigma_{l_1,l_2}^2\Sigma_{l_3,l_4}^2} \leq C\frac{b_n^2}{n^2},
\end{align*}
where we have applied the Cauchy-Schwartz inequality and Lemma \ref{lem:Xi1Xi4}. Furthermore we notice that
\begin{align*}
&\E[Y_{n,i}^2|\cF_{n,i-1}]=\frac{2b_n\alpha_1^2}{n^2\|\Sigma\|_F^2}\sum_{j_1,j_2 = 1}^{k_1}\sum_{l_1,l_2 = 1}^pZ_{j_1,l_1}Z_{j_2,l_2}\Sigma_{l_1,l_2}.
\end{align*}

For $i = k_2+1,\cdots,k_0$, by essentially the same arguments,
\begin{align*}
\E[Y_{n,i}^4]
=&\E\left[\left(\alpha_1\frac{\sqrt{2}\sqrt{b_n}}{n\|\Sigma\|_F}Z_{i}^T\sum_{j = 1}^{k_1}Z_j + \alpha_2\frac{\sqrt{2}\sqrt{b_n}}{n\|\Sigma\|_F}Z_{i}^T\sum_{j = 1}^{k_2}Z_j\right)^4\right]\\
\leq&8\E\left[\left(\alpha_1\frac{\sqrt{2}\sqrt{b_n}}{n\|\Sigma\|_F}Z_{i}^T\sum_{j = 1}^{k_1}Z_j\right)^4\right] + 8\E\left[\left( \alpha_2\frac{\sqrt{2}\sqrt{b_n}}{n\|\Sigma\|_F}Z_{i}^T\sum_{j = 1}^{k_2}Z_j\right)^4\right] = C\frac{b_n^2}{n^2}
\end{align*}
and
\begin{align*}
&\E[Y_{n,i}^2|\cF_{n,i-1}]\\
=&\frac{2b_n\alpha_1^2}{n^2\|\Sigma\|_F^2}\sum_{j_1,j_2 = 1}^{k_1}\sum_{l_1,l_2 = 1}^pZ_{j_1,l_1}Z_{j_2,l_2}\Sigma_{l_1,l_2}+\frac{2b_n\alpha_2^2}{n^2\|\Sigma\|_F^2}\sum_{j_1,j_2 = 1}^{k_2}\sum_{l_1,l_2 = 1}^pZ_{j_1,l_1}Z_{j_2,l_2}\Sigma_{l_1,l_2}\\
+&\frac{4b_n\alpha_1\alpha_2}{n^2\|\Sigma\|_F^2}\sum_{j_1 = 1}^{k_1}\sum_{j_2 = 1}^{k_2}\sum_{l_1,l_2 = 1}^{p}Z_{j_1,l_1}Z_{j_2,l_2}\Sigma_{l_1,l_2}.
\end{align*}

For $i = k_0+1,\cdots,n$, we have
\begin{align*}
\E[Y_{n,i}^4] = &\E\left[\left(\alpha_3\frac{\sqrt{2}\sqrt{b_n}}{n\|\Sigma\|_F}Z_{i}^T\sum_{i = k_1+1}^{k_0}Z_j + \alpha_4\frac{\sqrt{2}\sqrt{b_n}}{n\|\Sigma\|_F}Z_{i}^T\sum_{i = k_2+1}^{k_0}Z_j\right)^4\right]\\
\leq&8\E\left[\left(\alpha_3\frac{\sqrt{2}\sqrt{b_n}}{n\|\Sigma\|_F}Z_{i}^T\sum_{j = k_1+1}^{k_0}Z_j\right)^4\right] + 8\E\left[\left( \alpha_4\frac{\sqrt{2}\sqrt{b_n}}{n\|\Sigma\|_F}Z_{i}^T\sum_{j = k_2+1}^{k_0}Z_j\right)^4\right] \leq Cn^{-2},
\end{align*}
and
\begin{align*}
&\E[Y_{n,i}^2|\cF_{n,i-1}]\\
=&\frac{2b_n\alpha_3^2}{n^2\|\Sigma\|_F^2}\sum_{j_1,j_2 = k_1+1}^{k_0}\sum_{l_1,l_2 = 1}^pZ_{j_1,l_1}Z_{j_2,l_2}\Sigma_{l_1,l_2}+\frac{2b_n\alpha_4^2}{n^2\|\Sigma\|_F^2}\sum_{j_1,j_2 = k_2+1}^{k_0}\sum_{l_1,l_2 = 1}^pZ_{j_1,l_1}Z_{j_2,l_2}\Sigma_{l_1,l_2}\\
+&\frac{4b_n\alpha_3\alpha_4}{n^2\|\Sigma\|_F^2}\sum_{j_1 = k_1+1}^{k_0}\sum_{j_2 = k_2+1}^{k_0}\sum_{l_1,l_2 = 1}^{p}Z_{j_1,l_1}Z_{j_2,l_2}\Sigma_{l_1,l_2}.
\end{align*}

To verify the two conditions for the martingale CLT, we note that for the first condition,
\begin{align*}
\sum_{i = k_1+1}^{n}\E[Y_{n,i}^4]
\leq&(k_2-k_1)C\frac{b_n^2}{n^2} + (k_0-k_2)C\frac{b_n^2}{n^2} + (n-k_0)Cn^{-2}\\
=&(k_0-k_2)C\frac{b_n^2}{n^2} + (n-k_0)Cn^{-2} = O(b_n/n) + O(n^{-1}) \rightarrow 0.
\end{align*}
For the second condition, we write
\begin{align*}
&\sum_{i = k_1 +1}^{n}\E[Y_{n,i}^2|\cF_{n,i-1}]\\
=&\sum_{i = k_1 +1}^{k_2}\E[Y_{n,i}^2|\cF_{n,i-1}] + \sum_{i = k_2 +1}^{k_0}\E[Y_{n,i}^2|\cF_{n,i-1}] + \sum_{i = k_0 +1}^{n}\E[Y_{n,i}^2|\cF_{n,i-1}]\\
=&\frac{2b_n\alpha_1^2}{n^2\|\Sigma\|_F^2}\sum_{i = k_1+1}^{k_0}\sum_{j_1,j_2 = 1}^{k_1}\sum_{l_1,l_2 = 1}^pZ_{j_1,l_1}Z_{j_2,l_2}\Sigma_{l_1,l_2} + \frac{2b_n\alpha_2^2}{n^2\|\Sigma\|_F^2}\sum_{i = k_2+1}^{k_0}\sum_{j_1,j_2 = 1}^{k_2}\sum_{l_1,l_2 = 1}^pZ_{j_1,l_1}Z_{j_2,l_2}\Sigma_{l_1,l_2}\\
+&\frac{4b_n\alpha_1\alpha_2}{n^2\|\Sigma\|_F^2}\sum_{i = k_2+1}^{k_0}\sum_{j_1 = 1}^{k_1}\sum_{j_2 = 1}^{k_2}\sum_{l_1,l_2 = 1}^{p}Z_{j_1,l_1}Z_{j_2,l_2}\Sigma_{l_1,l_2}+\frac{2b_n\alpha_3^2}{n^2\|\Sigma\|_F^2}\sum_{i = k_0+1}^{n}\sum_{j_1,j_2 = k_1+1}^{k_0}\sum_{l_1,l_2 = 1}^pZ_{j_1,l_1}Z_{j_2,l_2}\Sigma_{l_1,l_2}\\
+&\frac{2b_n\alpha_4^2}{n^2\|\Sigma\|_F^2}\sum_{i = k_0+1}^{n}\sum_{j_1,j_2 = k_2+1}^{k_0}\sum_{l_1,l_2 = 1}^pZ_{j_1,l_1}Z_{j_2,l_2}\Sigma_{l_1,l_2}+\frac{4b_n\alpha_3\alpha_4}{n^2\|\Sigma\|_F^2}\sum_{i = k_0+1}^{n}\sum_{j_1 = k_1+1}^{k_0}\sum_{j_2 = k_2+1}^{k_0}\sum_{l_1,l_2 = 1}^{p}Z_{j_1,l_1}Z_{j_2,l_2}\Sigma_{l_1,l_2}\\
:=& U_{1,n}+U_{2,n}+U_{3,n}+U_{4,n}+U_{5,n}+U_{6,n}.
\end{align*}

Define $\sigma^2 = \sum_{i = 1}^{6}\sigma_i^2$ where $\sigma_i^2 = \lim\E[U_{i,n}]$, and we are going to show $U_{i,n} \rightarrow_p \sigma_i^2$. For $U_{1,n}$, $\sigma_1^2 = 2\alpha_1^2\gamma_1\tau_0$, and
\begin{align*}
&\E[(U_{1,n}-\sigma_1^2)^2]=\E[U_{1,n}^2] - 2\sigma_1^2\E[U_{1,n}] + \sigma_1^4\\
=&\frac{4b_n^2\alpha_1^4}{n^4\|\Sigma\|_F^4}\sum_{i_1,i_2 = k_1+1}^{k_0}\sum_{j_1,j_2,j_3,j_4 = 1}^{k_1}\sum_{l_1,l_2,l_3,l_4}^{p}\E[Z_{j_1,l_1}Z_{j_2,l_2}Z_{j_3,l_3}Z_{j_4,l_4}]\Sigma_{l_1,l_2}\Sigma_{l_3,l_4} - 2\sigma_1^2\E[U_{1,n}] + \sigma_1^4\\
=&\frac{4b_n^2\alpha_1^4(k_0-k_1)^2}{n^4\|\Sigma\|_F^4}\sum_{j_1 = 1}^{k_1}\sum_{l_1,l_2,l_3,l_4}^{p}\E[Z_{j_1,l_1}Z_{j_1,l_2}Z_{j_1,l_3}Z_{j_1,l_4}]\Sigma_{l_1,l_2}\Sigma_{l_3,l_4} - 2\sigma_1^2\E[U_{1,n}] + \sigma_1^4\\
+&\frac{4b_n^2\alpha_1^4(k_0-k_1)^2}{n^4\|\Sigma\|_F^4}\sum_{j_1,j_2 = 1,j_1 \neq j_2}^{k_1}\sum_{l_1,l_2,l_3,l_4}^{p}\E[Z_{j_1,l_1}Z_{j_1,l_2}]\E[Z_{j_2,l_3}Z_{j_2,l_4}]\Sigma_{l_1,l_2}\Sigma_{l_3,l_4}\\
+&\frac{4b_n^2\alpha_1^4(k_0-k_1)^2}{n^4\|\Sigma\|_F^4}\sum_{j_1,j_2 = 1,j_1 \neq j_2}^{k_1}\sum_{l_1,l_2,l_3,l_4}^{p}\E[Z_{j_1,l_1}Z_{j_1,l_3}]\E[Z_{j_2,l_2}Z_{j_2,l_4}]\Sigma_{l_1,l_2}\Sigma_{l_3,l_4}\\
+&\frac{4b_n^2\alpha_1^4(k_0-k_1)^2}{n^4\|\Sigma\|_F^4}\sum_{j_1,j_2 = 1,j_1 \neq j_2}^{k_1}\sum_{l_1,l_2,l_3,l_4}^{p}\E[Z_{j_1,l_1}Z_{j_1,l_4}]\E[Z_{j_2,l_2}Z_{j_2,l_3}]\Sigma_{l_1,l_2}\Sigma_{l_3,l_4}\\
\leq&\frac{4Cb_n^2\alpha_1^4(k_0-k_1)^2k_1}{n^4}- 2\sigma_1^2\E[U_{1,n}] + \sigma_1^4 + \frac{4b_n^2\alpha_1^4(k_0-k_1)^2k_1^2}{n^4}(1 + 2tr(\Sigma^4)/\|\Sigma\|_F^4)\\
=&\left(\frac{2\alpha_1^2\gamma_1k_1}{n}\right)^2 - 2\sigma_1^2\E[U_{1,n}] + \sigma_1^4 + O(1/n) + o(1) \rightarrow 0,
\end{align*}
since 
\begin{align*}
\sum_{l_1,...l_4=1}^{p}\Sigma_{l_1,l_3}\Sigma_{l_2,l_4}\Sigma_{l_1,l_2}\Sigma_{l_3,l_4} = \sum_{l_2,l_3 = 1}^{p}(\Sigma^2)_{l_2,l_3}(\Sigma^2)_{l_2,l_3} = \|\Sigma^2\|_F^4 = tr(\Sigma^4) = o(\|\Sigma\|_F^4),
\end{align*}
under Assumption \ref{ass}\ref{ass:tr} and $k_1/n = k_0/n - \gamma_1/b_n \rightarrow \tau_0$. Thus $U_1 \rightarrow_p \sigma_1^2$. By exactly the same derivation, we have $U_{2,n} \rightarrow_p \sigma_2^2 = 2\alpha_2^2\gamma_2\tau_0$. For $U_{3,n}$, $\sigma_3^2 = 4\alpha_1\alpha_2\tau_0\gamma_2$, and
\begin{align*}
&\E[(U_{3,n} - \sigma_3^2)^2]\\
=&\frac{16\alpha_1^2\alpha_2^2b_n^2(k_0-k_2)^2}{n^4\|\Sigma\|_F^4}\sum_{j_1,j_3 = 1}^{k_1}\sum_{j_2,j_4 = 1}^{k_2}\sum_{l_1,l_2,l_3,l_4=1}^{p}\E[Z_{j_1,l_1}Z_{j_2,l_2}Z_{j_3,l_3}Z_{j_4,l_4}]\Sigma_{l_1,l_2}\Sigma_{l_3,l_4} - 2\sigma_3^2\E[U_{3,n}] + \sigma_3^4\\
=&\frac{16\alpha_1^2\alpha_2^2b_n^2(k_0-k_2)^2}{n^4\|\Sigma\|_F^4}\sum_{j_1=1}^{k_1}\sum_{l_1,l_2,l_3,l_4=1}^{p}\E[Z_{j_1,l_1}Z_{j_1,l_2}Z_{j_1,l_3}Z_{j_1,l_4}]\Sigma_{l_1,l_2}\Sigma_{l_3,l_4}-2\sigma_3^2\E[U_{3,n}] + \sigma_3^4\\
+&\frac{16\alpha_1^2\alpha_2^2b_n^2(k_0-k_2)^2}{n^4\|\Sigma\|_F^4}\sum_{j_1,j_2=1}^{k_1}\sum_{l_1,l_2,l_3,l_4=1}^{p}\E[Z_{j_1,l_1}Z_{j_1,l_2}]\E[Z_{j_2,l_3}Z_{j_2,l_4}]\Sigma_{l_1,l_2}\Sigma_{l_3,l_4}\\
+&\frac{16\alpha_1^2\alpha_2^2b_n^2(k_0-k_2)^2}{n^4\|\Sigma\|_F^4}\sum_{j_1=1}^{k_1}\sum_{j_2 = 1}^{k_2}\sum_{l_1,l_2,l_3,l_4=1}^{p}\E[Z_{j_1,l_1}Z_{j_1,l_3}]\E[Z_{j_2,l_2}Z_{j_2,l_4}]\Sigma_{l_1,l_2}\Sigma_{l_3,l_4}\\
+&\frac{16\alpha_1^2\alpha_2^2b_n^2(k_0-k_2)^2}{n^4\|\Sigma\|_F^4}\sum_{j_1,j_2=1}^{k_1}\sum_{l_1,l_2,l_3,l_4=1}^{p}\E[Z_{j_1,l_1}Z_{j_1,l_4}]\E[Z_{j_2,l_2}Z_{j_2,l_3}]\Sigma_{l_1,l_2}\Sigma_{l_3,l_4}\\
=&O(1/n) + \left(\frac{4\alpha_1\alpha_2\gamma_2k_1}{n}\right)^2-2\sigma_3^2\E[U_{3,n}]+\sigma_3^4+\left(\frac{8\alpha_1\alpha_2\gamma_2k_1}{n}\right)^2\frac{tr(\Sigma^4)}{\|\Sigma\|_F^4}\rightarrow 0,
\end{align*}
under Assumption \ref{ass}\ref{ass:tr}. Thus $U_3 \rightarrow_p \sigma_3^2$. By a simple calculation we have $\sigma_4^2 = 2\alpha_3^2\gamma_1(1-\tau_0)$, $\sigma_5^2 = 2\alpha_4^2\gamma_2(1-\tau_0)$ and $\sigma_6^2 = 4\alpha_3\alpha_4\gamma_2(1-\tau_0)$. The proof for the consistency of $U_{4,n},U_{5,n}$ and $U_{6,n}$ are skipped since the arguments are exactly the same.

Therefore, we prove that $\sum_{i = k_1 +1}^{n}\E[Y_{n,i}^2|\cF_{n,i-1}] \rightarrow_p \sigma^2$. So
$\sum_{i = k_1+1}^{n}Y_{n,i} \overset{\mathcal{D}}{\rightarrow} N(0,\sigma^2)$, where $$\sigma^2 = \sum_{i = 1}^{6}\sigma_i^2 = 2\alpha_1^2\gamma_1\tau_0 + 2\alpha_2^2\gamma_2\tau_0+ 4\alpha_1\alpha_2\tau_0\gamma_2+ 2\alpha_3^2\gamma_1(1-\tau_0)+2\alpha_4^2\gamma_2(1-\tau_0)+4\alpha_3\alpha_4\gamma_2(1-\tau_0).$$

It pays to look into a few special cases. Let $\alpha_1 = 2/\tau_0$, $\alpha_3 = -2/(1-\tau_0)$ and $\alpha_2 = \alpha_4 = 0$, we have
$$\frac{\sqrt{2}\sqrt{b_n}}{n\|\Sigma\|_F}(S_1(k)+S_2(k)) \overset{\mathcal{D}}{\rightarrow} N(0,8\gamma_1/(\tau_0(1-\tau_0))) \overset{\mathcal{D}}{=}\frac{2\sqrt{2}}{\sqrt{\tau_0(1-\tau_0)}}W(\gamma_1),$$
which implies $H_n(\gamma_1)\overset{\mathcal{D}}{\rightarrow} \frac{2\sqrt{2}}{\sqrt{\tau_0(1-\tau_0)}}W(\gamma_1),$ where $W(r)$ is a standard Brownian Motion.
Let $\alpha_2 = 2/\tau_0$, $\alpha_4 = 2/(1-\tau_0)$ and $\alpha_1 = \alpha_3 = 0$,
we have
$$H_n(\gamma_2) \overset{\mathcal{D}}{\rightarrow} N(0,8\gamma_2/\tau_0(1-\tau_0))\overset{\mathcal{D}}{=}\frac{2\sqrt{2}}{\sqrt{\tau_0(1-\tau_0)}}W(\gamma_2).$$
Further, by letting $\alpha_1 = 2\beta_1/\tau_0$, $\alpha_3 = 2\beta_1/(1-\tau_0)$,$\alpha_2 = 2\beta_2/\tau_0$, $\alpha_4 = 2\beta_2/(1-\tau_0)$ for any $\beta_1,\beta_2 \in \mathbb{R}$, we have
\begin{align*}
\beta_1H_n(\gamma_1) + \beta_2H_n(\gamma_2) &
\overset{\mathcal{D}}{\rightarrow}N(0,8\beta_1^2\gamma_1/(\tau_0(1-\tau_0)) + 8\beta_2^2\gamma_2/(\tau_0(1-\tau_0)) + 16\beta_1\beta_2\gamma_2/((1-\tau_0)\tau_0))\\
&\overset{\mathcal{D}}{=} \frac{2\sqrt{2}}{\sqrt{\tau_0(1-\tau_0)}}(\beta_1W(\gamma_1) + \beta_2W(\gamma_2)).
\end{align*}

The case $k_2 > k_1 > k_0$ can be shown by exactly the same argument, so we skip the details here. For the case that $k_1 < k_0 < k_2$, i.e. $\gamma_1 > 0$ and $\gamma_2 < 0$, we can also employ similar arguments. Consider for any $\alpha_1,\alpha_2,\alpha_3,\alpha_4 \in \mathbb{R}$,
\begin{align*}
&\frac{\sqrt{2}\sqrt{b_n}}{n\|\Sigma\|_F}\left(\alpha_1\sum_{i = k_1+1}^{k_0}\sum_{j = 1}^{k_1}Z_i^TZ_j + \alpha_2\sum_{i = k_0+1}^{k_2}\sum_{j = 1}^{k_0}Z_i^TZ_j + \alpha_3\sum_{i = k_1+1}^{k_0}\sum_{j = k_0+1}^{n}Z_i^TZ_j + \alpha_4\sum_{i = k_0+1}^{k_2}\sum_{j = k_2+1}^{n}Z_i^TZ_j\right)\\
=&\sum_{i = k_1+1}^{n}Y_{n,i},
\end{align*}
where we define $Y_{n,i}$ as
\begin{enumerate}
	\item $Y_{n,i} = \alpha_1\frac{\sqrt{2}\sqrt{b_n}}{n\|\Sigma\|_F}Z_{i}^T\sum_{j = 1}^{k_1}Z_j$, for $i = k_1+1,...,k_0$;
	
	\item $Y_{n,i} = \alpha_2\frac{\sqrt{2}\sqrt{b_n}}{n\|\Sigma\|_F}Z_{i}^T\sum_{j = 1}^{k_0}Z_j + \alpha_3\frac{\sqrt{2}\sqrt{b_n}}{n\|\Sigma\|_F}Z_{i}^T\sum_{j = k_1+1}^{k_0}Z_j$, for $i = k_0+1,...,k_2$;
	
	\item $Y_{n,i} = \alpha_3\frac{\sqrt{2}\sqrt{b_n}}{n\|\Sigma\|_F}Z_{i}^T\sum_{i = k_1+1}^{k_0}Z_j + \alpha_4\frac{\sqrt{2}\sqrt{b_n}}{n\|\Sigma\|_F}Z_{i}^T\sum_{i = k_0+1}^{k_2}Z_j$,for $i = k_2+1,...,n$
\end{enumerate}
and define $\cF_{n,i}$ as the natural filtration of $Z_{i},Z_{i-1},...$. It is easy to verify that $Y_{n,i}$ is a martingale difference sequence adaptive to $\cF_{n,i}$.

By similar arguments, we can show $\sum_{i = k_1+1}^n\E[Y_{n,i}^4] \rightarrow 0$ and $\sum_{i = k_1+1}^n\E[Y_{n,i}^2|\cF_{n,i-1}] \rightarrow_p \sigma^2$, where $\sigma^2 = \lim\sum_{i = k_1+1}^n\E[Y_{n,i}^2]$. To see the specific expression of $\sigma^2$, notice that for $i = k_1+1,...,k_0$, $\E[Y_{n,i}^2] = {2\alpha_1^2b_nk_1}/{n^2}$. For $i = k_0+1,...,k_2$, $\E[Y_{n,i}^2] = {2\alpha_2^2b_nk_1}/{n^2} +{2(\alpha_2+\alpha_3)^2b_n(k_0-k_1)}/{n^2}$. And for $i = k_2+1,...,n$, $\E[Y_{n,i}^2] = {2\alpha_3^2b_n(k_0-k_1)}/{n^2} +{2\alpha_4^2b_n(k_2-k_0)}/{n^2}$. Thus,
\begin{align*}
    \sigma^2 =& \lim\sum_{i = k_1+1}^n\E[Y_{n,i}^2]\\
    =& \lim[(k_0-k_1){2\alpha_1^2b_nk_1}/{n^2}] + \lim[(k_2-k_0)\left\{{2\alpha_2^2b_nk_1}/{n^2} +{2(\alpha_2+\alpha_3)^2b_n(k_0-k_1)}/{n^2}\right\}]\\
    &+ \lim[(n-k_2)\left\{{2\alpha_3^2b_n(k_0-k_1)}/{n^2} +{2\alpha_4^2b_n(k_2-k_0)}/{n^2}\right\}]\\
    =&2\alpha_1^2\tau_0\gamma_1 + 2\alpha_2^2\tau_0(-\gamma_2) +2\alpha_3^2(1-\tau_0)\gamma_1 + 2\alpha_4^2(1-\tau_0)(-\gamma_2).
\end{align*}
 
 Letting $\alpha_1 = 2\beta_1/\tau_0$, $\alpha_3 = 2\beta_1/(1-\tau_0)$,$\alpha_2 = -2\beta_2/\tau_0$, $\alpha_4 = -2\beta_2/(1-\tau_0)$ for any $\beta_1,\beta_2 \in \mathbb{R}$, we have
\begin{align*}
\beta_1H_n(\gamma_1) + \beta_2H_n(\gamma_2) &
\overset{\mathcal{D}}{\rightarrow}N(0,8\beta_1^2\gamma_1/(\tau_0(1-\tau_0)) + 8\beta_2^2(-\gamma_2)/(\tau_0(1-\tau_0)))\\
&\overset{\mathcal{D}}{=} \frac{2\sqrt{2}}{\sqrt{\tau_0(1-\tau_0)}}(\beta_1W_1(\gamma_1) + \beta_2W_2(-\gamma_2)),
\end{align*}
where $W_1,W_2$ are two independent standard brownian motion defined on $[0,\infty)$.

Hence we have shown the finite dimensional convergence. Combining with Proposition \ref{prop:tightness}, we have the process convergence result.

\subsection{Proof of Corollary \ref{cor:asymdist}}
Essentially we want to apply the argmax continuous mapping theorem, i.e. Theorem 3.2.2 in Setion 3.2, \cite{vanweak}. To this end, we first show the weak convergence of the criterion function. Define 
$$L_n(\gamma;\tau_0) = \frac{\sqrt{2}\sqrt{a_n}}{n\|\Sigma\|_F}\left\{G_n(n\tau_0) - G_n(\floor{n\tau_0 + {n\gamma}/{a_n}})\right\}$$
as the criterion function on the real line with the parameter $\tau_0$ fixed. Let $\hat{\gamma}_n = a_n(\hat{\tau}_U - \tau_0) = \argmin_{\gamma \in (-\infty,\infty)}L_n(\gamma;\tau_0)$. We have obtained the consistency  and convergence rate for $\hat{\tau}_U$, i.e. $\hat{\tau}_U \rightarrow_p \tau_0$ in Theorem \ref{thm:rate}. For any fixed $M > 0$ and let $k = \floor{n\tau_0 + {n\gamma}/{a_n}}$,  on the set $[-M,0]$ we have 
\begin{align*}
    L_n(\gamma;\tau_0) &= H_n(\gamma) + \frac{\sqrt{2}\sqrt{a_n}(k_0-k)(n-k_0-1)(n-1)\|\delta\|^2}{n(n-k)\|\Sigma\|_F} + \frac{\sqrt{2}\sqrt{a_n}}{n\|\Sigma\|_F}R(k)
\end{align*}
where $R(k) = R_1(k) + R_2(k) + R_3(k)$ and $R_1,R_2,R_3$ are defined in the proof of Theorem \ref{thm:rate}.

By Theorem \ref{thm:weak} we have $H_n(\gamma) \rightsquigarrow \frac{2\sqrt{2}}{\sqrt{\tau_0(1 - \tau_0)}}W^*(\gamma)$. It is straightforward to see that 
$$ \frac{\sqrt{2}\sqrt{a_n}(k_0-k)(n-k_0-1)(n-1)\|\delta\|^2}{n(n-k)\|\Sigma\|_F}  \rightarrow \sqrt{2}|\gamma|,$$
for $\gamma \in [-M,0]$, i.e. $k \in [k_0 - nM/a_n-1,k_0]$. We shall prove that $\sup_{k \in [k_0-nM/a_n-1,k_0]}R(k) = o_p(1)$. To see this, for any $\eta > 0$,
\begin{align*}
    &P\left(\max_{k \in [k_0 - nM/a_n -1, k_0]}\frac{\sqrt{2}\sqrt{a_n}}{n\|\Sigma\|_F}\frac{2(n-1)(k_0-k)(k_0+k-n)}{k_0(n-k)k}\left|\sum_{i = 1}^k\delta^TZ_i\right| > \eta\right)\\
    \leq&P\left(\max_{k \in [k_0 - nM/a_n -1, k_0]}\frac{\sqrt{2}\sqrt{a_n}}{n\|\Sigma\|_F}\frac{2(n-1)nk_0}{k_0(n-k_0)(k_0-nM/a_n-1)}\left|\sum_{i = 1}^k\delta^TZ_i\right| > \eta\right)\\
    =&P\left(\max_{k \in [k_0 - nM/a_n -1, k_0]}\left|\sum_{i = 1}^k\delta^TZ_i\right| > \frac{k_0(n-k_0)(k_0-nM/a_n-1)n\|\Sigma\|_F}{2\sqrt{2}\sqrt{a_n}(n-1)nk_0}\eta\right)\\
    \leq & \frac{8a_n(n-1)^2n^2k_0^3\delta^T\Sigma\delta}{k_0^2(n-k_0)^2(k_0 - nM/a_n - 1)^2n^2\|\Sigma\|_F^2\eta^2} = O\left(\frac{n^7a_n}{n^8\|\Sigma\|_F^2}\right)\cdot o\left(\frac{\|\Sigma\|_F^2}{n}\right) = o(1) \rightarrow 0.
\end{align*}
Hence $\max_{k \in [k_0 - nM/a_n -1, k_0]}R_1(k) = o_p(1)$. We can use similar arguments to show $R_2(k)$ is uniformly negligible. For $R_3(k)$,
\begin{align*}
    &P\left(\max_{k \in [k_0 - nM/a_n -1, k_0]}\frac{\sqrt{2}\sqrt{a_n}}{n\|\Sigma\|_F}\frac{2(n-1)(k_0^2 - nk_0 - k + n)}{k_0(n-k)}\left|\sum_{i = k+1}^{k_0}\delta^TZ_i\right| > \eta\right)\\
    \leq&P\left(\max_{k \in [k_0 - nM/a_n -1, k_0]}\frac{\sqrt{2}\sqrt{a_n}}{n\|\Sigma\|_F}\frac{2(n-1)(k_0^2 - nk_0 + n)}{k_0(n-k_0)}\left|\sum_{i = k+1}^{k_0}\delta^TZ_i\right| > \eta\right)\\
    \leq& P\left(\max_{k \in [k_0 - nM/a_n -1, k_0]}\left|\sum_{i = k+1}^{k_0}\delta^TZ_i\right| > \frac{k_0(n-k_0)n\|\Sigma\|_F}{2\sqrt{2}\sqrt{a_n}(n-1)(k_0^2 - nk_0 + n)}\eta\right)\\
    \leq&\frac{8a_n(n-1)^2(k_0^2 - nk_0 + n)^2}{k_0^2(n-k_0)^2n^2\|\Sigma\|_F^2\eta^2}(nM/a_n + 1)\delta^T\Sigma\delta = \frac{a_nn}{\|\Sigma\|_F^2a_n}o\left(\frac{\|\Sigma\|_F^2}{n}\right) = o(1) \rightarrow 0.
\end{align*}
This implies that $\max_{k \in [k_0 - nM/a_n -1, k_0]}R_3(k) = o_p(1)$. Combining the above results we have $\sup_{\gamma \in [-M,0]}R(k) = o_p(1)$, and
$$L_n(\gamma;\tau_0) \rightsquigarrow L(\gamma;\tau_0) := \sqrt{2}|\gamma| + \frac{2\sqrt{2}}{\sqrt{\tau_0(1-\tau_0)}}W^*(\gamma),$$
on $[-M,0]$. By symmetry we can show
$$L_n(\gamma;\tau_0) \rightsquigarrow L(\gamma;\tau_0) := \sqrt{2}|\gamma| + \frac{2\sqrt{2}}{\sqrt{\tau_0(1-\tau_0)}}W^*(\gamma),$$
on $[0,M]$, hence on every compact set on the real line as well. By Theorem 3.2.2 in \cite{vanweak} we have 
$$a_n(\hat{\tau}_U - \tau_0) = \argmin_{\gamma\in(-\infty,\infty)}L_n(\gamma;\tau_0)\overset{\mathcal{D}}{\rightarrow}  \xi{(\tau_0)}.$$
This completes the proof.

\subsection{Proof of Theorem \ref{thm:3regime}}
 (a). For the regime that $a_n/n \rightarrow c \in (0,\infty)$, all proofs are identical to the first regime, i.e., the proofs of Theorems~\ref{thm:rate} and \ref{thm:weak} still hold, except for the arguments used to prove the finite dimensional convergence in the proof of Theorem~\ref{thm:weak}.
 Specifically, when $a_n/n \rightarrow c$, original arguments presented there need a modification to verify Lyapunov's condition.

Consider the case that $k_1 < k_2 < k$ first, where $k_0-k_1 = n\gamma_1/a_n$ and $k_0-k_2 = n\gamma_2/a_n$. For any $\alpha_1,\alpha_2,\alpha_3,\alpha_4 \in \mathbb{R}$, essentially we need to show that
\begin{align*}
    &\frac{\sqrt{2}\sqrt{b_n}}{n\|\Sigma\|_F}\left(\alpha_1\sum_{i = k_1 + 1}^{k_0}\sum_{j = 1}^{k_1}Z_i^TZ_j + \alpha_2\sum_{i = k_2 + 1}^{k_0}\sum_{j = 1}^{k_2}Z_i^TZ_j + \alpha_3\sum_{i = k_1 + 1}^{k_0}\sum_{j = k_0 + 1}^nZ_i^TZ_j + \alpha_4\sum_{i = k_2+1}^{k_0}\sum_{j = k_0+1}^nZ_i^TZ_j\right)
\end{align*}
converges to some normal distribution by applying martingale central limit theorem. Here we need a new way to define the martingale difference sequence by rearranging the observations. Under the new condition where $a_n/n \rightarrow c$, the formulation of the martingale difference sequence in the proof of Theorem \ref{thm:weak} no longer satisfies the Lyapunov condition. However, since the terms we need to work with are all double sums of the inner product of independent random vectors over a two-dimensional  array, we can choose either direction along the array as the martingale difference sequence. The previous formulation is the most natural one, but needs a modification to address the case $a_n/n\rightarrow c$. Now by considering to formulate the martingale difference sequence along the other direction, and after rearranging the terms and defining a new filtration, we can prove Lyapunov's condition under the new assumption. Below are the details.  

Let $\tilde{Z}_1 = Z_{k_1+1}, \tilde{Z}_2 = Z_{k_1+2},...,\tilde{Z}_{k_0-k_1} = Z_{k_0}$, $\tilde{Z}_{k_0-k_1+1} = Z_1,...,\tilde{Z}_{k_0} = Z_{k_1}$ and $\tilde{Z}_{k_0+1} = Z_{k_0+1}, ..., \tilde{Z}_{n} = Z_{n}$. Then 
\begin{align*}
    &\frac{\sqrt{2}\sqrt{b_n}}{n\|\Sigma\|_F}\left(\alpha_1\sum_{i = k_1 + 1}^{k_0}\sum_{j = 1}^{k_1}Z_i^TZ_j + \alpha_2\sum_{i = k_2 + 1}^{k_0}\sum_{j = 1}^{k_2}Z_i^TZ_j + \alpha_3\sum_{i = k_1 + 1}^{k_0}\sum_{j = k_0 + 1}^nZ_i^TZ_j + \alpha_4\sum_{i = k_2+1}^{k_0}\sum_{j = k_0+1}^nZ_i^TZ_j\right)\\
    &=\sum_{i = k_2-k_1+1}^{n}Y_{n,i},
\end{align*}
where 
\begin{enumerate}
    \item $Y_{n,i} = \alpha_2\tilde{Z}_{i}\sum_{j = 1}^{k_2-k_1}\widetilde{Z}_j$, for $i = k_2-k_1+1,...,k_0-k_1$,
    \item $Y_{n,i} = \alpha_1\tilde{Z}_{i}\sum_{j = 1}^{k_0-k_1}\widetilde{Z}_j + \alpha_2\tilde{Z}_{i}\sum_{j = 1}^{k_0-k_1}\widetilde{Z}_j$, for $i = k_0-k_1+1,k_0$,
    \item $Y_{n,i} = \alpha_3\tilde{Z}_{i}\sum_{j = 1}^{k_0-k_1}\widetilde{Z}_j + \alpha_4\tilde{Z}_{i}\sum_{j = k_2-k_1+1}^{k_0-k_1}\widetilde{Z}_j$, for $i = k_0+1,...,n$.
\end{enumerate}
It is easy to verify that $\{Y_{n,i}\}$ is a martingale difference sequence with respect to the natural filtration of $\{\tilde{Z}_i\}$.

Under this formulation, the Lyapunov condition can be verified under Assumption \ref{ass}\ref{ass:tr} and \ref{ass}\ref{ass:cum} and $a_n/n \rightarrow c$. To see this, by the same arguments used in the proof of Theorem \ref{thm:weak},
\begin{align*}
    \sum_{i = k_2-k_1+1}^n\E[Y_{n,i}^4]\leq C(k_0-k_2)n^{-2} + Ck_1n^{-2} + C(n-k_0)n^{-2} = O(1/n) \rightarrow 0.
\end{align*}

The conditional variance can be proved based on similar arguments. This is also true for any other relative orders between $k_1$, $k_2$ and $k_0$. Hence the finite dimensional convergence result still holds.

(b). For the third regime where $a_n/n \rightarrow \infty$, we want to show that $P(\hat{k}_U \neq  k_0) \rightarrow 0$. Assume $k \leq k_0$ and define $\Omega_{n}^{(1)} = \{k: 1 \leq k < k_0/2\}$, $\Omega_n^{(2)} = \{k: k_0/2 \leq k \leq k_0 - \sqrt{n}\}$ and $\Omega_n^{(3)} = \{k: k_0 - \sqrt{n} < k \leq k_0 - 1\}$. We want to show $P(k \in \Omega_n) \rightarrow 0$, where $\Omega = \bigcup_{i = 1}^3\Omega_n^{(i)}$. Under the new assumption $\sqrt{\delta^T\Sigma\delta} = o(\|\delta\|^2)$, according to the proof of Theorem \ref{thm:rate}, for $\Omega_n^{(1)}$ the arguments are basically the same, except for the three terms associated with $\delta^TZ_j$, for $j = 1,2,...,n$. Specifically, 
\begin{align*}
\max_{k = 1,2,...,k_0}|G_n(k) - \E[G_n(k)]| &\leq  \max_{k = 1,2,...,k_0}|G_n^Z(k)| + \max_{k = 1,...,k_0}\left|\frac{2(n-k_0)(n-k-1)(k-1)}{k(n-k)}\sum_{j = 1}^{k}\delta^TZ_j\right| \\
&+ \max_{k = 1,...,k_0}\left|\frac{2(k-1)(n-k_0-1)}{n-k}\sum_{j = k_0+1}^{n}\delta^TZ_j\right| \\
&+ \max_{k = 1,...,k_0}\left|\frac{2(k-1)(n-k_0)}{n-k}\sum_{j = k+1}^{k_0}\delta^TZ_{j}\right|,
\end{align*}
and the last three terms are $o_p(n^{1.5}\|\delta\|^2)$. Hence under Assumption \ref{ass}\ref{ass:delta} and \ref{ass}\ref{ass:rate},
\begin{align*}
    \frac{2\max_{k = 1,2,...,k_0}|G_n(k) - \E[G_n(k)]|}{\frac{(n-k_0-1)(n-1)k_0}{2(n-k_0/2)}\|\delta\|^2}
    &= \frac{\{O_p(n\sqrt{\log(n)}\|\Sigma\|_F) + o_p(n^{1.5}\|\delta\|^2)\}(n-k_0/2)}{(n-k_0-1)(n-1)k_0\|\delta\|^2}\\
    &= O_p\left(\frac{\sqrt{\log(n)}}{\sqrt{a_n}}\right) + o_p(n^{-1/2}) = o_p(1),
\end{align*} 
and
\begin{align*}
&P\left(\max_{k \in \Omega_n^{(1)}(M)}G_n(k) \geq G_n(k_0) \right)\\
\leq & P\left(2\max_{k = 1,2,...,k_0}|G_n(k) - \E[G_n(k)]| \geq \frac{(n-k_0-1)(n-1)k_0}{2(n-k_0/2)}\|\delta\|^2\right)\\
=& P\left(\frac{2\max_{k = 1,2,...,k_0}|G_n(k) - \E[G_n(k)]|}{\frac{(n-k_0-1)(n-1)k_0}{2(n-k_0/2)}\|\delta\|^2} \geq 1 \right)\rightarrow 0
\end{align*}
for sufficiently large $n$.

For $\Omega_n^{(2)} \bigcup \Omega_n^{(3)}$,
\begin{align*}
&G_n(k_0) - G_n(k)\\
= & G_n^Z(k_0) - G_n^Z(k) + (k_0-k)\frac{(n-k_0-1)(n-1)}{n-k}\|\delta\|^2+\frac{2(n-1)(k_0-k)(k_0+k-n)}{k_0(n-k)}\frac{1}{k}\sum_{i = 1}^{k}\delta^TZ_{i}\\
&+\frac{2(n-k_0-1)(n-1)(k-k_0)}{n-k}\frac{1}{n-k_0}\sum_{i = k_0+1}^{n}\delta^TZ_i\\
&+\frac{2(n-1)(k_0^2 - nk_0 - k + n)}{k_0(n-k)}\sum_{i = k+1}^{k_0}\delta^TZ_i\\
=&G_n^Z(k_0) - G_n^Z(k) + (k_0-k)\frac{(n-k_0-1)(n-1)}{n-k}\|\delta\|^2 + R_1(k) + R_2(k) + R_3(k).
\end{align*}

Under $\sqrt{\delta^T\Sigma\delta} = o(\|\delta\|^2)$, following similar arguments we can show $R_i(k)$ are dominated for $i = 1,2,3$. Further since 
\begin{align*}
&G_n^Z(k_0) - G_n^Z(k)\\
= & 2\frac{(n-k-1)(n-1)}{(n-k)k}\sum_{i = k+1}^{k_0}\sum_{j = 1}^{k}Z_i^TZ_j - 2\frac{(k_0-1)(n-1)}{(n-k_0)k_0}\sum_{i = k+1}^{k_0}\sum_{j = k_0+1}^{n}Z_i^TZ_j\\
+ &\frac{(n-1)(n-k-k_0)}{k(n-k_0)}\sum_{i,j = k+1, i\neq j}^{k_0}Z_i^TZ_j- \frac{(n-1)(k_0-k)}{kk_0}\sum_{i,j=1,i\neq j}^{k_0}Z_i^TZ_j\\
+&\frac{(n-1)(k_0-k)}{(n-k_0)(n-k)}\sum_{i,j = k+1,i\neq j}^{n}Z_i^TZ_j - \frac{2(n-1)(n-k_0-k)(k_0-k)}{kk_0(n-k)(n-k_0)}\sum_{i = 1}^{k}\sum_{j = k_0+1}^{n}Z_i^TZ_j\\
=&S_1(k)+S_2(k)+S_3(k)+S_4(k)+S_5(k)+S_6(k),
\end{align*}
the negligibility of all terms can be shown by similar arguments, except for $S_1(k)$. Thus we only provide the proof for $S_1(k)$ here. Essentially we want to show that 
$$P\left(\sup_{k \in \Omega_n^{(2)}\bigcup\Omega_n^{(3)}}\frac{1}{(k_0-k)\|\Sigma\|_F}\left|\left(\sum_{i = k+1}^{k_0}Z_i^T\sum_{j = 1}^{k}Z_j\right)\right| > \eta \sqrt{a_n}\right) < \epsilon.$$

For $\Omega_n^{(2)}$,
\begin{align*}
&P\left(\sup_{k \in \Omega_n^{(2)}}\frac{1}{(k_0-k)\|\Sigma\|_F}\left|\left(\sum_{i = k+1}^{k_0}Z_i^T\sum_{j = 1}^{k}Z_j\right)\right| > \eta \sqrt{a_n}\right)\\
=&P\left(\sup_{k_0-k \in [\sqrt{n},k_0/2]}\frac{1}{(k_0-k)\sqrt{n}\|\Sigma\|_F}\left|\left(\sum_{i = k+1}^{k_0}Z_i^T\sum_{j = 1}^{k}Z_j\right)\right| > \eta \sqrt{a_n}/\sqrt{n}\right)\\
\leq&P\left(\sup_{k_0-k \in [\sqrt{n},k_0/2]}\frac{1}{n\|\Sigma\|_F}\left|\left(\sum_{i = k+1}^{k_0}Z_i^T\sum_{j = 1}^{k}Z_j\right)\right| > \eta \sqrt{a_n}/\sqrt{n}\right) \rightarrow 0,
\end{align*}
since $$\sup_{k_0-k \in [\sqrt{n},k_0/2]}\frac{1}{n\|\Sigma\|_F}\left|\left(\sum_{i = k+1}^{k_0}Z_i^T\sum_{j = 1}^{k}Z_j\right)\right| \leq \sup_{k = 1,2,...,k_0}\frac{1}{n\|\Sigma\|_F}\left|\left(\sum_{i = k+1}^{k_0}Z_i^T\sum_{j = 1}^{k}Z_j\right)\right| = O_p(1),$$
according to Lemma \ref{lem:rateterms}\ref{lem:sum} and the assumption that $\sqrt{a_n/n} \rightarrow \infty$. The proof for $\Omega_n^{(3)}$ is similar to the original proof hence skipped here. 

This completes the proof.

\subsection{Proof of Proposition \ref{prop:compareBai}}

Assume $k \leq k_0$ first. Recall that the sum of square (SSR(k)) defined for $\hat{k}_{LS}$ is
\begin{align*}
    &\sum_{i = 1}^{k}\|X_i - \bar{X}_{1:k}\|^2 + \sum_{i = k+1}^{n}\|X_i - \bar{X}_{(k+1):n}\|^2\\
    =&\sum_{i = 1}^{n}X_i^TX_i - k\bar{X}_{1:k}^T\bar{X}_{1:k} - (n-k)\bar{X}_{(k+1):n}^T\bar{X}_{(k+1):n}\\
    =&\sum_{i = 1}^nZ_i^TZ_i + (n-k_0)\|\delta\|^2+2\sum_{i = k_0+1}^n\delta^TZ_i-\frac{1}{k}\sum_{i,j = 1}^{k}{X^T_{i}X_{j}} - \frac{1}{n-k}\sum_{i,j = k+1}^nX_{i}^TX_j\\
    =&\frac{k-1}{k}\sum_{i = 1}^kZ_i^TZ_i + \frac{n-k-1}{n-k}\sum_{i = k+1}^nZ_i^TZ_j + (n-k_0)\|\delta\|^2 + 2\sum_{i=k_0+1}^n\delta^TZ_i\\
    -&\frac{2(n-k_0)}{n-k}\sum_{i = k +1}^{n}\delta^TZ_i - \frac{1}{n-k}(n-k_0)^2\|\delta\|^2 - \frac{1}{k}\sum_{i,j = 1, i\neq j}^kZ_i^TZ_j - \frac{1}{n-k}\sum_{i,j = k+1,i\neq j}^nZ_i^TZ_j\\
    =&\frac{k-1}{k}\sum_{i = 1}^kZ_i^TZ_i + \frac{n-k-1}{n-k}\sum_{i = k+1}^nZ_i^TZ_j + \frac{(n-k_0)(k_0-k)}{n-k}\|\delta\|^2 + \frac{2(k_0-k)}{n-k}\sum_{i = k_0+1}^n\delta^TZ_i\\
    -&\frac{2(n-k_0)}{n-k}\sum_{i = k+1}^{k_0}\delta^TZ_i - \frac{1}{k}\sum_{i,j = 1, i\neq j}^kZ_i^TZ_j - \frac{1}{n-k}\sum_{i,j = k+1,i\neq j}^nZ_i^TZ_j.\\
\end{align*}
Then the objective function for $\hat{k}_{LS}$ is 
\begin{align*}
    &SSR(k) - SSR(k_0)\\
    =&\frac{k-k_0}{kk_0}\sum_{i = 1}^{k}Z_i^TZ_i + \frac{k_0+k-n}{k_0(n-k)}\sum_{i = k+1}^{k_0}Z_i^TZ_i + \frac{k_0-k}{(n-k_0)(n-k)}\sum_{i = k_0+1}^nZ_i^TZ_i\\
    +&\frac{(n-k_0)(k_0-k)}{n-k}\|\delta\|^2 + \frac{2(k_0-k)}{n-k}\sum_{i = k_0+1}^n\delta^TZ_i
    -\frac{2(n-k_0)}{n-k}\sum_{i = k+1}^{k_0}\delta^TZ_i\\
    -&\frac{1}{k}\sum_{i,j = 1, i\neq j}^kZ_i^TZ_j - \frac{1}{n-k}\sum_{i,j = k+1,i\neq j}^nZ_i^TZ_j+\frac{1}{k_0}\sum_{i,j = 1, i\neq j}^{k_0}Z_i^TZ_j + \frac{1}{n-k_0}\sum_{i,j = k_0+1,i\neq j}^nZ_i^TZ_j := \sum_{j = 1}^{10}I_j
\end{align*}
by a straightforward calculation. According to Theorem 4.2 and its proof in \cite{bai2010common}, $I_4$ and $I_6$ are leading terms, and the asymptotic distribution of $\hat{k}_{LS}$ is based on these two terms. For comparison, the objective function for $\hat{k}_U$ is
\begin{align*}
&G_n(k_0) - G_n(k)\\
= & G_n^Z(k_0) - G_n^Z(k) + (k_0-k)\frac{(n-k_0-1)(n-1)}{n-k}\|\delta\|^2+\frac{2(n-1)(k_0-k)(k_0+k-n)}{k_0(n-k)}\frac{1}{k}\sum_{i = 1}^{k}\delta^TZ_{i}\\
&+\frac{2(n-k_0-1)(n-1)(k-k_0)}{n-k}\frac{1}{n-k_0}\sum_{i = k_0+1}^{n}\delta^TZ_i\\
&+\frac{2(n-1)(k_0^2 - nk_0 - k + n)}{k_0(n-k)}\sum_{i = k+1}^{k_0}\delta^TZ_i\\
=&G_n^Z(k_0) - G_n^Z(k) + (k_0-k)\frac{(n-k_0-1)(n-1)}{n-k}\|\delta\|^2 + R_1(k) + R_2(k) + R_3(k).
\end{align*}

Following the arguments in the Proof of Theorem \ref{thm:3regime}, since under the assumptions proposed in this proposition we have $a_n/n \rightarrow \infty$, $G_n^Z(k_0) - G_n^Z(k)$, $R_1(k)$ and $R_2(k)$ are dominated by the third term above. However $\sqrt{\delta^T\Sigma\delta} = \|\delta\|$ is no longer $o(\|\delta\|^2)$ which makes $R_3(k)$ another leading term. This is equivalent to say that the distribution of $\hat{k}_U$ is mainly determined by the third term and $R_3(k)$, i.e.,  
$$(n-1)\left\{\frac{(n-k_0-1)(k_0-k)}{n-k}\|\delta\|^2 + \frac{2(k_0^2 - nk_0 - k + n)}{k_0(n-k)}\sum_{i = k+1}^{k_0}\delta^TZ_i\right\},$$
which is asymptotically equivalent to 
$$(n-1)\left\{\frac{(n-k_0)(k_0-k)}{n-k}\|\delta\|^2 + \frac{2(k_0 - k)}{n-k}\sum_{i = k+1}^{k_0}\delta^TZ_i\right\},$$
which is the same as the leading term (up to a constant $(n-1)$) in \cite{bai2010common}.

So the limiting distribution for $\hat{k}_U$ is identical to that in Theorem 4.2 in \cite{bai2010common}, which is the minimizer of a two-sided standard brownian motion. For the case $k > k_0$, the conclusion is the same due to similar arguments. This completes the proof.

\subsection{Proof of Theorem \ref{thm:boots}}
To prove the bootstrap consistency, we define $O^*_p(1)$ and $o_p^*(1)$ as the bootstrap stochastic order as done in \cite{chang2003sieve}. They have similar behaviors as traditional $O_p(1)$ and $o_p(1)$, for example, $O_p^*(1)o_p^*(1) = o_p^*(1)$. Note that $\hat{\tau}_U$ is a consistent estimator of $\tau_0$ and $\hat{a}_n$ is a ratio consistent estimator of $a_n$. Thus $|\hat{\tau}_U - \tau_0|$ and $|\hat{a}_n/a_n - 1|$ are $o_p(1)$.

First we are going to show $\hat{a}_n(\hat{\tau}_U^* - \hat{\tau}_U) \overset{\mathcal{D}}{\rightarrow} \xi(\tau_0) \text{ in P}$ by verifying Assumptions \ref{ass} for the bootstrap sample. Conditions (a) and (b) are bootstrap counterparts of Assumptions \ref{ass}\ref{ass:tr} and \ref{ass}\ref{ass:delta}. Since the bootstrap data are Gaussian, any joint cumulants with order higher than 2 are zero. Thus Assumption \ref{ass}\ref{ass:cum} is automatically satisfied for bootstrap data. Because $\hat{a}_n$ is a ratio consistent estimator of $a_n$, Assumption \ref{ass}\ref{ass:rate} is also satisfied stochastically. According to Theorem \ref{thm:weak} and its proof, for any $M >0$, given the data
$$\hat{\tau}_U(1-\hat{\tau}_U)H_n^*(\gamma;\hat{\tau}_U) := \frac{\sqrt{2}\sqrt{\hat{a}_n}(\hat{\tau}_U(1-\hat{\tau}_U))}{n\|\hat{\Sigma}\|_F}\left\{G_n^{\epsilon}(n\widehat{\tau}_U) - G_n^{\epsilon}(\floor{n\widehat{\tau}_U + {n\gamma}/{\hat{a}_n}})\right\} \rightsquigarrow 2\sqrt{2}W^*(\gamma),$$
on the set $[-M,M]$. Furthermore according to Theorem \ref{thm:rate} and its proof, for any $\epsilon > 0$, there exists $M$ and $\Omega_n^*(M) := \{\tau: \hat{a}_n|\tau - \tau_0| > M\}$ such that $P^*(\hat{\tau}^*_U \in \Omega^*(M)) < \epsilon$ with probability converging to 1. This is equivalent to $\hat{a}_n(\hat{\tau}^*_U - \tau_0) = O^*_p(1)$. Thus by Theorem 3.2.3 in \cite{vanweak} (or the proof of Corollary \ref{cor:asymdist}), and since $\hat{\tau}_U$ is a consistent estimator of $\tau_0$, 
$$L_n^*(\gamma;\hat{\tau}_U) := \frac{\sqrt{2}\sqrt{\hat{a}_n}}{n\|\hat{\Sigma}\|_F}\left\{G_n^{*}(n\widehat{\tau}_U) - G_n^{*}(\floor{n\widehat{\tau}_U + {n\gamma}/{\hat{a}_n}})\right\} \rightsquigarrow  L(\gamma;\tau_0) \text{ in P},$$
on any compact set $[-M,M]$, and 
$$\hat{a}_n(\hat{\tau}_U^* - \hat{\tau}_U) = \argmin_{\gamma \in (-\infty,\infty)}L^*_n(\gamma;\hat{\tau}_U) \overset{\mathcal{D}}{\rightarrow} \xi(\tau_0) \text{ in P}.$$

Since $|\hat{a}_n/a_n - 1| = o_p(1)$, we have 
$$a_n(\hat{\tau}_U^* - \hat{\tau}_U) = \hat{a}_n(\hat{\tau}_U^* - \hat{\tau}_U) + (a_n/\hat{a}_n - 1)\hat{a}_n(\hat{\tau}_U^* - \hat{\tau}_U) = \hat{a}_n(\hat{\tau}_U^* - \hat{\tau}_U) + o_p(1)O_p^*(1) = \hat{a}_n(\hat{\tau}_U^* - \hat{\tau}_U) + o_p^*(1).$$
This implies that $a_n(\hat{\tau}^*_U - \hat{\tau}_U)$ converge to the same limit as $\hat{a}_n(\hat{\tau}^*_U - \hat{\tau}_U)$, i.e.
$${a_n}(\hat{\tau}_U^* - \hat{\tau}_U) \overset{\mathcal{D}}{\rightarrow} \xi(\tau_0) \text{ in P}.$$
This completes the proof.

\section{Technical Appendix B}\label{sec:appB}
In this section, we provide the proofs for several lemmas presented in Section \ref{sec:result}.
\begin{proof}[Proof of Lemma \ref{lem:order}]
	If one of $i_1,i_2,i_3,i_4$ is distinct to all other three, then $\E[Z_{i_1,l_1}Z_{i_2,l_2}Z_{i_3,l_3}Z_{i_4,l_4}] = 0$, so it is trivially satisfied. When $i_1 = i_2 < i_3 = i_4$, where we have no more than $(n_2-n_1)^2$ distinct pairs, then 
	$$\sum_{l_1,l_2,l_3,l_4 = 1}^{p}(\E[Z_{i_1,l_1}Z_{i_2,l_2}Z_{i_3,l_3}Z_{i_4,l_4}])^2 = \|\Sigma\|_F^4.$$
	Hence the only remaining case we need to deal with is $i_1 = i_2 = i_3 = i_4$, where we have $n$ distinct cases,
	\begin{align*}
	&\sum_{l_1,l_2,l_3,l_4 = 1}^{p}(\E[Z_{i_1,l_1}Z_{i_1,l_2}Z_{i_1,l_3}Z_{i_1,l_4}])^2\\
	=&\sum_{l_1,l_2,l_3,l_4 = 1}^{p}\{cov(Z_{i_1,l_1},Z_{i_1,l_2})cov(Z_{i_1,l_3},Z_{i_1,l_4})+cov(Z_{i_1,l_1},Z_{i_1,l_3})cov(Z_{i_1,l_2},Z_{i_1,l_4})\\
	&+cov(Z_{i_1,l_1},Z_{i_1,l_4})cov(Z_{i_1,l_2},Z_{i_1,l_3})+cum(Z_{i_1,l_1},Z_{i_1,l_2},Z_{i_1,l_3},Z_{i_1,l_4})\}^2\\
	\leq&C\left\{\sum_{l_1,l_2,l_3,l_4 = 1}^{p}\Sigma_{l_1,l_2}^2\Sigma_{l_3,l_4}^2 + \sum_{l_1,l_2,l_3,l_4 = 1}^{p}cum(Z_{i_1,l_1},Z_{i_1,l_2},Z_{i_1,l_3},Z_{i_1,l_4})^2\right\}\\
	\leq &C\|\Sigma\|_F^4,
	\end{align*}
	under Assumption \ref{ass}\ref{ass:cum} for some generic constant $C$. Hence the conclusion holds.
\end{proof}

\begin{proof}[Proof of Lemma \ref{lem:Xi1Xi4}]
	The arguments are identical to the proof of Lemma 8.1 in \cite{wang2019a}, so are omitted.
\end{proof}

\begin{proof}[Proof of Proposition \ref{prop:terms}]
To show (1), we write
\begin{align*}
    &G_n(k) = \frac{1}{k(n-k)}\sum_{i_1,i_2 = 1, i_1 \neq i_2}^k\sum_{j_1,j_2 = k+1, j_1 \neq j_2}^n(X_{i_1} - X_{j_1})^T(X_{i_2} - X_{j_2}) \\
     =&\frac{1}{k(n-k)}\sum_{i_1,i_2 = 1, i_1 \neq i_2}^k\sum_{j_1,j_2 = k_0+1, j_1 \neq j_2}^n(Z_{i_1} - Z_{j_1} - \delta)^T(Z_{i_2} - Z_{j_2} - \delta)\\
    +& \frac{1}{k(n-k)}\sum_{i_1,i_2 = 1, i_1 \neq i_2}^k\sum_{j_1 = k+1}^{k_0}\sum_{j_2 = k_0+1}^n(Z_{i_1} - Z_{j_1})^T(Z_{i_2} - Z_{j_2} - \delta)\\
    +&\frac{1}{k(n-k)}\sum_{i_1,i_2 = 1, i_1 \neq i_2}^k\sum_{j_1 = k_0 + 1}^n\sum_{j_2 = k+1}^{k_0}(Z_{i_1} - Z_{j_1} - \delta)^T(Z_{i_2} - Z_{j_2})\\
=& G_n^Z(k) + \frac{k(k-1)(n-k_0)(n-k_0-1)}{k(n-k)}\|\delta\|^2 - \frac{1}{k(n-k)}\sum_{i_1,i_2 = 1, i_1 \neq i_2}^k\sum_{j_1,j_2 = k_0+1, j_1 \neq j_2}^n(Z_{i_1} - Z_{j_1})^T\delta\\
-&\frac{1}{k(n-k)}\sum_{i_1,i_2 = 1, i_1 \neq i_2}^k\sum_{j_1,j_2 = k_0+1, j_1 \neq j_2}^n\delta^T(Z_{i_2} - Z_{j_2})-\frac{1}{k(n-k)}\sum_{i_1,i_2 = 1, i_1 \neq i_2}^k\sum_{j_1 = k+1}^{k_0}\sum_{j_2 = k_0+1}^n(Z_{i_1} - Z_{j_1})^T\delta\\
-&\frac{1}{k(n-k)}\sum_{i_1,i_2 = 1, i_1 \neq i_2}^k\sum_{j_1 = k_0 + 1}^n\sum_{j_2 = k+1}^{k_0}\delta^T(Z_{i_2} - Z_{j_2}).\\
\end{align*}
By a straightforward calculation, we can prove the desired result.

To show (2), by calculation we have
\begin{align*}
&G_n^Z(k_0) - G_n^Z(k) \\
= &\frac{1}{k_0}\sum_{i,j = 1, i\neq j}^{k_0}(n-k_0-1)Z_i^TZ_j + \frac{1}{n-k_0}\sum_{i,j = k_0+1, i\neq j}^{n}(k_0-1)Z_i^TZ_j\\
-&2\frac{(k_0-1)(n-k_0-1)}{k_0(n-k_0)}\sum_{i = 1}^{k_0}\sum_{j = k_0+1}^{n}Z_i^TZ_j - \frac{1}{k}\sum_{i,j = 1, i\neq j}^{k}(n-k-1)Z_i^TZ_j\\
-&\frac{1}{n-k}\sum_{i,j = k+1, i\neq j}^{n}(k-1)Z_i^TZ_j + 2\frac{(k-1)(n-k-1)}{k(n-k)}\sum_{i = 1}^{k}\sum_{j = k+1}^{n}Z_i^TZ_j\\
&=A_1+A_2-A_3 - B_1 - B_2 + B_3.
\end{align*}

For each term we have
\begin{align*}
A_1 - B_1 =& \frac{1}{k_0}\sum_{i,j = 1, i\neq j}^{k_0}(n-k_0-1)Z_i^TZ_j - \frac{1}{k}\sum_{i,j = 1, i\neq j}^{k}(n-k-1)Z_i^TZ_j\\
=&2\frac{(n-k-1)}{k}\sum_{i = k+1}^{k_0}\sum_{j = 1}^{k}Z_i^TZ_j + \frac{(n-k-1)}{k}\sum_{i,j=k+1,i\neq j}^{k_0}Z_i^TZ_j\\
-&\frac{(n-1)(k-k_0)}{k_0k}\sum_{i,j = 1, i\neq j}^{k_0}Z_i^TZ_j,
\end{align*}
and
\begin{align*}
A_2 - B_2=&\frac{k_0-1}{n - k_0}\sum_{i,j = k_0+1, i\neq j}^{n}Z_i^TZ_j-\frac{k-1}{n-k}\sum_{i,j = k+1, i\neq j}^{n}Z_i^TZ_j\\
=&-\frac{2(k_0-1)}{n-k_0}\sum_{i = k+1}^{k_0}\sum_{j = k_0+1}^{n}Z_i^TZ_j - \frac{k_0-1}{n-k_0}\sum_{i,j = k+1,i\neq j}^{k_0}Z_i^TZ_j\\ +&\frac{(n-1)(k_0-k)}{(n-k_0)(n-k)}\sum_{i,j = k+1,i\neq j}^{n}Z_i^TZ_j,
\end{align*}
and
\begin{align*}
A_3 - B_3=&\frac{2(k_0-1)(n-k_0-1)}{k_0(n-k_0)}\sum_{i = 1}^{k_0}\sum_{j = k_0+1}^{n}Z_i^TZ_j - \frac{2(k-1)(n-k-1)}{k(n-k)}\sum_{i = 1}^{k}\sum_{j = k+1}^{n}Z_i^TZ_j\\
=&\frac{2(k_0-1)(n-k_0-1)}{k_0(n-k_0)}\sum_{i = k+1}^{k_0}\sum_{j = k_0+1}^{n}Z_i^TZ_j - \frac{2(k-1)(n-k-1)}{k(n-k)}\sum_{i = k+1}^{k_0}\sum_{j = 1}^{k}Z_i^TZ_j \\
+&\frac{2(n-1)(n-k-k_0)(k_0-k)}{k_0(n-k_0)k(n-k)}\sum_{i = 1}^{k}\sum_{j = k_0+1}^{n}Z_i^TZ_j.
\end{align*}

Thus by combining all terms above, we have
\begin{align*}
&G_n^Z(k_0) - G_n^Z(k) = A_1 + A_2 - A_3 - B_1 - B_2 + B_3 \\
=& 2\frac{(n-k-1)(n-1)}{(n-k)k}\sum_{i = k+1}^{k_0}\sum_{j = 1}^{k}Z_i^TZ_j - 2\frac{(k_0-1)(n-1)}{(n-k_0)k_0}\sum_{i = k+1}^{k_0}\sum_{j = k_0+1}^{n}Z_i^TZ_j\\
+ &\frac{(n-1)(n-k-k_0)}{k(n-k_0)}\sum_{i,j = k+1, i\neq j}^{k_0}Z_i^TZ_j- \frac{(n-1)(k_0-k)}{kk_0}\sum_{i,j=1,i\neq j}^{k_0}Z_i^TZ_j\\
+&\frac{(n-1)(k_0-k)}{(n-k_0)(n-k)}\sum_{i,j = k+1,i\neq j}^{n}Z_i^TZ_j - \frac{2(n-1)(n-k_0-k)(k_0-k)}{kk_0(n-k)(n-k_0)}\sum_{i = 1}^{k}\sum_{j = k_0+1}^{n}Z_i^TZ_j.
\end{align*}
This completes the proof. 

\end{proof}

\begin{proof}[Proof of Proposition \ref{prop:tightness}]
	Without loss of generality, we can assume $\gamma_1 < \gamma_2$. By the definition of $H_n(\gamma)$, we have
	\begin{align*}
	&H_n(\gamma_2) - H_n(\gamma_1) \\
	=&\frac{\sqrt{2}\sqrt{b_n}}{n\|\Sigma\|_F}\left\{G_n^Z(k_0) - G_n^Z(k_2) - G_n^Z(k_0) + G_n^Z(k_1)\right\} = -\frac{\sqrt{2}\sqrt{b_n}}{n\|\Sigma\|_F}\left\{G_n^Z(k_2) - G_n^Z(k_1)\right\}\\
	=&\frac{\sqrt{2}\sqrt{b_n}}{n\|\Sigma\|_F}\left(\frac{-2(n-k_1-1)(n-1)}{k_1(n-k_1)}\sum_{i = k_1+1}^{k_2}\sum_{j = 1}^{k_1}Z_i^TZ_j + \frac{2(k_2-1)(n-1)}{(n-k_2)k_2}\sum_{i = k_1+1}^{k_2}\sum_{j = k_2+1}^{n}Z_i^TZ_j\right.\\
	-&\frac{(n-1)(n-k_1-k_2)}{k_1(n-k_2)}\sum_{i,j = k_1+1, i\neq j}^{k_2}Z_i^TZ_j+ \frac{(n-1)(k_2-k_1)}{k_2k_1}\sum_{i,j=1,i\neq j}^{k_2}Z_i^TZ_j\\
	-&\left.\frac{(n-1)(k_2-k_1)}{(n-k_2)(n-k_1)}\sum_{i,j = k_1+1,i\neq j}^{n}Z_i^TZ_j + \frac{2(n-1)(n-k_1-k_2)(k_2-k_1)}{k_1k_2(n-k_1)(n-k_2)}\sum_{i = 1}^{k_1}\sum_{j = k_2+1}^{n}Z_i^TZ_j\right)\\
	:=&H_n^{(1)} + H_n^{(2)}+H_n^{(3)}+H_n^{(4)}+H_n^{(5)}+H_n^{(6)}.
	\end{align*}
	It suffices to show $\E[(H_n^{(i)})^4] \leq Cb_n^2(k_2 - k_1)^2/n^2$ for some generic positive constant $C$, for every $i = 1,2,..,6$.
	
	For $H_n^{(1)}$,
	\begin{align*}
	\E[(H_n^{(1)})^4] &= \frac{64(n-k_1-1)^4(n-1)^4b_n^2}{n^{4}k_1^4(n-k_1)^4\|\Sigma\|_F^4}\sum_{i_1,\i_2,i_3,i_4 = k_1+1}^{k_2}\sum_{j_1,j_2,j_3,j_4 = 1}^{k_1}\\&\sum_{l_1,l_2,l_3,l_4 = 1}^{p}\E[Z_{i_1,l_1}Z_{i_2,l_2}Z_{i_3,l_3}Z_{i_4,l_4}]\E[Z_{j_1,l_1}Z_{j_2,l_2}Z_{j_3,l_3}Z_{j_4,l_4}]\\
	\leq & \frac{Cb_n^2}{n^4\|\Sigma\|_F^4}\sum_{i_1,...,i_4 = k_1+1}^{k_2}\sqrt{\sum_{l_1,...l_4 = 1}^{p}\E[Z_{i_1,l_1}Z_{i_2,l_2}Z_{i_3,l_3}Z_{i_4,l_4}]^2}\\&\sum_{j_1,j_2,j_3,j_4 = 1}^{k_1}\sqrt{\sum_{l_1,...l_4 = 1}^{p}\E[Z_{j_1,l_1}Z_{j_2,l_2}Z_{j_3,l_3}Z_{j_4,l_4}]^2}\\
	\leq & \frac{Cb_n^2}{n^{4}\|\Sigma\|_F^4}(k_2-k_1)^2k_1^2\|\Sigma\|_F^4\leq  C\frac{b_n^2(k_2-k_1)^2}{n^2},
	\end{align*}
	by using Lemma \ref{lem:order} and Cauchy-Schwartz inequality.
	
	Similarly for $H_n^{(2)}$, 
	\begin{align*}
	\E[(H_n^{(2)})^4]
	=& \frac{64b_n^2(k_2-1)^4(n-1)^4}{n^{4}(n-k_2)^4k_2^4|\Sigma\|_F^4}\sum_{i_1,...,i_4 = k_1+1}^{k_2}\sum_{j_1,...,j_4 = k_2 + 1}^{n}\\
	&\sum_{l_1,..,l_4 = 1}^{p}\E[Z_{i_1,l_1}Z_{i_2,l_2}Z_{i_3,l_3}Z_{i_4,l_4}]\E[Z_{j_1,l_1}Z_{j_2,l_2}Z_{j_3,l_3}Z_{j_4,l_4}]\\
	\leq & \frac{Cb_n^2}{n^4\|\Sigma\|_F^4}\sum_{i_1,...,i_4 = k_1+1}^{k_2}\sqrt{\sum_{l_1,...l_4 = 1}^{p}\E[Z_{i_1,l_1}Z_{i_2,l_2}Z_{i_3,l_3}Z_{i_4,l_4}]^2}\\&\sum_{j_1,j_2,j_3,j_4 = k_2+1}^{n}\sqrt{\sum_{l_1,...l_4 = 1}^{p}\E[Z_{j_1,l_1}Z_{j_2,l_2}Z_{j_3,l_3}Z_{j_4,l_4}]^2}\\
	\leq & \frac{Cb_n^2}{n^{4}\|\Sigma\|_F^4}(k_2-k_1)^2(n-k_2)^2\|\Sigma\|_F^4 \leq C\frac{b_n^2(k_2-k_1)^2}{n^2}.
	\end{align*}
	By the same argument we have 
	\begin{align*}
	\E[(H_n^{(6)})^4]
	=&\frac{64(k_2-k_1)^4(n-k_2-k_1)^4(n-1)^4b_n^2}{n^{4}k_1^4k_2^4(n-k_1)^4(n-k_2)^4\|\Sigma\|_F^4}\sum_{i_1,...,i_4 = 1}^{k_1}\sum_{j_1,...j_4 = k_2+1}^{n}\sum_{l_1,..,l_4 = 1}^{p}\\
	&\qquad\E[Z_{i_1,l_1}Z_{i_2,l_2}Z_{i_3,l_3}Z_{i_4,l_4}]\E[Z_{j_1,l_1}Z_{j_2,l_2}Z_{j_3,l_3}Z_{j_4,l_4}]\\
	\leq&\frac{Cb_n^2}{n^{12}\|\Sigma\|_F^4}(k_2-k_1)^4k_1^2(n-k_2)^2\|\Sigma\|_F^4\\
	\leq & \frac{Cb_n^2}{n^{8}}(k_2-k_1)^4 \leq \frac{Cb_n^2}{n^2}(k_2 - k_1)^2.
	\end{align*}
	
	For $H_n^{(3)},H_n^{(4)}$ and $H_n^{(5)}$, we observe that for any $a,b \in [0,1]$, 
	\begin{align*}
	&\E\left[\left(\sum_{i = \floor{na}+1}^{\floor{nb}-1}\sum_{j = \floor{na}+1}^{i}Z_{i+1}^TZ_j\right)^4\right]  \\
	=& 16\sum_{i_1,...i_4 = \floor{na}+1}^{\floor{nb}-1}\sum_{j_1= \floor{na}+1}^{i_1}\sum_{j_2= \floor{na}+1}^{i_2}\sum_{j_3= \floor{na}+1}^{i_3}\sum_{j_4= \floor{na}+1}^{i_4}\sum_{l_1,...l_4 = 1}^{p}\\
	&\E[Z_{i_1+1,l_1}Z_{i_2+1,l_2}Z_{i_3+1,l_3}Z_{i_4+1,l_4}Z_{j_1,l_1}Z_{j_2,l_2}Z_{j_3,l_3}Z_{j_4,l_4}].
	\end{align*}
	First, there are at most $C(\floor{nb} - \floor{na})^4$ terms, for which the corresponding summand is nonzero, since we need at most four distinct values for the indices $i_1,...,i_4,j_1,...j_4$ to make it nonzero. Second, based on Lemma \ref{lem:Xi1Xi4} we know
	$$\left|\sum_{l_1,...l_4 = 1}^{p}\E[Z_{i_1+1,l_1}Z_{i_2+1,l_2}Z_{i_3+1,l_3}Z_{i_4+1,l_4}Z_{j_1,l_1}Z_{j_2,l_2}Z_{j_3,l_3}Z_{j_4,l_4}]\right| \leq C\|\Sigma\|_F^4.$$
	Hence
	$$\E\left[\left(\sum_{i = \floor{na}+1}^{\floor{nb}-1}\sum_{j = \floor{na}+1}^{i}Z_{i+1}^TZ_j\right)^4\right] \leq C(\floor{nb} - \floor{na})^4\|\Sigma\|_F^4.$$
	Applying the above results, we can analyze $H_n^{(3)}$,$H_n^{(4)}$ and $H_n^{(5)}$ as follows,
	\begin{align*}
	\E[(H_n^{(3)})^4]
	\leq& \frac{Cb_n^2(n-k_1-k_2)^4(n-1)^4}{n^{4}k_1^4(n-k_2)^4\|\Sigma\|_F^4}(k_2-k_1)^4\|\Sigma\|_F^4\\
	\leq&\frac{Cb_n^2}{n^4}(k_2-k_1)^4 \leq \frac{Cb_n^2}{n^2}(k_2-k_1)^2.
	\end{align*}
	Similarly for $H_n^{(4)}$,
	\begin{align*}
\E[(H_n^{(4)})^4]
	\leq& \frac{Cb_n^2(n-1)^4(k_2-k_1)^4}{n^{4}k_2^4k_1^4\|\Sigma\|_F^4}k_2^4\|\Sigma\|_F^4\\
	\leq&\frac{Cb_n^2}{n^4}(k_2-k_1)^4 \leq \frac{Cb_n^2}{n^2}(k_2-k_1)^2.
	\end{align*}
	Finally for $H_n^{(5)}$,
	\begin{align*}
	\E[(H_n^{(5)})^4]
	\leq& \frac{Cb_n^2(n-1)^4(k_2-k_1)^4}{n^{4}(n-k_2)^4(n-k_1)^4\|\Sigma\|_F^4}(n-k_1)^4\|\Sigma\|_F^4\\
	\leq&\frac{Cb_n^2}{n^4}(k_2-k_1)^4 \leq \frac{Cb_n^2}{n^2}(k_2-k_1)^2.
	\end{align*}
	Combining all these results, the proof is complete.	
\end{proof}

\begin{proof}[Proof of Proposition \ref{lem:S1}]
    Following the definition in the proof of Theorem \ref{thm:rate}, for $k \in \Omega_n^{2}(M)$, $n/\sqrt{a_n}\leq k_0 - k \leq k_0/2$. Let $b_n = \sqrt{a_n}$ and $\gamma = (k_0 - k)b_n/n$, then

\begin{align*}
&P\left(\sup_{k \in \Omega_n^{(2)}(M)}\frac{1}{(k_0-k)\|\Sigma\|_F}\left|\left(\sum_{i = k+1}^{k_0}Z_i^T\sum_{j = 1}^{k}Z_j\right)\right| > \eta \sqrt{a_n}\right)\\
=&P\left(\sup_{\gamma \in [1,b_n\tau_0/2]}\frac{b_n}{n\gamma\|\Sigma\|_F}\left|\left(\sum_{i = k_0 - n\gamma/b_n+1}^{k_0}Z_i^T\sum_{j = 1}^{k_0 - n\gamma/b_n}Z_j\right)\right| > \eta b_n\right)\\
\leq&P\left(\sup_{\gamma \in [1,M]}\frac{\sqrt{b_n}}{n\|\Sigma\|_F}\left|\left(\sum_{i = k_0 - n\gamma/b_n+1}^{k_0}Z_i^T\sum_{j = 1}^{k_0 - n\gamma/b_n}Z_j\right)\right| > \eta \sqrt{b_n}\right) \\
+& P\left(\sup_{\gamma \in [M,b_n\tau_0/2]}\frac{\sqrt{b_n}}{nM\|\Sigma\|_F}\left|\left(\sum_{i = k_0 - n\gamma/b_n+1}^{k_0}Z_i^T\sum_{j = 1}^{k_0 - n\gamma/b_n}Z_j\right)\right| > \eta \sqrt{b_n}\right)\\
\leq&P\left(\sup_{\gamma \in [-M,M]}\frac{\sqrt{b_n}}{n\|\Sigma\|_F}\left|\left(\sum_{i = k_0 - n\gamma/b_n+1}^{k_0}Z_i^T\sum_{j = 1}^{k_0 - n\gamma/b_n}Z_j\right)\right| > \eta \sqrt{b_n}\right)\\
+&P\left(\sup_{\gamma \in [M,b_n\tau_0/2]}\frac{1}{nM\|\Sigma\|_F}\left|\left(\sum_{i = k_0 - n\gamma/b_n+1}^{k_0}Z_i^T\sum_{j = 1}^{k_0 - n\gamma/b_n}Z_j\right)\right| > \eta\right)\\
\leq&O(1/\sqrt{b_n}) + C/\eta M,
\end{align*}	
for some constant $C$, due to Theorem \ref{thm:weak} and its proof, and the fact that $b_n = \sqrt{a_n} = o(\sqrt{n})$ and 
$$\sup_{\gamma \in [-M,M]}\frac{\sqrt{b_n}}{n\|\Sigma\|_F}\left|\left(\sum_{i = k_0 - n\gamma/b_n+1}^{k_0}Z_i^T\sum_{j = 1}^{k_0 - n\gamma/b_n}Z_j\right)\right| = O_p(1
).$$

Finally for $k \in \Omega_n^{(3)}(M)$, notice that 
\begin{align*}
&\left|\left(\sum_{i = k+1}^{k_0}Z_i^T\sum_{j = 1}^{k}Z_j\right)\right|\\
=&\left|\left(\sum_{i = k+1}^{k_0}Z_i^T\sum_{j = 1,i\neq j}^{n}Z_j\right) - \left(\sum_{i = k+1}^{k_0}Z_i^T\sum_{j = k_0+1}^{n}Z_j\right) - 2\left(\sum_{i = k+1}^{k_0}Z_i^T\sum_{j = i+1}^{k_0}Z_j\right)\right|\\
\leq&\left|\left(\sum_{i = k+1}^{k_0}Z_i^T\sum_{j = 1,i\neq j}^{n}Z_j\right)\right| + \left|\left(\sum_{i = k+1}^{k_0}Z_i^T\sum_{j = k_0+1}^{n}Z_j\right)\right|+\left| 2\left(\sum_{i = k+1}^{k_0}Z_i^T\sum_{j = i+1}^{k_0}Z_j\right)\right|.
\end{align*}

We have shown that, by Lemma \ref{lem:rateterms}\ref{lem:HRcross}, 
$$P\left(\sup_{k \in \Omega_n^{(3)}(M)}\frac{1}{(k_0 - k)\|\Sigma\|_F}\left|\left(\sum_{i = k+1}^{k_0}Z_i^T\sum_{j = k_0+1}^{n}Z_j\right)\right| > \eta\sqrt{a_n}\right) \leq C/M,$$
and
$$P\left(\sup_{k \in \Omega_n^{(3)}(M)}\frac{1}{(k_0 - k)\|\Sigma\|_F}\left| 2\left(\sum_{i = k+1}^{k_0}Z_i^T\sum_{j = i+1}^{k_0}Z_j\right)\right| > \eta\sqrt{a_n}\right) \leq C\log(n)/a_n.$$

For the first term, it is easy to see that $\left\{\left|\left(\sum_{i = k+1}^{k_0}Z_i^T\sum_{j = 1,i\neq j}^{n}Z_j\right)\right|\right\}_{k \in \Omega_n^{(3)}(M)}$ has exactly the same joint distribution as
$\left\{\left|\left(\sum_{i = n/\sqrt{a_n} - (k_0-k)+1}^{n/\sqrt{a_n}}Z_i^T\sum_{j = 1,i\neq j}^{n}Z_j\right)\right|\right\}_{k \in \Omega_n^{(3)}(M)}$ by shifting the indices, and 
\begin{align*}
&P\left(\sup_{nM/a_n < k_0 - k < n/\sqrt{a_n}}\frac{1}{(k_0-k)\|\Sigma\|_F}\left|\left(\sum_{i = k+1}^{k_0}Z_i^T\sum_{j = 1,i\neq j}^{n}Z_j\right)\right| >\eta\sqrt{a_n} \right)\\
=&P\left(\sup_{nM/a_n < l < n/\sqrt{a_n}}\frac{1}{l\|\Sigma\|_F}\left|\left(\sum_{i = n/\sqrt{a_n} - l+1}^{n/\sqrt{a_n}}Z_i^T\sum_{j = 1,i\neq j}^{n}Z_j\right)\right| >\eta\sqrt{a_n} \right).
\end{align*}

According to the same decomposition, 
\begin{align*}
&\left|\left(\sum_{i = n/\sqrt{a_n} - l+1}^{n/\sqrt{a_n}}Z_i^T\sum_{j = 1,i\neq j}^{n}Z_j\right)\right|\\
\leq & \left|\left(\sum_{i = n/\sqrt{a_n} - l+1}^{n/\sqrt{a_n}}Z_i^T\sum_{j = i+1}^{n}Z_j\right)\right| +\left|\left(\sum_{i = n/\sqrt{a_n} - l+1}^{n/\sqrt{a_n}}Z_i^T\sum_{j = 1}^{i-1}Z_j\right)\right| \\
\leq & \left|\left(\sum_{i = n/\sqrt{a_n} - l+1}^{n/\sqrt{a_n}}Z_i^T\sum_{j = i+1}^{n}Z_j\right)\right| +\left|\left(\sum_{i = 1}^{n/\sqrt{a_n}}Z_i^T\sum_{j = 1}^{i-1}Z_j\right)\right|+\left|\left(\sum_{i = 1}^{n/\sqrt{a_n} - l}Z_i^T\sum_{j = 1}^{i-1}Z_j\right)\right|\\
\leq &|U^{(1)}_l| + |U^{(2)}| + |U^{(3)}_{n/\sqrt{a_n} - l}|.
\end{align*}

It is easy to show that $U^{(1)}_l$ is a martingale sequence adaptive to the filtration $\cF^{(1)}_l = \sigma(Z_{n/\sqrt{a_n} - l + 1}, ..., Z_n)$. And  $U^{(3)}_{l'}$ is also a martingale sequence adaptive to $\cF^{(3)}_{l'} = \sigma(Z_1,...,Z_{l'})$. 

Hence, by Lemma \ref{HR1},
\begin{align*}
&P\left(\sup_{nM/a_n < l < n/\sqrt{a_n}} \frac{1}{l\|\Sigma\|_F}\left|U_l^{(1)}\right| > \eta\sqrt{a_n}\right) \leq \sum_{l = nM/a_n}^{n/\sqrt{a_n}}\frac{n}{\eta^2a_nl^2} + \frac{n}{\eta^2a_n}\frac{a_n^2}{n^2M^2} \leq C/M.
\end{align*}
For $U^{(2)}$,
\begin{align*}
P\left(\sup_{nM/a_n < l < n/\sqrt{a_n}} \frac{1}{l\|\Sigma\|_F}\left|U^{(2)}\right| > \eta\sqrt{a_n}\right) &= P\left( \frac{a_n}{nM\|\Sigma\|_F}\left|U^{(2)}\right| > \eta\sqrt{a_n}\right) \\
&\lesssim \frac{a_n^2n^2/a_n}{a_n\eta^2n^2M^2} \leq C/M^2
\end{align*}

And for $U^{(3)}_{n/\sqrt{a_n} - l}$,
\begin{align*}
&P\left(\sup_{nM/a_n < l < n/\sqrt{a_n}} \frac{1}{l\|\Sigma\|_F}\left|U^{(3)}_{n/\sqrt{a_n}-l}\right| > \eta\sqrt{a_n}\right)\\
=&P\left(\sup_{nM/a_n < l < n/\sqrt{a_n}} \frac{1}{l\|\Sigma\|_F}\sup_{nM/a_n < l < n/\sqrt{a_n}}\left|U^{(3)}_{n/\sqrt{a_n}-l}\right| > \eta\sqrt{a_n}\right)\\
=&P\left( \frac{a_n}{nM\|\Sigma\|_F}\sup_{1 \leq  l' < n/\sqrt{a_n} - nM/a_n}\left|U^{(3)}_{l'}\right| > \eta\sqrt{a_n}\right) \leq \frac{a_n^2}{n^2M^2\eta^2a_n}\sum_{l = 1}^{n/\sqrt{a_n} - nM/a_n}l \leq C/M^2.
\end{align*}

Thus by combining all the results above, we have $$P\left(\sup_{k \in \Omega_n^{(2)}(M)\bigcup\Omega_n^{(3)}(M)}\frac{1}{(k_0-k)\|\Sigma\|_F}\left|\left(\sum_{i = k+1}^{k_0}Z_i^T\sum_{j = 1}^{k}Z_j\right)\right| > \eta \sqrt{a_n}\right) < \epsilon,$$
by selecting $M = C/\epsilon$ for some constant $C$. 
\end{proof}

\begin{proof}[Proof of Lemma \ref{lem:rateterms}]
\ref{lem:sumdelta} is a direct consequence of Kolmogorov's inequality. To see this, note that
$$\max_{1 \leq k_1 < k_2 \leq n}\left|\sum_{i = k_1 + 1}^{k_2}\delta^TZ_i\right| = \max_{1 \leq k_1 < k_2 \leq n}\left|\sum_{i = 1}^{k_2}\delta^TZ_i - \sum_{i = 1}^{k_1}\delta^TZ_i\right| \leq 2\max_{1 \leq k_1 \leq n}\left|\sum_{i = 1}^{k_1}\delta^TZ_i\right|.$$
By Kolmogorov's inequaility, since $\delta^TZ_i$ are i.i.d. with mean zero and finite variance, for any positive $\lambda$
$$P\left(\max_{1 \leq k_1 \leq n}\left|\sum_{i = 1}^{k_1}\delta^TZ_i\right| > \lambda\right) \leq \frac{n\delta^T\Sigma\delta}{\lambda^2},$$
which implies \ref{lem:sumdelta} since $\delta^T\Sigma\delta = o(n\|\delta\|^4/a_n)$ according to Assumption \ref{ass}\ref{ass:delta}.

To prove \ref{lem:sum}, by applying the continuous mapping theorem together with Proposition \ref{prop:Q}, we have 
$$\frac{1}{n\|\Sigma\|_F}\max_{1 \leq k_1 < k_2 \leq n}\left|\sum_{i = k_1}^{k_2}\sum_{j = k_1}^iZ_{i+1}^TZ_j\right| \overset{\mathcal{D}}{\rightarrow} \sup_{0 \leq a < b \leq 1}|Q(a,b)|,$$
which is bounded in probability. 

For \ref{lem:sumcross}, notice that
$$\sum_{i = k_1}^{k_2}\sum_{j = k_2+1}^{k_3}Z_{i}^TZ_j = \sum_{i = k_1}^{k_3}\sum_{j = k_1}^iZ_{i+1}^TZ_j - \sum_{i = k_1}^{k_2}\sum_{j = k_1}^iZ_{i+1}^TZ_j - \sum_{i = k_2}^{k_3}\sum_{j = k_2}^iZ_{i+1}^TZ_j.$$
So
$$\max_{1 \leq k_1 < k_2 < k_3 \leq n}\left|\sum_{i = k_1}^{k_2}\sum_{j = k_2+1}^{k_3}Z_{i}^TZ_j\right| \leq 3\max_{1 \leq k_1 < k_2 \leq n}\left|\sum_{i = k_1}^{k_2}\sum_{j = k_1}^iZ_{i+1}^TZ_j\right| = O_p(n\|\Sigma\|_F),$$
according to \ref{lem:sum}.

\ref{lem:avedelta1} and \ref{lem:avedelta2} can be proved by essentially the same arguments so we only prove \ref{lem:avedelta1} here. By Lemma \ref{HR1}, 
$$P\left(\max_{k_1\leq k \leq k_2}\frac{1}{k}\left|\sum_{i = 1}^k \delta^TZ_i\right| > \lambda \right) \leq \frac{1}{\lambda^2}\sum_{i = k_1+1}^{k_2}\frac{1}{i^2}Var(\delta^TZ_i) + \frac{1}{\lambda^2k_1^2}\sum_{i = 1}^{k_1}Var(\delta^TZ_i) \leq \frac{C\delta^T\Sigma\delta}{\lambda^2k_1},$$
for any $\lambda > 0$. Thus $\max_{k_1\leq k \leq k_2}\frac{1}{k}\left|\sum_{i = 1}^k \delta^TZ_i\right|$ is $o_p(\sqrt{n}\|\delta\|^2/\sqrt{a_nk_1})$, under Assumption \ref{ass}\ref{ass:delta}.

	To show \ref{lem:HRcross}, note that $$\frac{1}{k_0-k}\sum_{i = k+1}^{k_0}\sum_{k_0+1}^{n}Z_i^TZ_j \overset{\mathcal{D}}{=} \frac{1}{k_0-k}\sum_{i = 1}^{k_0-k}\sum_{j = -n+k_0+1}^{0}Z_i'^TZ_j' = \frac{1}{k_0-k}\sum_{i = 1}^{k_0-k}s^{(1)}_i,$$
	where $Z_i'$ are i.i.d. coupled copy of $Z_i$ and $s_i^{(1)} =Z_i'^T(\sum_{j = -n+k_0+1}^{0}Z_j') $. By defining $\mathcal{F}^{(1)}_t$ as the natural filtration of $Z_i'$, $s_i^{(1)}$ is a martingale difference sequence with variance $Var(s_i^{(1)}) = (n-k_0)\|\Sigma\|_F^2$. Directly applying  H\'ajek-R\'enyi's inequality, i.e. Lemma \ref{HR1}, we have
	
	\begin{align*}
	P\left(\max_{nM/a_n \leq l \leq k_0-1}\frac{1}{l}\left|\sum_{i = 1}^{l}s_i^{(1)}\right| > \lambda\right) &\leq \frac{(n-k_0)\|\Sigma\|_F^2}{\lambda^2}\left(\frac{a_n}{nM} + \sum_{i = nM/a_n+1}^{k_0-1}i^{-2}\right) \\
	&\leq \frac{C_1(n-k_0)\|\Sigma\|_F^2}{\lambda^2}\left(\frac{a_n}{nM}\right).
	\end{align*}
	
	By similar arguments, we can prove \ref{lem:HRcrosssum}, by choosing $c_k = 1$. For \ref{lem:HRmid}, observe that 
	$$\sum_{i,j = k+1, i\neq j}^{k_0}Z_i^TZ_j = 2\sum_{l = 1}^{k_0-k-1}\sum_{j = 1}^{l}Z_{k_0-l}^TZ_{k_0-l+j} \overset{\mathcal{D}}{=} 2\sum_{l = 1}^{k_0-k-1}\sum_{j = 1}^{l}Z_{-l}^TZ_{j-l} = 2\sum_{l = 1}^{k_0-k-1}s_l^{(2)},$$ where $s_{l}^{(2)} = Z_{-l}^T\sum_{j = 1}^{l}Z_{j-l}$, and $s_l^{(2)}$ is a martingale difference sequence adapted to the filtration $\cF^{(2)}_l$ defined as $\cF^{(2)}_l = \sigma(Z_{-l},Z_{-l+1},Z_{-l+2},\cdots)$. It is very easy to see that $Var(s_l^{(2)}) = l\|\Sigma\|_F^2$. Then by Lemma \ref{HR1} we have
	\begin{align*}
	P\left(\max_{nM/a_n\leq h \leq k_0}\left|\frac{1}{h}\sum_{l = 1}^{h}s_l^{(2)}\right|\geq \lambda\right) &\leq \frac{\|\Sigma\|_F^2}{\lambda^2}\left(\left(\frac{nM}{a_n}\right)^{-2}\sum_{i = 1}^{nM/a_n}i + \sum_{i = nM/a_n}^{k_0}i^{-1}\right)\\
	&\leq \frac{C_3\|\Sigma\|_F^2}{\lambda^2}\log(k_0).
	\end{align*}
\end{proof}

\end{document}